\def\dOi{10(4:20)2014}
\subjclass{D.3.1: Programming Languages (Formal Definitions and Theory);
F.3.2: Semantics of Programming Languages (Process models)}
\newif\iflong\longfalse
\newif\ifsynch\synchfalse
\newif\ifcamera\camerafalse
\newcommand{\semicolon}{:}
\newcommand{\lrangle}[1]{\langle #1 \rangle}
\newcommand{\parenthtext}[1]{(\textrm{\small #1})}
\newcommand{\brtext}[1]{[\textrm{\small #1}]}
\newcommand{\textinmath}[1]{\textrm{#1}}
\newcommand{\strule}[1]{\textrm{#1}}
\newcommand{\ltsrule}[1]{{\footnotesize \lrangle{\textrm{#1}}}}
\newcommand{\eltsrule}[1]{{\footnotesize \brtext{#1}}}
\newcommand{\erule}[1]{{\footnotesize \{\textrm{#1}\}}}
\newcommand{\trule}[1]{{\footnotesize\brtext{#1}}}
\newcommand{\orule}[1]{{\footnotesize{\brtext{#1}}}}
\newcommand{\mrule}[1]{{\footnotesize{\parenthtext{#1}}}}
\newcommand{\noi}{\noindent}
\newcommand{\Hline}{\rule{\linewidth}{.5pt}}
\newcommand{\bnfis}{\;\;::=\;\;}
\newcommand{\bnfbar}{\;\;\;|\;\;\;}
\newcommand{\Case}[1]{\noi {\bf Case: }#1\\}
\newcommand{\LogAnd}{\texttt{ and }}
\newcommand{\tree}[2]{
\ensuremath{\displaystyle
		\frac
		{
			#1
		}{
			\raisebox{-0.4mm}{$\displaystyle{#2}$}
		}
	}
}
\newcommand{\vect}[1]{\tilde{#1}}
\newcommand{\set}[1]{\{#1\}}
\newcommand{\es}{\emptyset}
\newcommand{\maxset}[1]{\max(#1)}
\newcommand{\setbar}{\ \ |\ \ }
\newcommand{\eval}{\downarrow}
\newcommand{\dom}[1]{\mathtt{dom}(#1)}
\newcommand{\funcbr}[2]{#1\lrangle{#2}}
\newcommand{\freev}[1]{\lrangle{#1}}
\newcommand{\boundv}[1]{(#1)}
\newcommand{\send}[1]{\overline{#1}}
\newcommand{\inact}{\mathbf{0}}
\newcommand{\If}{\sessionfont{if}\ }
\newcommand{\Then}{\sessionfont{then}\ }
\newcommand{\Else}{\sessionfont{else}\ }
\newcommand{\ifthen}[2]{\If #1\ \Then #2\ }
\newcommand{\IfThenElse}[3]{\ifthen{#1}{#2} \Else #3}
\newcommand{\Par}{\;|\;}
\newcommand{\news}[1]{(\nu\ #1)}
\newcommand{\newsp}[2]{(\nu\ #1)(#2)}
\newcommand{\varp}[1]{#1}
\newcommand{\rec}[2]{\mu \varp{#1}. #2}
\newcommand{\bn}[1]{\mathtt{bn}(#1)}
\newcommand{\fn}[1]{\mathtt{fn}(#1)}
\newcommand{\fv}[1]{\mathtt{fv}(#1)}
\newcommand{\bv}[1]{\mathtt{bv}(#1)}
\newcommand{\subj}[1]{\mathtt{subj}(#1)}
\newcommand{\obj}[1]{\mathtt{obj}(#1)}
\newcommand{\relfont}[1]{\mathcal{#1}}
\newcommand{\rel}[3]{#1\ \relfont{#2}\ #3}
\newcommand{\scong}{\equiv}
\newcommand{\acong}{\scong_{\alpha}}
\newcommand{\wb}{\approx}
\newcommand{\swb}{\approx^{s}}
\newcommand{\red}{\longrightarrow}
\newcommand{\Red}{\rightarrow\!\!\!\!\!\rightarrow}
\newcommand{\subst}[2]{\set{#1/#2 }}
\newcommand{\hole}{-}
\newcommand{\context}[2]{#1[#2]}
\newcommand{\Ccontext}[1]{\C[#1]}
\newcommand{\barb}[1]{\downarrow_{#1}}
\newcommand{\Barb}[1]{\Downarrow_{#1}}
\newcommand{\sessionfont}[1]{\mathtt{#1}}
\newcommand{\vart}[1]{\mathsf{#1}}
\newcommand{\ssep}{;}
\newcommand{\shsep}{.}
\newcommand{\outses}{!}
\newcommand{\inpses}{?}
\newcommand{\selses}{\oplus}
\newcommand{\brases}{\&}
\newcommand{\dual}[1]{\overline{#1}}
\newcommand{\cat}{\cdot}
\newcommand{\bacc}[2]{#1 \boundv{#2} \shsep}
\newcommand{\breq}[2]{\send{#1} \boundv{#2} \shsep}
\newcommand{\bout}[2]{#1 \outses \freev{#2} \ssep}
\newcommand{\binp}[2]{#1 \inpses \boundv{#2} \ssep}
\newcommand{\bsel}[2]{#1 \selses #2 \ssep}
\newcommand{\bbras}[2]{#1 \brases #2}
\newcommand{\role}[1]{[#1]}
\newcommand{\srole}[2]{#1\role{#2}}
\newcommand{\fromto}[2]{\role{#1} \role{#2}}
\newcommand{\sfromto}[3]{#1\fromto{#2}{#3}}
\newcommand{\sout}[3]{\srole{#1}{#2} \outses \freev{#3} \ssep}
\newcommand{\sinp}[3]{\srole{#1}{#2} \inpses \boundv{#3} \ssep}
\newcommand{\ssel}[3]{\srole{#1}{#2} \selses #3 \ssep}
\newcommand{\sbraP}[2]{\srole{#1}{#2} \brases \lPi}
\newcommand{\acc}[3]{#1 \role{#2} \boundv{#3} \shsep}
\newcommand{\req}[3]{\send{#1} \role{#2} \boundv{#3} \shsep}
\newcommand{\out}[4]{\sfromto{#1}{#2}{#3} \outses \freev{#4} \ssep}
\newcommand{\inp}[4]{\sfromto{#1}{#2}{#3} \inpses \boundv{#4} \ssep}
\newcommand{\sel}[4]{\sfromto{#1}{#2}{#3} \selses #4 \ssep}
\newcommand{\bra}[4]{\sfromto{#1}{#2}{#3} \brases \set{#4}}
\newcommand{\braP}[3]{\sfromto{#1}{#2}{#3} \brases \lPi}
\newcommand{\typingred}{\red}
\newcommand{\cohses}[2]{\mathtt{co}(#1(#2))}
\newcommand{\coherent}[1]{\mathtt{co}(#1)}
\newcommand{\fcoherent}[1]{\mathtt{fco}(#1)}
\newcommand{\emp}{\epsilon}
\newcommand{\gtfont}[1]{\mathtt{#1}}
\newcommand{\gsep}{.}
\newcommand{\globaltype}[1]{\lrangle{#1}}
\newcommand{\roles}[1]{\mathtt{roles}(#1)}
\newcommand{\fromtogt}[2]{#1 \rightarrow #2 \semicolon}
\newcommand{\valuegt}[3]{\fromtogt{#1}{#2} \lrangle{#3} \gsep}
\newcommand{\selgt}[3]{\fromtogt{#1}{#2} \set{#3}}
\newcommand{\selgtG}[2]{\fromtogt{#1}{#2} \lGi}
\newcommand{\recgt}[2]{\mu \vart{#1}. #2}
\newcommand{\vargt}[1]{\vart{#1}}
\newcommand{\inactgt}{\gtfont{end}}
\newcommand{\projset}[1]{\mathtt{proj}(#1)}
\newcommand{\gcong}{\equiv}
\newcommand{\govcong}{\cong_g^s}
\newcommand{\govcongs}{\cong_g^s}
\newcommand{\gperm}{\simeq}
\newcommand{\projsymb}{\lceil}
\newcommand{\proj}[2]{#1 \projsymb #2}
\newcommand{\tfont}[1]{\mathtt{#1}}
\newcommand{\tsep}{;}
\newcommand{\chtype}[1]{\lrangle{#1}}
\newcommand{\outtype}{\outses}
\newcommand{\inptype}{\inpses}
\newcommand{\seltype}{\selses}
\newcommand{\bratype}{\brases}
\newcommand{\trec}[2]{\mu\vart{#1}.#2}
\newcommand{\tvar}[1]{\vart{#1}}
\newcommand{\tinact}{\tfont{end}}
\newcommand{\tout}[2]{\role{#1} \outtype \lrangle{#2} \tsep}
\newcommand{\tinp}[2]{\role{#1} \inptype (#2) \tsep}
\newcommand{\tsel}[2]{\role{#1} \seltype \set{#2}}
\newcommand{\tsels}[2]{\role{#1} \seltype #2}
\newcommand{\tselT}[1]{\role{#1} \seltype \lTi}
\newcommand{\tbra}[2]{\role{#1} \bratype \set{#2}}
\newcommand{\tbras}[2]{\role{#1} \bratype #2}
\newcommand{\tbraT}[1]{\role{#1} \bratype \lTi}
\newcommand{\btout}[1]{\outtype \lrangle{#1} \tsep}
\newcommand{\btinp}[1]{\inptype (#1) \tsep}
\newcommand{\btsel}[1]{\seltype \set{#1}}
\newcommand{\btbra}[1]{\bratype \set{#1}}
\newcommand{\mtout}[2]{\role{#1} \outtype \lrangle{#2}}
\newcommand{\Ga}{\Gamma}
\newcommand{\De}{\Delta}
\newcommand{\proves}{\vdash}
\newcommand{\hastype}{\triangleright}
\newcommand{\Decat}[1]{\De \cat #1}
\newcommand{\Gacat}[1]{\Ga \cat #1}
\newcommand{\typed}[1]{#1:}
\newcommand{\typedrole}[2]{\typed{\srole{#1}{#2}}}
\newcommand{\typedprocess}[3]{#1 \proves #2 \hastype #3}
\newcommand{\Eproves}[3]{#1 \proves \typed{#2} #3}
\newcommand{\Gproves}[2]{\Eproves{\Ga}{#1}{#2}}
\newcommand{\tprocess}[3]{#1 \proves #2 \hastype #3}
\newcommand{\Gtprocess}[2]{\tprocess{\Ga}{#1}{#2}}
\newcommand{\noGtprocess}[2]{#1 \hastype #2}
\newcommand{\globalenvI}{E}
\newcommand{\SGtprocess}[2]{\Gtprocess{#1}{#2, \globalenvI}}
\newcommand{\fromtolts}[2]{\fromto{#1}{#2}}
\newcommand{\outlts}{\outses}
\newcommand{\inplts}{\inpses}
\newcommand{\sellts}{\selses}
\newcommand{\bralts}{\brases}
\newcommand{\actreq}[3]{\send{#1} \role{#2} \boundv{#3}}
\newcommand{\actreqs}[3]{\send{#1} \role{\set{#2}} \boundv{#3}}
\newcommand{\actacc}[3]{#1 \role{#2} \boundv{#3}}
\newcommand{\actaccs}[3]{#1 \role{\set{#2}} \boundv{#3}}
\newcommand{\actout}[4]{#1 \fromtolts{#2}{#3} \outlts \freev{#4}}
\newcommand{\actbout}[4]{#1 \fromtolts{#2}{#3} \outlts \boundv{#4}}
\newcommand{\actbdel}[5]{#1 \fromtolts{#2}{#3} \outlts \boundv{\srole{#4}{#5}}}
\newcommand{\actinp}[4]{#1 \fromtolts{#2}{#3} \inplts \freev{#4}}
\newcommand{\actsel}[4]{#1 \fromtolts{#2}{#3} \sellts #4}
\newcommand{\actbra}[4]{#1 \fromtolts{#2}{#3} \bralts #4}
\newcommand{\actval}[4]{#1: #2 \rightarrow #3:#4}
\newcommand{\actgsel}[4]{#1: #2 \rightarrow #3:#4}
\newcommand{\trans}[1]{\stackrel{#1}{\longrightarrow}}
\newcommand{\Trans}[1]{\stackrel{#1}{\Longrightarrow}}
\newcommand{\barbreq}[1]{\barb{#1}}
\newcommand{\barbout}[3]{\barb{\sfromto{#1}{#2}{#3}}}
\newcommand{\comp}{\asymp}
\newcommand{\coh}{\asymp}
\newcommand{\bistyp}{\rightleftharpoons}
\newcommand{\ordercup}{\bowtie}
\newcommand{\envtrans}[1]{\trans{#1}}
\newcommand{\govwb}{\approx_g^s}
\newcommand{\govwbs}{\approx_g^s}
\newcommand{\true}{\sessionfont{tt}}
\newcommand{\false}{\sessionfont{ff}}
\newcommand{\bool}{\sessionfont{bool}}
\newcommand{\PP}{\ensuremath{P}}
\newcommand{\Q}{\ensuremath{Q}}
\newcommand{\R}{\ensuremath{R}}
\newcommand{\s}{\ensuremath{s}}
\newcommand{\Ia}{\ensuremath{a}}
\newcommand{\Iu}{\ensuremath{u}}
\newcommand{\x}{\ensuremath{x}}
\newcommand{\y}{\ensuremath{y}}
\newcommand{\va}{\ensuremath{v}}
\newcommand{\e}{\ensuremath{e}}
\newcommand{\n}{\ensuremath{n}}
\newcommand{\p}{\ensuremath{\mathtt{p}}}
\newcommand{\q}{\ensuremath{\mathtt{q}}}
\newcommand{\A}{\ensuremath{A}}
\newcommand{\G}{\ensuremath{G}}
\newcommand{\gG}{\globaltype{\G}}
\newcommand{\U}{\ensuremath{U}}
\newcommand{\So}{\ensuremath{S}}
\newcommand{\T}{\ensuremath{T}}
\newcommand{\C}{\ensuremath{C}}
\newcommand{\E}{\ensuremath{E}}
\newcommand{\suc}{\textrm{succ}}
\newcommand{\lPi}{\set{l_i:\PP_i}_{i \in I}}
\newcommand{\lGi}{\set{l_i:\G_i}_{i \in I}}
\newcommand{\lTi}{\set{l_i:\T_i}_{i \in I}}
\newif\ifny\nyfalse
\newcommand{\NY}[1]
{\ifny{\color{purple}{#1}}\else{#1}\fi}
\newif\ifdk\dktrue
\newif\ifrhu\rhutrue
\newcommand{\ENCan}[1]{\langle #1 \rangle}
\newcommand{\syntaxvspace}{\\[1mm]}
\newcommand{\CODE}[1]{{\tt #1}}
\newcommand{\instrumentzero}{\mathtt{i0}}
\newcommand{\instrument}{\mathtt{i}}
\newcommand{\agentone}{\mathtt{a_1}}
\newcommand{\agenttwo}{\mathtt{a_2}}
\newcommand{\user}{\mathtt{u}}
\newcommand{\transs}[1]{\stackrel{#1}{\longrightarrow_{s}}}
\newcommand{\typedqrolei}[2]{#1^{i}[#2]:}
\newcommand{\typedqroleo}[2]{#1^{o}[#2]:}
\newcommand{\sqrolei}[2]{#1^{i}[#2]}
\newcommand{\actqinpi}[4]{#1^{i}[#2][#3]?\lrangle{#4}}
\newcommand{\mtcat}[1]{*(#1)}
\newcommand{\outp}[1]{\mathtt{out}(#1)}
\newcommand{\inpp}[1]{\mathtt{inp}(#1)}
\title[Globally Governed Session Semantics]{Globally Governed Session Semantics}
\author[D.~Kouzapas]{Dimitrios Kouzapas\rsuper a}
\address{{\lsuper{a,b}}Imperial College London}
\email{\{Dimitrios.Kouzapas,n.yoshida\}@imperial.ac.uk}
\thanks{{\lsuper a}University of Glasgow, Dimitrios.Kouzapas@glasgow.ac.uk (On leave).}	
\author[N. Yoshida]{Nobuko Yoshida\rsuper b}	
\address{\vspace{-18 pt}}	
\keywords{the $\pi$-calculus, session types, 
bisimulation, barbed reduction-based congruence, multiparty session types,  
governance, soundness and completeness}
\begin{document}

\begin{abstract}
	This paper proposes a bisimulation theory based on multiparty
	session types where a choreography specification governs the
	behaviour of session typed processes and their observer. The
	bisimulation is defined with the observer cooperating with the
	observed process in order to form complete global session scenarios
	and usable for proving correctness of optimisations for globally
	coordinating threads and processes. The induced bisimulation is
	strictly more fine-grained than the standard session bisimulation.
	The difference between the governed and standard bisimulations only
	appears when more than two interleaved multiparty sessions exist.
	This distinct feature enables to reason real 
	scenarios in the large-scale distributed system where multiple 
	choreographic sessions need to be interleaved. 
	The compositionality of the governed bisimilarity is proved 
	through the soundness and completeness with respect to the
	governed reduction-based congruence.
	Finally, its usage is
	demonstrated by a thread transformation governed under multiple
	sessions in a real usecase in the large-scale
	cyberinfrustracture.
\end{abstract}

\maketitle


\section{Introduction}\label{sec:intro}
\paragraph{\bf Backgrounds. \ } Modern society increasingly depends on distributed
software infrastructure such as the
backend of popular Web portals,
global E-science cyberinfrastructure,
e-healthcare and e-governments.
An application in such distributed environments
is typically organised into
many components that communicate
through message passing.
Thus an application is naturally designed as
a collection of interaction scenarios, or
{\em multiparty sessions}, each
following an interaction pattern, or {\em choreographic protocol}.
The theory for multiparty
session types \cite{HYC08} captures these two
natural abstraction units, representing the situation
where two or more multiparty sessions (choreographies) can interleave
for a single point application, with each message clearly
identifiable as belonging to a specific session.

This article introduces a behavioural theory which can reason about
distributed processes that are controlled globally
by multiple choreographic sessions.
Typed behavioural theory has been one of
the central topics of the study of the $\pi$-calculus 
throughout its history,
for example, reasoning about various encodings
into the typed $\pi$-calculi \cite{PiSa96b,Yoshida96}.
Our theory treats the mutual effects of multiple choreographic
sessions which are shared 
among distributed participants as their common
knowledge or agreements, 
reflecting the origin of choreographic frameworks \cite{CDL}. 
These features related to multiparty session type discipline 
make our theory distinct from any type-based
bisimulations in the literature and 
also applicable to a real choreographic usecase
from a large-scale distributed system.  
We say that our bisimulation is {\em globally governed},
since it uses global multiparty specifications to regulate the
conversational behaviour of distributed processes.

\paragraph{\bf Multiparty session types. \ }
To illustrate the idea for globally governed semantics, 
we first explain
the mechanisms of multiparty session types \cite{HYC08}.
Consider the simplest communication protocol where the participant
implementing 
$1$ sends a message of type $\bool$ to the participant
implementing $2$. 
A {\em global type} \cite{HYC08} is used to describe the
protocol as:
\[
	G_1  \ = \ \valuegt{1}{2}{\bool} \inactgt
\]
where $\rightarrow$ denotes
the flow of communication and $\inactgt$
denotes protocol termination.

Global type $G_1$ is used as an agreement for the
type-check specifications of both Server$_1$ and Server$_2$.  
We type-check implementations of both servers 
against the 
{\em projection} of $G_1$ into local session types.
The following type  
\[
	\tout{2}{\bool} \tinact
\]
is the local session type from the point of
view of participant $1$, that describes the 
output of a $\bool$-type value towards participant $2$,
while local type:
\[
	\tinp{1}{\bool} \tinact
\]
is the local session type from 
the point of view of $2$ that
describes the input of a $\bool$-type value from $1$.

\paragraph{\bf Resource management usecase. } We will use a simplified usecase, UC.R2.13 ``Acquire Data From Instrument'', c.f.~\S~\ref{sec:app:ooi} of \cite{ooi},
to explain the main intuition of globally governed semantics
and give insights on how our theory can reason about choreographic 
interactions.

\ifcamera
\noindent\Hline\\[2mm]
\else
\begin{figure}[t]
\centering 
\fi
	\begin{tikzpicture}[scale=1.1]

	\draw[rounded corners] (-5.8, 1.8) rectangle (-4.6, 1.2);
	\draw[rounded corners] (-4.4, 1.8) rectangle (-1.8, 1.2);
	\draw[rounded corners] (-1.2, 1.8) rectangle (0, 1.2);

	\draw[dash pattern= on 8pt off 4pt] (-5.2, 1.2) -- (-5.2, -1);
	\draw[dash pattern= on 8pt off 4pt] (-3.8, 1.2) -- (-3.8, -1);
	\draw[dash pattern= on 8pt off 4pt] (-2.4, 1.2) -- (-2.4, -1);
	\draw[dash pattern= on 8pt off 4pt] (-0.6, 1.2) -- (-0.6, -1);

	\node at (-5.2, 1.5) {\scriptsize $\textsf{Server}_1$};
	\node at (-3.1, 1.5) {\scriptsize $\textsf{Server}_2$};
	\node at (-0.6, 1.5) {\scriptsize $\textsf{Client}_3$};

	\draw[<-, very thick, blue] (-0.6, 0.7) -- (-5.2, 0.7);
	\draw[->, very thick, violet] (-5.2, 0.1) -- (-2.4, 0.1);
	\draw[<-, very thick, blue] (-0.6, -0.5) -- (-3.8, -0.5);

	\node at (-1.5, 0.9) {\scriptsize $\actout{s_1}{1}{3}{v}$};
	\node at (-3.1, 0.3) {\scriptsize $\actout{s_2}{1}{2}{w}$};
	\node at (-1.5, -0.3) {\scriptsize $\actout{s_1}{2}{3}{v}$};

	\draw[rounded corners] (1, 1.8) rectangle (2.2, 1.2);
	\draw[rounded corners] (2.4, 1.8) rectangle (3.6, 1.2);
	\draw[rounded corners] (4.2, 1.8) rectangle (5.4, 1.2);

	\draw[dash pattern= on 8pt off 4pt] (1.6, 1.2) -- (1.6, -1);
	\draw[dash pattern= on 8pt off 4pt] (3, 1.2) -- (3, -1);
	\draw[dash pattern= on 8pt off 4pt] (4.8, 1.2) -- (4.8, -1);

	\node at (1.6, 1.5) {\scriptsize $\textsf{Server}_1$};
	\node at (3, 1.5) {\scriptsize $\textsf{Server}_2$};
	\node at (4.8, 1.5) {\scriptsize $\textsf{Client}_3$};

	\draw[<-, very thick, blue] (4.8, 0.7) -- (1.6, 0.7);
	\draw[->, very thick, violet] (1.6, 0.1) -- (3, 0.1);
	\draw[<-, very thick, blue] (4.8, -0.5) -- (3, -0.5);

	\node at (3.9, 0.9) {\scriptsize $\actout{s_1}{1}{3}{v}$};
	\node at (2.3, 0.3) {\scriptsize $\actout{s_2}{1}{2}{w}$};
	\node at (3.9, -0.3) {\scriptsize $\actout{s_1}{2}{3}{v}$};

\end{tikzpicture}
\ifcamera
Resource Managment Example: (a) before optimisation;
          (b) after optimisation\\
\Hline
\else
	\caption{
Resource Managment Example: (a) before optimisation;
          (b) after optimisation \label{fig:clexample}}
\end{figure}
\fi

Consider the scenario in Figure~\ref{fig:clexample}(a) where
a single threaded Client$_3$ (participant $3$)
uses two services: from the single threaded Server$_1$ (participant $1$)
and from the dual threaded Server$_2$ (participant $2$). In 
Figure~\ref{fig:clexample}, the vertical lines represent the threads
in the participants. 
Additionally Server$_1$ sends an internal signal to Server$_2$.
The communication patterns are described with the use of
the following global protocols: 
\begin{eqnarray*}
	G_a &=& \valuegt{1}{3}{\mathit{ser}}\valuegt{2}{3}{\mathit{ser}} \inactgt \\
	G_b &=& \valuegt{1}{2}{\mathit{sig}} \inactgt
\end{eqnarray*}
with $G_a$ describing the communication between
the two servers and the Client$_3$ and $G_b$ describing
the internal communication between the two servers.
Participants $1, 2$ and $3$ are assigned to processes 
$P_1$, $P_2$ and $P_3$, respectively in order to implement 
a usecase as:
\begin{eqnarray*}
P_1 &=&	\acc{a}{1}{x} \acc{b}{1}{y} \sout{x}{3}{v} \sout{y}{2}{w} \inact \\
P_2 &=&	\acc{a}{2}{x} \req{b}{2}{y} (\sinp{y}{1}{z}\inact \Par \sout{x}{3}{v} \inact)\\
P_3 &=&	\req{a}{3}{x} \sinp{x}{1}{z} \sinp{x}{2}{y} \inact
\end{eqnarray*}
Shared name $a$ establishes the session
corresponding to $G_a$, where
Client$_3$ ($P_3$) uses prefix $\overline{a}[3](x)$
to initiate a session that involves 
three processes:
Server$_1$ ($P_1$) and Server$_2$ ($P_2$) participate to
the session with prefixes $a[1](x)$ and $a[2](x)$, respectively.
In a similar way the
session corresponding to $G_b$ is established between
Server$_1$ and Server$_2$.

The above scenario is subject to an optimisation due to
the fact that the internal signal between Server$_1$ and
Server$_2$ is invisible to clients because the communication link 
created after the session initiation is local. The optimisation
is illustrated in Figure~\ref{fig:clexample}(b),
where we require a single threaded service for Server$_2$
to avoid the overhead of an extra thread creation.
The new implementation of participant $2$ is:
\[
R_2  = \acc{a}{2}{x} \req{b}{2}{y} \sinp{y}{1}{z} \sout{x}{3}{v} \inact
\]
It is important to note that both $P_2$ and $R_2$
are typable under both $G_a$ and $G_b$.

The motivation of this work is set
when we compare the two server interfaces 
that are exposed towards Client$_3$, here implemented by process $P_3$.
The two different interfaces are given by
the two different implementations $P_1 \Par P_2$ and
$P_1 \Par R_2$.

\paragraph{\bf Untyped and linear bisimulations. \ }
It is obvious that in the untyped setting \cite{SangiorgiD:picatomp},
$P_1\Par P_2$ and $P_1\Par R_2$ are
{\em not} bisimilar 
since $P_2$ can exhibit the output action $\actout{s_a}{2}{3}{v}$
before the input action $\actinp{s_b}{2}{1}{w}$ (assuming that
variable $x$ is substituted
 with session $s_a$ and
variable $y$ is substituted with session $s_b$ after
the session initiation actions take place). More concretely 
we can analyse their transitions as follows where 
$\actacc{a}{\set{1,2}}{s_a}$ is the label to start the session with Client$_3$ while $\actout{s_a}{2}{3}{v}$ is an output of value $v$ from Server${}_2$
to Client$_3$: 
\begin{eqnarray*}
	P_1 \Par P_2 \trans{\actacc{a}{\set{1,2}}{s_a}} \trans{\tau} \trans{\actout{s_a}{2}{3}{v}} & & \\
	P_1 \Par R_2 \trans{\actacc{a}{\set{1,2}}{s_a}} \trans{\tau} \stackrel{\actout{s_a}{2}{3}{v}}{\not\longrightarrow}
\end{eqnarray*}
The same transitions are observed if we restrict the transition
semantics to respect the traditional linearity principle based on  
session local types \cite{DBLP:conf/forte/KouzapasYH11}.
The full definition of the multiparty session bisimulation 
which follows the usual linearity property is found in 
\S~\ref{sec:synchronous}. We also give the detailed 
interaction patterns in Example~\ref{ex:swb} to explain why both
untyped and linear bisimulations cannot equate $P_1 \Par P_2$ and 
$P_1 \Par R_2$. 

\paragraph{\bf Globally governed bisimulation. \ }
In the global setting, 
the behaviours of 
$P_1 \Par P_2$ and $P_1 \Par R_2$ are constrained
by the specification of the global protocols, $G_a$ and
$G_b$. The service provided by Server$_2$
is available to Client$_3$
only after Server$_1$ sends a signal to Server$_2$. 
Protocol $G_a$ dictates that action
$\actout{s_a}{2}{3}{v}$ can only happen after
action $\actinp{s_b}{2}{1}{w}$ in $P_2$.
This is because Client$_3$, which uses the
service interface as a global observer, is
also typed by the global protocol $G_a$ and
can only interact with action $\actout{s_a}{2}{3}{v}$
from Server$_1$
after it interacts with action $\actout{s_a}{1}{3}{v}$
from Server$_2$.

Hence in a globally typed setting, 
processes $P_1 \Par P_2$ and $P_1 \Par R_2$ are
not distinguishable by Client$_3$ and thus the 
thread optimisation of $R_2$ is behaviourally justified.

Note that processes $P_2$ and $R_2$
(i.e.~without the parallel composition $P_1$)
are not observationaly equivalent under any set
of session typed or untyped observational semantics.
The governed bisimulation between $P_1 \Par P_2$ and $P_1\Par R_2$
is achieved if we introduce an internal
message of the session created on shared channel $b$ between
processes $P_1$ and $P_2$ and $P_1$ and $R_2$, respectively.

\paragraph{\bf Changing a specification. \ }
A global protocol directly affects the behaviour of processes.
We change global type $G_a$ with global type:
\begin{eqnarray*}
	G_a' = \valuegt{2}{3}{\mathit{ser}} \valuegt{1}{3}{\mathit{ser}}\inactgt
\end{eqnarray*}
Processes $P_1 \Par P_2$ and $P_1 \Par R_2$ are also typable under
protocol $G_a'$ but now
process $R_2$ can perform both the output to Client$_3$ and the input
from Server$_1$
concurrently and according to the protocol $G_a'$ that states that
Client$_3$ can receive a message from Server$_2$ first.
Hence $P_1\Par P_2$ and $P_1 \Par R_2$ are no longer
equivalent under global type $G_a$.

The above example gives an insight for our development of
an equivalence theory that takes into account a global type as a specification.
The interaction scenario between processes refines the behaviour of processes.
To achieve such a theory of process equivalence we require to 
observe the labelled transitions together with the information provided
by the global types.
Global types define additional
knowledge about how an observer
(in the example above the observer is Client$_3$)
will collaborate with the observed
processes so that different global types
(i.e.~global witnesses) can
induce the different behaviours.

\paragraph{\bf Contributions and outline. \ }
\label{subsec:outline}
This article introduces two classes of 
typed bisimulations based on
multiparty session types.
The
first bisimulation definition is based on 
the typing information derived by 
local (endpoint) types,
hence it resembles
the standard linearity-based bisimulation 
for session types (\cite{DBLP:conf/forte/KouzapasYH11}).
The second bisimulation definition, which
we call {\em globally governed session bisimilarity}, uses
the information from global multiparty session types to derive
the interaction pattern of the global observer. 
We prove that both bisimilarities 
coincide with a corresponding standard contextual
equivalence \cite{HondaKYoshida95}
(see Theorems~\ref{the:scoincidence} and \ref{the:gcoincidence}).
The globally governed semantics give a
more fine-grained bisimilarity equivalence comparing to 
the locally typed linear bisimulation relation. 
We identify the condition when the two semantic theories 
coincide (see Theorem \ref{lem:full-abstraction}).
Interestingly our next theorem (Theorem \ref{thm:coincidence}) shows
that both bisimilarity relations differ
only when processes implement
two or more interleaved global types.
This feature makes the theory for govern multiparty bisimulation
applicable to real situations where 
multiple choreographies are used to compose a single, large application. 
We demonstrate the use of governed
bisimulation through the running example,
which is applicable to a thread optimisation of a real usecase
from a large scale distributed system \cite{ooi}. 

This article is a full version of the extended abstract published in
\cite{KY13} and the first author's thesis \cite{dkphdthesis}.
Here we include the detailed definitions, 
expanded explanations, full reasoning of the usecases and complete proofs.
The rest of the paper is organised as follows: 
Section \ref{sec:sync-calculus} introduces a synchronous version 
of the calculus for the multiparty sessions. Section \ref{sec:typing}
defines a typing system and proves its subject reduction theorem. 
Section \ref{sec:synchronous} presents the typed behavioural theory 
based on the local types 
for the synchronous multiparty sessions and proves that the typed 
bisimulation and reduction-based barbed congruence coincide.
Section \ref{sec:governed} 
introduces the semantics for globally governed behavioural theory and proves 
the three main theorems of this article. 
Section \ref{sec:app:ooi} presents a real-world usecase based on
UC.R2.13 ``Acquire Data
From Instrument'' from the 
Ocean Observatories Initiative (OOI)~\cite{ooi} and 
shows that the governed bisimulation can justify an optimisation 
of network services. Finally, Section \ref{sec:related} concludes with 
the related work. 
The appendix includes
the full proofs.

\paragraph{\bf Acknowledgement.\ } 
The first author is funded by EPSRC Post Doctal Fellowship. 
The work has been partially sponsored by the
Ocean Observatories Initiative, EPSRC EP/K011715/1, EP/K034413/1,
EP/G015635/1 and EP/L00058X/1, and EU project FP7-612985 UpScale. 
We thank Rumyana Neykova for her suggestion of the usecase. 

\section{Synchronous Multiparty Sessions}
\label{sec:sync-calculus}

This section defines a synchronous version of the 
multiparty session types. The syntax and typing 
follow the work in \cite{BettiniCDLDY08} without the
definition for session endpoint queues (which
 are used for asynchronous communication).
We choose to define our theory in the
synchronous setting, since it allows the simplest 
formulations for demonstrating the essential concepts of 
bisimulations. 
An extension of the current theory to the asynchronous
semantics is part of the work in \cite{dkphdthesis},
where session endpoint queues are used to provide asynchronous
interaction patterns between session endpoints.

\subsection{Syntax} 
\label{sec:synch-syntax}

\begin{figure}[t]
	\[
	\begin{array}{ccc}
		\begin{array}{lclr}
			\PP & \bnfis  &	\req{\Iu}{\p}{\x} \PP &\text{Request} \syntaxvspace
			& \bnfbar & \acc{\Iu}{\p}{\x} \PP & \text{Accept} \syntaxvspace
			& \bnfbar & \sout{c}{\p}{\e} \PP &\text{Sending} \syntaxvspace
			& \bnfbar & \sinp{c}{\p}{\x} \PP &\text{Receiving} \syntaxvspace
			& \bnfbar & \ssel{c}{\p}{l} \PP &\text{Selection} \syntaxvspace
			& \bnfbar & \sbraP{c}{\p} &\text{Branching}
			\\[4mm]
			\Iu & \bnfis & \x \bnfbar \Ia &\text{Identifier}\syntaxvspace
			n & \bnfis & s \bnfbar \Ia &\text{Name}\syntaxvspace
			\e & \bnfis & \multicolumn{2}{l}{
				\va \bnfbar \x \bnfbar
				\e = \e' \bnfbar \dots
			}
		\end{array}
		&
		\begin{array}{lclr}
			& \bnfbar & \IfThenElse{\e}{\PP}{\Q} & \text{Conditional} \syntaxvspace
			& \bnfbar & \PP \Par \Q & \text{Parallel} \syntaxvspace
			& \bnfbar & \inact & \text{Inaction} \syntaxvspace
			& \bnfbar & \news{n}\PP & \text{Hiding} \syntaxvspace
			& \bnfbar & \rec{X}{\PP} & \text{Recursion} \syntaxvspace
			& \bnfbar & \varp{X} & \text{Variable} 
			\\[4mm]
			c & \bnfis & \srole{\s}{\p} \bnfbar \x & \text{Session}\syntaxvspace
			\va & \bnfis & \Ia \bnfbar \true \bnfbar \false \bnfbar s[\p]
		        & \text{Value} \syntaxvspace
			& & & 
			\text{Expression}
		\end{array}
	\end{array}
	\]
	\caption{
		\label{fig:synch-syntax}
		Syntax for synchronous multiparty session calculus 
	}
\end{figure}

The syntax of the synchronous multiparty session
calculus is defined in Figure~\ref{fig:synch-syntax}.
We assume two disjoint countable sets of names: one ranges over
{\em shared names} $a, b, \dots$ and another ranges over 
{\em session names} $s, s_1, \dots$.
Variables range over $\x, \y, \dots$.
{\em Roles} 
range over the natural numbers and are denoted as
$\p, \q, \dots$, labels range over $l, l_1, \dots$
and constants range over $\true, \false, \dots$.
In general, when the sessions are interleaved, 
a participant may implement more than one roles. We often call 
$\p,\q,...$ participants when there is no confusion. 
Symbol $\Iu$ is used to range over shared 
names or variables.
Session endpoints are denoted as $\srole{\s}{\p}, \srole{\s}{\q}, \dots$.
The symbol $c$ ranges over 
session endpoints or variables.
Values range over shared names,
constants
(we can extend constants to include natural numbers $1, 2, \dots$) 
and session endpoints $\srole{s}{\p}$.
Expressions $\e, \e', \dots$ are either values, 
logical operations on expressions or name matching 
operations $n=n'$.

The first two prefixes are the primitives for session initiation: 
$\req{\Iu}{\p}{\x} \PP$ initiates a new session through 
identifier $u$  (which represents a shared interaction point) 
on the other multiple participants, each of the shape 
$\acc{\Iu}{\q}{\x} Q_\q$ where $1 \leq \q \leq \p - 1$. 
The (bound) variable $\x$ will be substituted with
the channel used for session communication.
The basic session endpoint communication
(i.e.\ the communication that takes place
between two endpoints)
is performed with
the next two pairs of prefixes:
the prefixes for sending ($\sout{c}{\p}{e} P$) 
and receiving ($\sinp{c}{\p}{\x} P$)
a value and the prefixes for the selecting 
($\ssel{c}{\p}{l} P$) and
branching  ($\sbraP{c}{\p}$)
processes,
where
the former prefix 
chooses one of the branches offered by the latter prefix. 
Specifically, process $\sout{c}{\p}{\e} \PP$
denotes the intention of sending a value to role $\p$;
in a similar way, process $\sinp{c}{\p}{\x} \PP$ 
denotes the intention of receiving a value from
role $\p$. 
A similar interaction pattern holds 
for the selection/branching pair of communication, with the
difference that the intention is to send (respectively receive) 
a label that eventually determines 
the reduction of the branching process.
Process $\inact$ is the inactive process.
The conditional process $\IfThenElse{e}{P}{Q}$
offers processes $\PP$ and $Q$ according
to the evaluation of expression $e$. Process $\PP \Par Q$
is the parallel composition, while $\news{n} P$ restricts
name $n$ in the scope of $P$. We use the primitive recursor
$\rec{X} P$ with $\varp{X}$ as the recursive variable.
We write $\fn{P}$/$\bn{P}$ and $\fv{P}$/$\bv{P}$ to 
denote the set 
of free/bound names and free/bound variables, respectively
in process $P$.
A process is closed if it is a term with no
free variables. 

\subsection{Operational semantics}

\begin{figure}[t]
\[
	\begin{array}{c}
		\PP \scong \PP \Par \inact \quad \quad \PP \Par \Q \scong \Q \Par \PP \quad \quad (\PP \Par \Q) \Par \R \scong \PP \Par (\Q \Par \R)
		\quad \quad \PP \acong \Q \quad \quad \rec{X}{\PP} \scong \PP \subst{\rec{X}{\PP}}{X}
		\\[2mm]
		\news{\n}\news{\n'} \PP \scong 
\news{\n'}\news{\n} \PP 
		\quad \quad \news{\n} \inact \scong \inact 
\quad\quad		\newsp{\n}{\PP} \Par \Q \scong \newsp{\n}{\PP \Par \Q} \quad \n \notin \fn{\Q}
	\end{array}
\]

\caption{
	\label{fig:synch-str-cong}
	Structural Congruence for Synchronous Multiparty Session Calculus
}
\end{figure}

The operational semantics is defined in
Figure~\ref{fig-synch-operational}. 
It uses the usual structure congruence relation 
(denoted by $\equiv$), which is defined as 
the least congruence relation that respects the rules in
Figure~\ref{fig:synch-str-cong}.
Structural congruence
is an associative and commutative monoid over the
parallel ($\Par$) operation with the inactive process 
($\inact$) being the neutral element. It 
respects alpha-renaming and recursion unfolding. 
The order of
the name restriction operators has no effect with respect
to structural congruence. Finally, 
the scope opening rule extends the 
scope of the restriction on a name from a process to a paralleled process,
provided that the name does not occur free in the latter. We often 
write $\newsp{\n_1\n_2\cdots\n_m}{\PP}$ for 
$\newsp{\n_1}{\newsp{\n_2}{\cdots \newsp{\n_m}{\PP}}}$. 

\ifcamera{
	\small
\else
	\begin{figure}[t]
\fi
	\newcommand{\redvspace}{\\[2mm]}
	\[
	\begin{array}{rclcl}
		\req{\Ia}{n}{\x} \PP_n \Par \Pi_{i=\{1,..,n-1\}} 
		\acc{\Ia}{i}{\x} P_{i}  &\red& \newsp{\s}{\PP_n \subst{\srole{\s}{n}}{x} \Par 
		\Pi_{i=\{1,..,n-1\}} P_i\subst{\srole{\s}{i}}{x}} & \quad & \orule{Link} 
		\redvspace

		\out{\s}{\p}{\q}{\e} \PP \Par \inp{\s}{\q}{\p}{\x} \Q & \red &
		\PP \Par \Q \subst{\va}{\x} \quad (e \eval \va) & \quad & \orule{Comm} 
		\redvspace


		\ssel{s[\p]}{\q}{l_k} \PP \Par \braP{\s}{\q}{\p} & \red & 
		\PP \Par \PP_k \quad (k \in I)  & \quad & \orule{Label}
		\redvspace

\multicolumn{5}{c}{
		\IfThenElse{\e}{\PP}{\Q} \red \PP \quad (\e \eval
                \true)  
\quad \orule{If-F} \quad 


		\IfThenElse{\e}{\PP}{\Q} \red \Q \quad (\e \eval \false) \quad \orule{If-T}
}		\redvspace



	\end{array}
	\]
	\vspace{-3mm}
	\[
		\tree{\PP \red \PP'}{\news{n}\PP \red \news{n} \PP'} \quad \orule{Res}
		\quad \quad 
		\tree{\PP \red \PP'}{\PP \Par Q \red \PP'\Par Q} \quad \orule{Par}
		\quad \quad 
		\tree{\PP \equiv \PP'\red Q' \equiv Q}{\PP \red Q} \quad \orule{Str}
	\]
\ifcamera
	}
\else
	\caption{
		\label{fig-synch-operational}
		Operational semantics for synchronous multiparty session
		calculus 
	}
	\end{figure}
\fi

The reduction semantics of the calculus is defined in 
Figure~\ref{fig-synch-operational}.
Rule $\orule{Link}$ defines synchronous session initiation
and requires that all session endpoints must be present for
a synchronous reduction, where 
each role $\p$ creates a session endpoint $\srole{s}{\p}$
on a fresh session name $s$. The maximum participant with
the maximum role ($\req{a}{n}{x} P$)  
is responsible for requesting a session initiation.
Rule $\orule{Comm}$ describes the semantics for sending 
a value to the corresponding receiving process.
We assume an evaluation context $e\downarrow v$, where
expression $e$ evaluates to value $v$.
Session endpoint $\srole{s}{\p}$ sends a value $v$
to session endpoint $\srole{s}{\q}$, which in its turn
waits to receive a value from role $\p$.
The interaction between selection and branching processes 
is defined 
by rule $\orule{Label}$, where we expect the selection part
$\srole{s}{\p}$
of the interaction to choose the continuation on the branching 
part $\srole{s}{\q}$ of the interaction.

The rest of the rules follow the usual
$\pi$-calculus rules. Rule $\orule{If}$ 
evaluates the boolean expression $e$. If the latter is true
it proceeds with the first branch process defined; otherwise
it proceeds with the second branch process defined. 
Rules $\orule{Res}$ and $\orule{Par}$ are inductive rules
on the parallel and name restriction operators. Finally,
rule $\orule{Str}$ closes the reduction relation under
structural congruence closure.
We write $\Red$ for $(\red\cup \equiv)^\ast$ \cite{SangiorgiD:picatomp,HondaKYoshida95}.

\section{Typing for Synchronous Multiparty Sessions}
\label{sec:typing}
This section defines a typing system 
for the synchronous multiparty session calculus. The system 
is a synchronous version of one presented in \cite{BettiniCDLDY08}. 
We first define multiparty session types 
and then summarise the typing system for the synchronous
multiparty session calculus. At the end of the section, 
we \NY{prove} the subject reduction 
theorem (Theorem~\ref{the:subject}).

\subsection{Global and local types}
\label{sec:global-local}
We first give the definition of the global type and  
then define the local session type
as a projection of the global type.

\paragraph{\bf Global types,} 
ranged over by $G, G',\dots$ describe the whole 
conversation scenario of a multiparty session
as a type signature. 
The grammar of the global types 
is given in Figure~\ref{fig:types} (left). 
\begin{figure}[t]
\noindent\begin{tabular}{c|c}
	$
	\begin{array}{rcll}
		\text{Global}\hspace*{-0.3cm} \\[1mm]
  \G & \bnfis & \valuegt{\p}{\q}{\U} \G' & \strule{exchange}\\
		& \bnfbar & \selgtG{\p}{\q} & \text{branching}\\
		& \bnfbar & \recgt{t}{\G} & \text{recursion}\\
		& \bnfbar & \vargt{t} & \text{variable} \\
		& \bnfbar & \inactgt & \text{end} \\[2mm]

		\text{Exchange}\hspace*{-0.8cm}\\[1mm]
		\U & \bnfis & \So \bnfbar \T\\[2mm]
		\text{Sort}\\[1mm]
 \So & \bnfis & \bool \bnfbar \globaltype{\G}
	\end{array}
	$
&
	$
	\begin{array}{rclcr}
		\text{Local}\\
 \T & \bnfis & \tout{\p}{\U} T & & \strule{send}\\
		  & \bnfbar & \tinp{\p}{\U} T &  & \strule{receive}\\
		  & \bnfbar & \tselT{\p} &  & \strule{select}\\
		  & \bnfbar & \tbraT{\p} &  & \strule{branch}\\
		  & \bnfbar & \trec{t}{\T} & & \strule{recursion}\\
		  & \bnfbar & \tvar{t} &  & \strule{variable}\\
		  & \bnfbar & \tinact &  & \strule{end}\\
	\end{array}
	$
\end{tabular}
\caption{Global and local types} \label{fig:types}
\end{figure}

The global type $\valuegt{\p}{\q}{\U} \G'$ describes
the interaction where
role $\p$ sends a message of type $\U$ 
to role $\q$ and then the interaction described 
in $\G'$ takes place. The \emph{exchange type} $\U,\U',...$ 
consists of {\em sort} types $S, S', \ldots$ for values 
(either base types or global types), and 
{\em local session} types $\T, \T', \ldots$  for
channels (local types are defined in the next paragraph). 
Type $\selgt{\p}{\q}{l_i:\G_i}_{i\in I}$ describes the
interaction where
role $\p$ selects one of the labels $l_i$ against
role $\q$. 
If $l_j$ is selected, the interaction described in $G_j$ 
takes place.
We assume that $\p \not= \q$. Type $\recgt{t}{\G}$ 
is the recursive type. Type variables 
($\vart{t}, \vart{t'}, \dots$) are guarded,
i.e., type variables only appear under some prefix and
do not appear free in the exchange types $U$.
We take an \emph{equi-recursive} view of recursive types, 
not distinguishing between $\mu \vart{t}.\G$ and its 
unfolding $G\subst{\mu \vart{t}.\G}{\vart{t}}$
\cite[\S 21.8]{PierceBC:typsysfpl}. 
We assume that $G$ in the grammar of sorts has 
no free type variables (i.e. type variables do not appear 
freely in carried types in exchanged global types).
Type $\inactgt$ represents the 
termination of the session.

\paragraph{\bf Local types} 
Figure~\ref{fig:types} (right) defines the syntax of local types. 
They correspond to the communication actions, representing
sessions from the view-point of a single role.
The {\em send type}  $\tout{\p}{\U} T$
expresses the sending to $\p$ of a value 
of type \U, followed by the communications of \T.
Similarly, the {\em select type} 
$\tselT{\p}$ represents the transmission of label
$l_i$ to role $\p$. 
Label $\l_i$ is chosen in the set $\set{l_i}_{i \in I}$
and the selection prefix is
followed by the communications described by $\T_i$.
The {\em receive} and {\em branch types} are described
as dual types for the send and select \NY{types}, respectively.
In the receive type \NY{$\tinp{\p}{U}T$} we expect that
the typed process will receive a value of type $U$
from role $\p$
while in the branch type $\tbra{\p}{l_i:T_i}_{i \in I}$
we expect a selection label $l_j \in \set{l_i}_{i \in I}$
from role $\p$.
The rest of the local types are the same as global types.
Recursion $\trec{t}{T}$ uses type variables $\vart{t}$ to
perform a recursion via substitution 
\NY{$T\subst{\trec{t}{T}}{\vart{t}}$}. The inactive
type is written as $\tinact$.

We define the roles 
occurring in a global type and the
roles 
occurring in a local type.

\begin{defi}[Roles]
$ $
	\begin{itemize}
		\item	We define $\roles{\G}$ as
			the set of roles in protocol $\G$:
			\[
			\begin{array}{l}
\roles{\inactgt} = \es \qquad \quad \roles{\vargt{t}} = \es \qquad \quad
					\roles{\trec{t}{G}} = \roles{G}
					\\[1mm]
					\roles{\valuegt{\p}{\q}{U} G} = \set{\p, \q} \cup \roles{G}
					\\[1mm]
					\roles{\selgt{\p}{\q}{l_i:G_i}_{i \in I}} = \set{\p, \q} \cup \set{\roles{G_i} \setbar i \in I}
				\end{array}
			\]

		\item	We define $\roles{T}$ on local types as:
		\[
			\begin{array}{l}
				\roles{\tinact} = \es \qquad \qquad \roles{\vart{t}} = \es \qquad \qquad
				\roles{\trec{t}{T}} = \roles{T}
				\\[1mm]
				\roles{\tout{\p}{U} T} = \set{\p} \cup \roles{T} \qquad 
				\roles{\tinp{\p}{U} T} = \set{\p} \cup \roles{T}
				\\[1mm]
				\roles{\tsel{\p}{l_i:T_i}_{i \in I}} = \set{\p} \cup \set{\roles{T_i} \setbar i \in I}
				\\[1mm]
				\roles{\tbra{\p}{l_i:T_i}_{i \in I}} = \set{\p} \cup \set{\roles{T_i} \setbar i \in I}
			\end{array}
		\]
	\end{itemize}
\end{defi}

\paragraph{\bf Global and local projections}
\label{subsec:projection}

The relation between global and local types is formalised by 
the usual projection function \cite{HYC08,BettiniCDLDY08},
where the projection of a global type $G$ over a role $\p$ results
in a local type 
$T$.\footnote{For a simplicity of the presentation, we take the projection function from
  \cite{HYC08,BettiniCDLDY08}, which does not use
  the mergeability operator \cite{DYBH12}. The extension does not affect to the
  whole theory.}

\begin{defi}[Global projection and projection set]
\label{def:projection} \rm
	The projection of a global type $G$ 
	onto a role $\p$ results in a local type
	and it is defined by induction on $G$:
	\[
	\small
	\begin{array}{rcl}
		\proj{\valuegt{\p'}{\q}{\U} \G}{\p}
		& = & \left\{
		\begin{array}{lcl}
			\tout{\q}{\U} \proj{\G}{\p} & \quad & \p = \p'\\
			\tinp{\p'}{\U} \proj{\G}{\p} & \quad & \p = \q \\
			\proj{\G}{\p} & \quad & \textinmath{otherwise}
		\end{array} \right.
		\\[8mm]

		\proj{\selgtG{\p'}{\q}}{\p}
		& = & \left\{
		\begin{array}{lcl}
			\tsels{\q}{\set{l_i:\proj{\G_i}{\p}}_{i \in I}} & \quad & \p = \p'\\
			\tbras{\p'}{\set{l_i:\proj{\G_i}{\p}}_{i \in I}} & \quad & \p = \q \\
			\proj{\G_1}{\p} & \quad & \text{if}\  \forall j\in I.\
              		\proj{\G_1}{\p}=\proj{\G_j}{\p}
		\end{array} \right.
		\\[8mm]

		\proj{(\recgt{t}{\G})}{\p} & = & \left\{
		\begin{array}{lcl}
			\trec{t}{(\proj{\G}{\p})}	& \quad &  \proj{\G}{\p} \not = \vart{t}\\
			\tinact				& \quad & \textinmath{otherwise}
		\end{array} \right.
		\\[6mm]

		\multicolumn{3}{c}{
			\proj{\vargt{t}}{\p} = \tvar{t} \qquad \quad \proj{\inactgt}{\p} = \tinact 
		}
	\end{array}
	\]
	A {\em projection} of $G$ is defined as $\projset{G} = \set{\proj{G}{\p} \setbar \p \in \roles{\G}}$. 
\end{defi}
Inactive $\inactgt$ and recursive variable $\vargt{t}$ types
are projected to their respective local types.
Note that we assume global types have
  roles starting from $1$ up to some $n$, without skipping numbers in
  between.

We project $\valuegt{\p'}{\q}{\U} \G$ to
party $\p$ as a sending local type if $\p = \p'$ and as
a receiving local type if $\p = \q$. In any case the
continuation of the projection is $\proj{\G}{\p}$.

For $\selgtG{\p}{\q}$, the projection is the 
select local type for $\p = \p'$ and the branch local type  $\p = \q$.
Otherwise we use the projection of one of the 
global types $\set{\G_i \setbar i \in I}$ (all types $\G_i$
should have the same projection with respect to $\p$).

The first side condition of the recursive type $\recgt{t}{\G}$ ensures 
that it does not project to an invalid local type $\recgt{t}{\vart{t}}$. 

We use the local projection function
to project a local type $\T$ onto a role
$\p$ to produce binary session types. 
We use the local projection to extract
technical results (well-formed linear environment)
later in this section.

We use the usual binary session types, 
\cite{honda.vasconcelos.kubo:language-primitives,yoshida.vasconcelos:language-primitives}, 
for the definition of local projection:
\begin{defi}[Binary session types]
	\[
		B \bnfis \btout{\U} B \bnfbar \btinp{\U} B \bnfbar \btsel{l_i: B_i} \bnfbar \btbra{l_i:B_i} \bnfbar 
		\trec{t}{B} \bnfbar \vart{t} \bnfbar{\tinact}
	\]
\end{defi}

\begin{defi}[Local projection]
\label{def:localprojection}
	The projection of a local type $T$ 
	onto a role $\p$ is defined by induction on $T$:
	\[
	\begin{array}{rcl}
		\proj{\tout{\p}{\U} T}{\q} & = & \left\{
		\begin{array}{lcl}
			\btout{\U} \proj{\T}{\q} & \quad & \textrm{\q = \p}\\
			\proj{\T}{\q} & \quad & \textrm{otherwise} 
		\end{array}
		\right.
		\\[6mm]

		\proj{\tinp{\p}{\U} T}{\q} & = & \left\{
		\begin{array}{lcl}
			\btinp{\U} \proj{\T}{\q} & \quad & \textrm{\q = \p}\\
			\proj{\T}{\q} & \quad & \textrm{otherwise}
		\end{array}
		\right.
		\\[6mm]

		\proj{\tsel{\p}{l_i: T_i}_{i \in I}}{\q} & = & \left\{
		\begin{array}{lcl}
			\btsel{l_i:\proj{\T_i}{\q}}_{i \in I} & \quad & \textrm{\q = \p}\\
			\proj{\T_1}{\q} & \quad & 
			\text{if} \ \forall i\in I. \proj{\T_i}{\q} = \proj{\T_1}{\q}
		\end{array}
		\right.
		\\[6mm]

		\proj{\tbra{\p}{l_i: T_i}_{i \in I}}{\q} & = & \left\{
		\begin{array}{lcl}
			\btbra{l_i:\proj{\T_i}{\q}}_{i \in I} & \quad & \textrm{\q = \p}\\
			\proj{\T_1}{\q} & \quad & 
			\text{if} \ \forall i\in I. \proj{\T_i}{\q} = \proj{\T_1}{\q}
		\end{array}
		\right.
		\\[6mm]

		\proj{(\recgt{t}{\T})}{\p} & = & \left\{
		\begin{array}{lcl}
			\trec{t}{(\proj{\T}{\p})}	& \quad & \proj{\T}{\p} \not = \vart{t}\\
			\tinact				& \quad & \textinmath{otherwise}
		\end{array} \right.
		\\[6mm]

		\multicolumn{3}{c}{
			\proj{\vart{t}}{\p} = \tvar{t} \qquad \quad
			\proj{\tinact}{\p} = \tinact
		}
	\end{array}
	\]
\end{defi}
An inactive local type, a recursive variable and a recursive type are projected
to their corresponding binary session type syntax. 
\NY{Types $\tout{\p}{\U} T$ and $\tinp{\p}{\U} T$ 
are projected with respect to $\q$
to binary send and receive session types, respectively under
the condition $\p = \q$, 
and continue with the projection of $T$ on $\q$. 
If $\p \not= \q$, the local projection continues with the projection of $T$.
A similar argument is applied for $\tsel{\p}{l_i: T_i}_{i \in I}$ and $\tbra{\p}{l_i: T_i}_{i \in I}$
if $\p = \q$. 
For the case $\p \not= \q$, we project one of the continuations in
$\set{T_i}_{i \in I}$ since we expect all the projections of $\set{T_i}_{i \in I}$
to be the same \cite{HYC08,BettiniCDLDY08}.}

We inductively define the notion of duality as 
a relation over the projected local types:
\begin{defi}[Duality]\rm
We define the duality function over binary session types as:
\[
\begin{array}{c}
	\tinact  =  \dual{\tinact} \quad \vart{t} = \dual{\vart{t}} \quad
	\dual{\trec{t}{B}} = \trec{t}{\dual{B}} \quad
	\dual{\btout{U} B} = \btinp{U} \dual{B} \quad
	\dual{\btinp{U} B} = \btout{U} \dual{B} \\
	\dual{\btsel{l_i:B_i}_{i \in I}} = \btbra{l_i:\dual{B_i}}_{i \in I} \quad
	\dual{\btbra{l_i:B_i}_{i \in I}} = \btsel{l_i:\dual{B_i}}_{i \in I}
\end{array}
\]
\end{defi}
We assume only session types with tail recursion as in \cite{HYC08,BettiniCDLDY08} 
(note that the inductive duality
on non-tail recursive session types,  
i.e.~session prefixes that carry a recursive
variable as an object) is shown to be unsound \cite{DBLP:journals/corr/BernardiH13}). 

The result of the following proposition is used on the well-formedness
criteria of a linear environment.
\begin{prop}
	\label{prop:bindual}
	If $\p, \q \in \roles{G}$ with $\p \not= \q$ then
	$\proj{(\proj{\G}{\p})}{\q} = \dual{\proj{(\proj{\G}{\q})}{\p}}$. 
\end{prop}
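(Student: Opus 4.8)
The plan is to prove the statement by structural induction on $\G$, and in fact to establish the slightly more general claim that $\proj{(\proj{\G}{\p})}{\q} = \dual{\proj{(\proj{\G}{\q})}{\p}}$ holds for \emph{all} distinct roles $\p \neq \q$, dropping the hypothesis $\p,\q \in \roles{\G}$. This generalisation is what makes the induction self-supporting: in the communication cases the continuation need not mention both $\p$ and $\q$, so an induction hypothesis restricted to the roles actually occurring in the subterm would fail to apply, whereas the unrestricted claim applies uniformly. When a role is absent, the relevant global projection carries no prefix addressed to the other role, the subsequent local projection discards every prefix, and both sides collapse to the same self-dual ``skeleton'', so nothing is lost.

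For the base cases $\G = \inactgt$ and $\G = \vargt{t}$ both double projections yield $\tinact$ and $\tvar{t}$ respectively, each self-dual, so the equality is immediate (and under the stated hypothesis these cases are in any case vacuous, since $\roles{\inactgt} = \roles{\vargt{t}} = \es$). For the exchange case $\G = \valuegt{\p'}{\q'}{\U} \G'$, and symmetrically the branching case $\G = \selgtG{\p'}{\q'}$, I would split on the position of $\p$ and $\q$ relative to the communicating pair $\set{\p',\q'}$ (recalling that the two roles of a prefix are distinct). When $\set{\p',\q'} = \set{\p,\q}$ the two global projections introduce dual prefixes — a send against a receive (respectively a selection against a branching) — and local projection onto the partner role retains exactly these, producing $\btout{\U}$ versus $\btinp{\U}$ (respectively a select-type versus a branch-type); applying $\dual{\cdot}$ to the second matches the first through the clauses $\dual{\btinp{\U} B} = \btout{\U} \dual{B}$ and $\dual{\btbra{l_i:B_i}_{i \in I}} = \btsel{l_i:\dual{B_i}}_{i \in I}$, and the goal reduces to the induction hypothesis on the continuation. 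When at most one of $\p,\q$ lies in $\set{\p',\q'}$, the global projection that does see a prefix addresses a role different from the one we then project onto, so the \emph{otherwise} clause of local projection discards it; both sides then reduce verbatim to the claim for the continuation ($\G'$ in the exchange case, and the representative $\G_1$ in the branching case, whose projection equals that of every $\G_i$ by the well-formedness side condition), and the induction hypothesis closes the case.

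The delicate case is recursion, $\G = \recgt{t}{\G_0}$, because of the side condition that collapses $\trec{t}{\tvar{t}}$ to $\tinact$. Writing $A = \proj{\G_0}{\p}$ and $B = \proj{\G_0}{\q}$, the induction hypothesis on $\G_0$ gives $\proj{A}{\q} = \dual{\proj{B}{\p}}$; call this common value $C$. The plan is to observe that duality is involutive and reflects variables, i.e.\ $\dual{X} = \tvar{t}$ iff $X = \tvar{t}$, together with $\dual{\trec{t}{D}} = \trec{t}{\dual{D}}$ and $\dual{\tinact} = \tinact$; moreover $A = \tvar{t}$ forces $\proj{A}{\q} = \tvar{t}$ and $B = \tvar{t}$ forces $\dual{\proj{B}{\p}} = \tvar{t}$, so in either situation $C = \tvar{t}$. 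A short case analysis on whether $C = \tvar{t}$ then shows the recursion guard fires on the left exactly when it fires on the right: if $C \neq \tvar{t}$ both sides equal $\trec{t}{C}$, and if $C = \tvar{t}$ both sides equal $\tinact$.

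I expect this recursion case to require the most care, since one must track the interaction between the two nested projections and the recursion guard and verify that $\dual{\cdot}$ commutes with the collapse to $\tinact$; the remaining cases are routine once the duality bookkeeping for the communication prefixes is in place.
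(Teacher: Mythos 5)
Your proposal is correct and takes the same route as the paper, whose proof is stated simply as an induction on the structure of global types; your strengthening to arbitrary distinct roles $\p \neq \q$ and the case analysis on the recursion guard (via involutivity of $\dual{\cdot}$ and the fact that $\dual{B} = \tvar{t}$ iff $B = \tvar{t}$) are exactly the bookkeeping the paper's one-line proof leaves implicit. The only detail worth adding is that in the branching cases one should also note that definedness of the inner local projection follows from the induction hypothesis together with the side condition on $\proj{\G}{\q}$, since local projection of a select/branch type onto a non-participant is itself partial.
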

\begin{proof}
	The proof is done by induction on the structure of global types.
\end{proof}

\subsection{Typing system}
\label{subsec:typing}
We define the typing \NY{system} for
the synchronous multiparty session calculus.
The typing judgements for expressions and processes are of the
shapes:
\[
	\Gproves{\e}{\So} \quad \textinmath{ and } \quad \Gtprocess{\PP}{\De}
\]
where $\Gamma$ is the shared environment 
which associates variables to sort types (i.e.~base types or 
global types), shared names to
global types and process variables to session environments;
and  $\Delta$ is the session environment (or linear environment) 
which associates channels to session types.
Formally we define: 
\[
	\Ga \bnfis \es \bnfbar \Gacat{\typed{u} \So} \bnfbar \Gacat{\typed{\varp{X}} \De}
	\quad \textinmath{and} \quad
	\De \bnfis \es \bnfbar \Decat{\typed{c} \T}
\]
assuming we can write $\Gacat{\typed{u} \So}$ if $u\not\in\dom{\Gamma}$. 
We extend this to a concatenation for typing environments as 
$\De \cat \De'=  \De \cup \De'$.
We use the following definition to declare the coherence of
session environments.
\begin{defi}[Coherency]\rm 
\label{def:coherency}
	Typing $\De$ is {\em coherent with respect to session $s$}
	(notation $\cohses{\De}{s}$) if
	for all $\srole{s}{\p}: T_\p\ \in \De$ there exists
	$\srole{s}{\q}: T_\q \in \De$ such that
	$\proj{T_\p}{\q} = \dual{\proj{T_\q}{\p}}$.
	A typing $\De$ is {\em coherent}
	(notation $\coherent{\De}$) if 
	it is coherent with respect to all $s$ in its domain.
We say a typing $\De$ is {\em fully coherent}
(notation $\fcoherent{\De}$) if
it is coherent and
if $\srole{s}{\p}:T_\p \in \De$ then
for all	$\q \in \roles{T_\p}$, 
$\srole{s}{\q} : T_\q \in \De$.
\end{defi}

\begin{figure}
\newcommand{\typvspace}{\\[5mm]}
\[
\small
\begin{array}{cc}
	\Eproves{\Gacat{\typed{u} S}}{u}{S} \quad \trule{Name} 
&
	\Gproves{\true,\false}{\bool} \quad \trule{Bool}
	\typvspace

	\tree{\Gproves{\e_i}{\bool}}{\Gproves{\e_1 \LogAnd \e_2}{\bool}} \ \trule{And}
&
	\tree{
		\NY{\Gproves{n_1}{U} \quad \Gproves{n_2}{U}}
	}{
		\NY{\Gproves{n_1 = n_2}{\bool}}
	}\ \trule{Match}
	\typvspace

	\tree{
		\begin{array}{l}
			\Gproves{\Ia}{\gG} \quad \Gtprocess{\PP}{\Decat \typed{\x} \proj{\G}{\p}\\
			\max(\roles{G})= \p}
		\end{array}
	}{
		\Gtprocess{\req{\Ia}{\p}{\x}{\PP}}{\De}
	} \ \trule{MReq}
&
	\tree{
	\begin{array}{l}
		\Gproves{\Ia}{\gG} \quad \Gtprocess{\PP}{\Decat \typed{\x} \proj{\G}{\p}}\\
		1 \leq \p < \max(\roles{G})
	\end{array}
	}{
		\Gtprocess{\acc{\Ia}{\p}{\x}{\PP}}{\De}
	} \ \trule{MAcc}
	\typvspace

	\tree{
		\Gproves{\e}{\So} \quad \Gtprocess{\PP}{\Decat \typed{c} \T}
	}{
		\Gtprocess{\sout{c}{\q}{\e} \PP}{\Decat \typed{c} \tout{\q}{\So} \T}
	} \ \trule{Send}
&
	\tree{
		\tprocess{\Gacat \typed{\x} \So}{\PP}{\Decat \typed{c} \T}
	}{
		\Gtprocess{\sinp{c}{\q}{\x} \PP}{\Decat \typed{c} \tinp{\q}{\So} \T}
	} \ \trule{Recv}
	\typvspace

	\tree{
		\Gtprocess{\PP}{\Decat \typed{c} \T}
	}{
		\Gtprocess{\sout{c}{\q}{c'} \PP}{\Decat \typed{c} \tout{\q}{\T'} \T \cat \typed{c'} \T' }
	} \ \trule{Deleg}
&
	\tree{
		\Gtprocess{\PP}{\Decat \typed{c} \T \cat \typed{x} \T'}
	}{
		\Gtprocess{\sinp{c}{\q}{\x} \PP}{\Decat \typed{c} \tinp{\q}{\T'} \T}
	} \ \trule{SRecv}
	\typvspace

	\tree{
		\Gtprocess{\PP}{\Decat \typed{c} \T}
	}{
		\Gtprocess{\ssel{c}{\q}{l} \PP}{\Decat \typed{c} \tsel{\q}{l:T}}
	} \ \trule{Sel}
&
	\tree{
		\Gtprocess{\PP_i}{\Decat \typed{c} \T_i \quad \forall\ i \in I}
	}{
		\Gtprocess{\sbraP{c}{\q}}{\Decat \typed{c} \tbraT{\q}}
	} \ \trule{Bra}
	\typvspace

	\tree{
		\Gtprocess{\PP_1}{\De_1} \quad \Gtprocess{\PP_2}{\De_2} \quad \dom{\De_1} \cap \dom{\De_2} = \es
	}{
		\Gtprocess{\PP_1 \Par \PP_2}{\De_1 \cat \De_2}
	} \ \trule{Conc}
&
	\tree{
		\Gproves{\e}{\bool} \quad \Gtprocess{\PP}{\De} \quad \Gtprocess{\Q}{\De}
	}{
		\Gtprocess{\IfThenElse{\e}{\PP}{\Q}}{\De}
	} \ \trule{If}
	\typvspace

	\Gtprocess{\inact}{\es} \quad \trule{Inact}
&
	\tree{
		\Gtprocess{\inact}{\De}
	}{
		\Gtprocess{\inact}{\De \cat \typed{c} \tinact}
	} \ \trule{Complete}
	\typvspace


	\tree{
		\tprocess{\Gacat \typed{a}\gG}{\PP}{\De}
	}{
		\Gtprocess{\news{a} \PP}{\De}
	} \ \trule{NRes} 
&
	\tree{
		\begin{array}{c}
			\fcoherent{\set{\typedrole{\s}{1} T_1 \dots  \typedrole{\s}{n} \T_n}}\\
			\Gtprocess{\PP}{\Decat \typedrole{\s}{1} T_1 \dots  \typedrole{\s}{n} \T_n } 
		\end{array}
	}{
		\Gtprocess{\news{\s} \PP}{\De}
	} \ \trule{SRes}
	\typvspace

	\tprocess{\Gacat{\typed{\varp{X}} \De}}{\varp{X}}{\De} \quad \trule{Var}
&
	\tree{
		\tprocess{\Gacat{\typed{\varp{X}} \De}}{\PP}{\De}
	}{
		\Gtprocess{\rec{X}{\PP}}{\De}
	} \ \trule{Rec} 
\end{array}
\]
\caption{
	\label{fig:synch-typing}
	Typing system for synchronous multiparty session calculus 
}
\end{figure}

Figure~\ref{fig:synch-typing} defines the typing system.
The typing rules presented here are essentially identical to 
the communication typing system for programs 
in \cite{BettiniCDLDY08}, 
due to the fact that our calculus is synchronous 
(i.e.~we do not use session endpoint queues).

Rule $\trule{Name}$ types a shared name or
a shared variable in environment $\Ga$.
Rule $\trule{Bool}$ assigns the 
type $\bool$ to constants $\true, \false$.
Logical expressions are also typed with the $\bool$
type via rule $\trule{And}$, etc. 
\NY{Rule $\trule{Match}$
ensures that the name matching operator has
the boolean type.}

Rules $\trule{MReq}$ and $\trule{MAcc}$ type the
request and accept processes,
respectively. Both rules check that the type of session variable
$x$ agrees with the global type of the session initiation name $a$
that is
projected on the corresponding role $\p$.
Furthermore, in rule $\trule{MReq}$ we require that
the initiating role $\p$ is the maximum among the roles of the
global type $G$ of $a$ while in rule $\trule{MAcc}$ we require
that role $\p$ is less than the maximum role of global type $G$.
Note that these session initiation rules allow processes to contain 
some but not all of the roles in a session. 
Rules $\trule{MReq}$ and $\trule{MAcc}$ ensure that a parallel composition
of processes that implement all the roles in a session can proceed
with a sound session initiation.

Rules $\trule{Send}$ and $\trule{Recv}$ are used to type the 
send, $\sout{c}{\p}{\va} P$, and receive, $\sinp{c}{\p}{x} P$, session prefixes.
Both rules prefix the local type of $c$ in the linear environment
the send type $\tout{\p}{U} T$, and receive type $\tinp{\p}{U} T$, respectively.
The typing is completed with the check of the object types
$v$ and $x$, respectively, in the shared environment $\Ga$. 
The delegation of a session endpoint is typed under rules $\trule{Deleg}$ and $\trule{Srecv}$.
Both rules prefix the local type of $c$ in the linear environment
with the send and receive 
prefixes, respectively, in a similar way with rules
$\trule{Send}$ and $\trule{Recv}$.
They check 
the type for the delegating object in the linear environment $\De$, and 
a delegation respects the linearity of the
delegating endpoint (in this case $c'$ and $x$, respectively), 
i.e.~when an endpoint is sent ($\trule{Deleg}$) it should not be present
in the linear environment of the continuation $P$. Similarly, when an
endpoint is received ($\trule{Srecv}$) the receiving endpoint should not
be present before the reception.


Rules $\trule{Sel}$ and $\trule{Bra}$ type selecting and 
branching processes, respectively. 
A selection prefix is typed on a select local type, while
a branching prefix is typed on a branch local type.
A selection prefix with label $l$ for role $c$ uses label
$l$ to select the continuation on name $c$ in the select local type.
A branching process with labels $l_i$ branches the local types 
of $c$ in the corresponding $P_i$ in the branch local type.
Furthermore, all $P_i$ processes should have the same linear type
on names other than $c$.

Rule $\trule{Conc}$ types a parallel composition of processes.
The disjointness condition on typing environments $\De_1$ and $\De_2$
ensures the linearity of the environment $\De_1 \cat \De_2$.
Rule $\trule{If}$ types conditional process, 
where we require that the expression $\e$ to be of $\bool$ 
type and that the branching processes have the
same linear environment.
Rule
$\trule{Inact}$ types the empty process with the empty linear
environment. Rule $\trule{Complete}$ does an explicit weakening on the linear typing of 
an inaction process to achieve a complete linear environment. A complete
linear environment is defined as the linear typing where
every session endpoint is mapped to the inactive local type $\tinact$.
Rule $\trule{Nres}$
defines the typing for shared name restriction. A restricted shared name should 
be present in the shared environment $\Ga$ before restriction and should 
not appear in $\Ga$ after the restriction. 
Rule $\trule{Sres}$
uses the full coherency property to restrict a session name. Full
coherency requires that all session endpoints of a session are present
in the linear environment before restriction
and furthermore, it requires that their local projections are mutually dual. A
restricted session name does not appear in the domain of the linear
environment.

Rule $\trule{Var}$ returns the linear environment $\De$
that is assigned to process variable $\varp{X}$ in environment $\Ga$.
Rule $\trule{Rec}$ checks that the process $P$ 
under the recursion has the same
linear environment $\De$ as the recursion variable $\varp{X}$.

Further examples of typing and typable processes can be found in 
\cite{CDYP13}.

Finally, we call the typing judgement $\Gtprocess{\PP}{\De}$ 
{\em coherent} if $\coherent{\De}$. 



\subsection{Type soundness}
\label{subsec:typesoundness}

We proceed with the proof of a subject reduction
theorem to show the soundness of the typing system.

Before we state the subject reduction theorem we
define the reduction semantics for local types extended
to include session environments.
The reduction on a session environment of a process 
shows the change on the session environment
after a possible reduction on the process.
We use the approach from
\cite{HYC08,BettiniCDLDY08} to define session environment
reduction.
\begin{defi}[Session environment reduction]
\label{def:lin_red}
$ $
	\begin{enumerate}
		\item	$\set{\typedrole{s}{\p} \tout{\q}{U} T \cat \typedrole{s}{\q} \tinp{\p}{U} T'}
			\typingred
			\set{\typedrole{s}{\p} T \cat \typedrole{s}{\q} T'}$.

		\item	$\set{\typedrole{s}{\p} \tsel{\q}{l_i:T_i}_{i \in I} \cat
			\typedrole{s}{\q} \tbra{\p}{l_j:T'_j}_{j \in J}}
			\typingred 
			\set{\typedrole{s}{\p} T_k \cat \typedrole{s}{\q} T'_k}\quad I \subseteq J, k \in I$.

		\item	$\De \cup \De' \typingred \De \cup \De''$ if
			$\De' \typingred \De''$. 
	\end{enumerate}
\end{defi}
%
Note that the second rule of the session environment
reduction makes the reduction $\Delta \red^* \Delta'$
non-deterministic (i.e.~not always confluent).
%
The typing system satisfies 
the subject reduction theorem \cite{BettiniCDLDY08}: 
\begin{thm}[Subject reduction]
	\label{the:subject}
	If $\Gtprocess{P}{\De}$ is coherent and $P \Red P'$ then 
	$\Gtprocess{P'}{\De'}$ is coherent with $\De \typingred^\ast \De'$.
\end{thm}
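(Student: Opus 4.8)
The plan is to prove the theorem by reducing $P \Red P'$ to a sequence of single reduction steps and applying induction on the derivation of each one. Since $\Red = (\red \cup \equiv)^\ast$, I would first establish that structural congruence preserves typing, i.e.\ if $\Gtprocess{P}{\De}$ is coherent and $P \scong Q$ then $\Gtprocess{Q}{\De}$ is coherent with the same $\De$. This is routine, checking each rule of Figure~\ref{fig:synch-str-cong}; the only points needing care are recursion unfolding (handled by the equi-recursive view) and the scope-opening rule (which interacts with $\trule{NRes}$ and $\trule{SRes}$). Given this, it suffices to prove the single-step statement: if $\Gtprocess{P}{\De}$ is coherent and $P \red P'$, then $\Gtprocess{P'}{\De'}$ is coherent with $\De \typingred^\ast \De'$.

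For the single-step case I would proceed by induction on the derivation of $P \red P'$, with a case analysis on the last reduction rule applied. The base cases are $\orule{Link}$, $\orule{Comm}$, $\orule{Label}$, and the two $\orule{If}$ rules; the inductive cases are $\orule{Res}$, $\orule{Par}$, and $\orule{Str}$. The conditional and the congruence/inductive rules are straightforward: $\orule{If}$ uses that $\trule{If}$ types both branches under the same $\De$ (so $\De' = \De$, with zero reduction steps on the environment), while $\orule{Res}$ and $\orule{Par}$ follow by invoking the induction hypothesis and reassembling the typing derivation, using that $\trule{Conc}$ and the restriction rules preserve the relevant environment splitting. The $\orule{Comm}$ and $\orule{Label}$ cases are the heart of the argument: by inverting the typing of the redex one extracts that the two communicating endpoints carry dual local types at the active prefix, so the corresponding clause of Definition~\ref{def:lin_red} fires, giving $\De \typingred \De'$ in exactly one step; a standard substitution lemma (for $\orule{Comm}$, replacing $x$ by the received value or delegated endpoint) then retypes the continuation. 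Preservation of coherence here follows from Proposition~\ref{prop:bindual} together with the coherence assumption, since the surviving endpoints still satisfy the dual-projection condition after both sides advance past the matched prefix.

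The main obstacle is the $\orule{Link}$ case, where a session is freshly created: here the environment reduction $\De \typingred^\ast \De'$ is actually $\De' = \De$ (session initiation does not advance any existing local type), but one must show the newly opened session is fully coherent so that the derivation can later supply the hypothesis of $\trule{SRes}$. Concretely, inverting $\trule{MReq}$ and $\trule{MAcc}$ for all participants shows each endpoint $\srole{s}{\p}$ receives local type $\proj{\G}{\p}$ for the common global type $\G$ of the shared name; Proposition~\ref{prop:bindual} then guarantees $\proj{(\proj{\G}{\p})}{\q} = \dual{\proj{(\proj{\G}{\q})}{\p}}$ for all pairs, which is exactly coherence, and the presence of \emph{all} roles gives full coherence. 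The delegation sub-cases of $\orule{Comm}$ (rules $\trule{Deleg}$/$\trule{SRecv}$) also require attention, since the moved endpoint $c'$ must be transferred between the two sides of the environment split while preserving linearity and coherence; I expect this to reduce to careful bookkeeping on $\De$ rather than any genuine difficulty. Throughout, I would lean on standard weakening/strengthening and substitution lemmas for the shared environment $\Ga$, which I would state separately before the main induction.
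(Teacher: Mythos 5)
Your proposal is correct and follows essentially the same route as the paper's proof: a subject congruence lemma plus a substitution lemma, then rule induction on the reduction, with $\orule{Link}$ handled by showing the fresh session's projections are fully coherent so that $\trule{SRes}$ applies, and $\orule{Comm}$ handled via Proposition~\ref{prop:bindual} and coherence, yielding $\De \typingred \De'$. The paper presents only these two key cases and defers the rest to \cite{BettiniCDLDY08}, whereas you additionally sketch $\orule{Label}$, the delegation sub-cases, and the congruence cases, but the underlying argument is the same.
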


\begin{proof}
	See Appendix~\ref{app:sr}.
\end{proof}

\section{Synchronous Multiparty Session Semantics}
\label{sec:synchronous}
This section presents the session typed behavioural 
theory for synchronous multiparty sessions.
The typed bisimulation uses a
labelled transition system (LTS) on environment tuples
($\Gamma,\Delta$) to control the behaviour of untyped processes. 
The LTS on environments introduces a constrain that captures 
accurately multiparty session interactions and lies
at the heart of the session typed semantics.
The bisimulation theory presented in this section is extended in the 
next section to define a bisimulation theory that uses 
a more fine-grained LTS, defined using 
the additional typing information of the global observer.

\subsection{Labelled transition system}
\label{sec:LTS}

We use the following labels ($\ell, \ell', \dots$)
to define the labelled transition system:
\begin{eqnarray*}
	\ell	&\bnfis& \actreq{\Ia}{\A}{\s} \bnfbar 
		\actacc{\Ia}{\A}{\s} \bnfbar \actout{\s}{\p}{\q}{\va} 
		\bnfbar \actbout{\s}{\p}{\q}{\Ia}  \\
		& \bnfbar &\actbout{\s}{\p}{\q}{\srole{\s'}{\p'}} \bnfbar
		\actinp{\s}{\p}{\q}{\va} \bnfbar \actsel{\s}{\p}{\q}{l} \bnfbar 
		\actbra{\s}{\p}{\q}{l} \bnfbar \tau
\end{eqnarray*}
Note that label $\actout{\s}{\p}{\q}{\srole{\s'}{\p'}}$ is subsumed in 
label $\actout{\s}{\p}{\q}{\va}$. 
Symbol $A$ denotes a {\em role set}, which is a set of 
roles.
Labels $\actreq{\Ia}{\A}{\s}$ and $\actacc{\Ia}{\A}{\s}$ define
the accept and request of a fresh session $\s$ 
by roles in set $\A$, respectively.
Actions on session channels are denoted with the labels 
$\actout{\s}{\p}{\q}{\va}$ and $\actinp{\s}{\p}{\q}{\va}$ 
for output and input of value $\va$
from $\p$ to $\q$ on session $\s$. Bound output values can
be shared channels or session endpoints (delegation).
$\actsel{\s}{\p}{\q}{l}$ and $\actbra{\s}{\p}{\q}{l}$ define
the select and branch, respectively. Label $\tau$ is
the unobserved transition.

Dual label definition is used to define the parallel
rule in the label transition system:
\begin{defi}[Dual Labels]
	\label{def:comp}
We define a duality relation $\comp$ between two labels which is 
generated by the following axioms and synmetric ones:
	\[
		\actout{\s}{\p}{\q}{\va} \comp \actinp{\s}{\q}{\p}{\va} \qquad \quad
		\actbout{\s}{\p}{\q}{\va} \comp \actinp{\s}{\q}{\p}{\va} \qquad \quad
		\actsel{\s}{\p}{\q}{l} \comp \actbra{\s}{\q}{\p}{l}
	\]
\end{defi}
Dual labels are input and output (respectively select and branch)
on the same session channel and on complementary roles. For
example, in $\actout{\s}{\p}{\q}{\va}$ and $\actinp{\s}{\q}{\p}{\va}$,
role $\p$ sends to $\q$ and role $\q$ receives from $\p$.

We define the notion of complete role set, used for
defining session initiation transition semantics later:
\begin{defi}[Complete role set]\rm
	We say the role set $A$ is {\em complete with respect to $n$}
	if $n = \maxset{\A}$ and $\A = \set{1, 2,\dots, n}$.
\end{defi}
A complete role set means that all global protocol 
participants are present in the set.
For example, $\set{1,3,4}$ is not complete, but
$\set{1,2,3,4}$ is with respect to 4 and not complete with number $n >
4$.
We use $\fn{\ell}$ and $\bn{\ell}$ to denote 
a set of free and bound names in $\ell$ and set
$\mathsf{n}(\ell)=\bn{\ell}\cup \fn{\ell}$. 

\paragraph{\bf Labelled transition system for processes\ }
\label{sec:tupedLTS}

\begin{figure}[t]
\[
	\begin{array}{c}
		\begin{array}{lrclllrcl}
			\ltsrule{Req} & \req{\Ia}{\p}{\x} \PP &
        		\trans{\actreqs{\Ia}{\p}{\s}} & \PP \subst{\s\role{\p}}{x}
			& \qquad &
			\ltsrule{Acc} & \acc{\Ia}{\p}{\x} \PP &
        		\trans{\actaccs{\Ia}{\p}{\s}} & \PP \subst{\s\role{\p}}{x}
			\\[2mm]

			\ltsrule{Send} & \out{\s}{\p}{\q}{e} \PP &
        		\trans{\actout{\s}{\p}{\q}{\va}} & \PP \quad (e\downarrow v)
			& \qquad &
			\ltsrule{Rcv} & \inp{\s}{\p}{\q}{\x} \PP &
        		\trans{\actinp{\s}{\p}{\q}{\va}} &  \PP \subst{\va}{x}
			\\[2mm]


			\ltsrule{Sel} & \sel{\s}{\p}{\q}{l} \PP &
		        \trans{\actsel{\s}{\p}{\q}{l}} & \PP
			& \qquad &
			\ltsrule{Bra} & \braP{\s}{\p}{\q}
        		& \trans{\actbra{\s}{\p}{\q}{l_k}} & \PP_k
		\end{array}
		\\[6mm]

		\begin{array}{c}
			\ltsrule{Tau} \quad
			\tree{
				\PP \trans{\ell} \PP' \quad \Q \trans{\ell'} \Q' \quad \ell \comp \ell'
			}{
				\PP \Par \Q \trans{\tau} \newsp{\bn{\ell} \cap \bn{\ell'}}{\PP' \Par \Q'}
			}
			\qquad \quad
		 	\ltsrule{Par} \quad
			\tree{
				\PP \trans{\ell} \PP' \quad \bn{\ell} \cap \fn{\Q} = \es
			}{
				\PP \Par \Q \trans{\ell} \PP' \Par \Q
			}
			\\[6mm]

		 	\ltsrule{Res} \ 
			\tree{
				\PP \trans{\ell} \PP' \quad 
n \notin \fn{\ell}
			}{
				\news{n} \PP \trans{\ell} \news{n} \PP'
			}
			\\[6mm]
			\ltsrule{OpenS} \ 
			\tree{
				\PP
                                \trans{\actout{\s}{\p}{\q}{\srole{\s'}{\p'}}} \PP' \quad s\not = s'
			}{
				\news{\s'} \PP \trans{\actbout{\s}{\p}{\q}{\srole{\s'}{\p'}}} \PP'
			}\qquad
		 	\ltsrule{OpenN} \ 
			\tree{
				\PP \trans{\actout{\s}{\p}{\q}{\Ia}} \PP'
			}{
				\news{\Ia} \PP \trans{\actbout{\s}{\p}{\q}{\Ia}} \PP'
			}
			\\[6mm]

		 	\ltsrule{Alpha} \quad
			\tree{
				\PP \acong \PP' \quad \PP' \trans{\ell} \Q' 
			}{
				\PP \trans{\ell} \Q
			}
			\qquad \quad
			\ltsrule{AccPar} \quad
				\tree{
					\PP_1 \trans{\actacc{\Ia}{\A}{\s}} \PP_1' \quad \PP_2 \trans{\actacc{\Ia}{\A'}{\s}} \PP_2' \quad \A \cap \A'= \es
				}{
					\PP_1 \Par \PP_2 \trans{\actacc{\Ia}{\A \cup \A'}{\s}} \PP_1' \Par \PP_2'
				}
		\end{array}
		\\[6mm]

		\begin{array}{lcl}
			\ltsrule{ReqPar} &
			\tree{
				\PP_1 \trans{\actacc{\Ia}{\A}{\s}} \PP_1' \quad \PP_2
			        \trans{\actreq{\Ia}{\A'}{\s}} \PP_2' \quad \A \cap \A'= \es, \ \A \cup \A'  \textinmath{ not complete w.r.t } \max{(A')}
			}{
				\PP_1 \Par \PP_2 \trans{\actreq{\Ia}{\A \cup \A'}{\s}} \PP_1' \Par \PP_2'
		}
		\\[6mm]

			\ltsrule{TauS} &
			\tree{
				\PP_1 \trans{\actacc{\Ia}{\A}{\s}} \PP_1' \quad
	                        \PP_2 \trans{\actreq{\Ia}{\A'}{\s}} \PP_2' \quad
	                        \A \cap \A'= \es, \ \A \cup \A'  \textinmath{ complete w.r.t } \max{(A')}
			}{
				\PP_1 \Par \PP_2 \trans{\tau} \newsp{\s} {\PP_1' \Par \PP_2'}
			}
		\end{array}
	\end{array}
\]
We omit the synmetric case of $\ltsrule{Par}$ and conditionals.
\caption{
	\label{fig-synch-lts}
Labelled transition system for processes 
}
\end{figure}

Figure \ref{fig-synch-lts} defines the untyped labelled transition system.
Rules $\ltsrule{Req}$ and $\ltsrule{Acc}$ define 
that processes $\req{a}{\p}{x} P$ and $\acc{a}{\p}{x} P$
produce the accept and request
labels, respectively for a fresh session $s$ on role $\p$.
Rules $\ltsrule{Send}$ and $\ltsrule{Rcv}$ predict that
processes $\out{s}{\p}{\q}{v} P$ and $\inp{s}{\p}{\q}{x} P$
produce the send and
receive label, respectively for value $v$ from role $\p$ to role $\q$ 
in session $s$. 
Similarly, rules $\ltsrule{Sel}$ and $\ltsrule{Bra}$
define that the select and branch labels are observed on
processes $\sel{s}{\p}{\q}{l} P$ and $\bra{s}{\p}{\q}{l_i: P_i}$ respectively.

The last three rules collect and 
synchronise the multiparty participants together. 
Rule $\ltsrule{AccPar}$ accumulates the accept participants 
and records them into role set $A$. 
Rule $\ltsrule{ReqPar}$ accumulates the accept participants and
the request participant into role set $A$. 
Note that the request 
action role set always includes the maximum 
role number among the participants. 

Finally, rule $\ltsrule{TauS}$ checks that a role set
is complete (i.e.~all roles are present), 
thus a new session can be created under the
$\tau$-action (synchronisation).
Other rules follow the usual inductive rules for the
$\pi$-calculus labelled transition system.

Rule $\ltsrule{Tau}$ synchronises two processes that
exhibit dual labels, while rules $\ltsrule{Par}$ and  
$\ltsrule{Res}$ close the labelled transition system under 
the parallel composition and name restriction
operators. Note that $\ltsrule{Par}$ allows the output transition to the role
$\q$ even the endpoint at $\q$ is in $Q$. This will be disallowed by 
the environment LTS defined in Figure~\ref{fig-synch_envtrans} later. 

Rules $\ltsrule{OpenS}$ and $\ltsrule{OpenN}$
are used for name extrusion. Finally, rule $\ltsrule{Alpha}$
closes the LTS under structural congruence. 

We write $\Trans{}$ for the reflexive and
transitive closure of $\trans{}$, $\Trans{\ell}$ for the transitions
$\Trans{}\trans{\ell}\Trans{}$ and $\Trans{\hat{\ell}}$ for $\Trans{\ell}$ if
$\ell\not = \tau$ otherwise $\Trans{}$. 

\paragraph{\bf Typed labelled transition relation}

We define the typed LTS on the basis of the untyped one.
This is realised by defining 
an environment labelled transition system,
defined in Figure~\ref{fig-synch_envtrans}, which
uses the same labels defined for the untyped
LTS.
We write $(\Ga, \De)\trans{\ell} (\Ga', \De')$ as
the notation where an environment $(\Ga, \De)$ 
allows an action $\ell$ to take place, 
resulting in environment $(\Ga', \De')$.

\begin{figure}[t]
\[
	\begin{array}{cc}
\small
		\tree {
			\Gamma(a) = \globaltype{G} \quad s \notin \dom{\De} 
		}{
			(\Ga, \De) \envtrans{\actacc{a}{A}{s}}
			(\Ga, \De \cat \set{\srole{\s}{i}: \proj{G}{i}}_{i \in A})
		}
\ \erule{Req}

\qquad
		\tree {
			\Gamma(a) = \globaltype{G} \quad s \notin \dom{\De} 
		}{
			(\Ga, \De) \envtrans{\actreq{a}{A}{s}}
			(\Ga, \De \cat \set{\srole{\s}{i}: \proj{G}{i}}_{i \in A})
		}
\erule{Acc}
		\\[9mm]

		\tree {
			\Gproves{v}{U} \quad \srole{s}{\q} \notin \dom{\De}\hspace{-1cm}
		}{
			(\Ga, \De \cat \typedrole{s}{\p} \tout{\q}{U} T) \envtrans{\actout{s}{\p}{\q}{v}}
			(\Ga, \De \cat \typedrole{s}{\p} T)
		}
\erule{Send}
\\[9mm]

		\tree {
			\srole{s}{\q} \notin \dom{\De} \quad a\not\in \dom{\Gamma}\hspace{-1cm}
		}{
			(\Ga, \De \cat \typedrole{s}{\p} \tout{\q}{U} T) \envtrans{\actbout{\s}{\p}{\q}{\Ia}}
			(\Ga\cat{a}:{U}, \De \cat \typedrole{s}{\p} T)
		}
\erule{OpenN}
		\\[9mm]

		\tree {
			\Gproves{v}{U} \quad \srole{s}{\q} \notin \dom{\De}
		}{
			(\Ga, \De \cat \typedrole{s}{\p} \tinp{\q}{U} T)
			\envtrans{\actinp{s}{\p}{\q}{v}}
			(\Ga, \De \cat \typedrole{s}{\p} T)
		}
\erule{Recv}
\\[9mm]
		\tree {
			a \not\in \dom{\Gamma} \quad \srole{s}{\q} \notin \dom{\De}
		}{
			(\Ga, \De \cat \typedrole{s}{\p} \tinp{\q}{U} T)
			\envtrans{\actinp{s}{\p}{\q}{a}}
			(\Ga\cat a:U, \De \cat \typedrole{s}{\p} T)
		}
\erule{RecvN}
		\\[9mm]

		\tree {
			\srole{s}{\q} \notin \dom{\De}
		}{
			(\Ga, \De \cat \typedrole{s'}{\p'} T'\cat \typedrole{s}{\p} \tout{\q}{T'} T)
			\envtrans{\actout{s}{\p}{\q}{s'[\p']}}
			(\Ga, \De  \cat \typedrole{s}{\p} T) 
		}
\erule{SendS}
\\[9mm]		\tree {
			\srole{s}{\q} \notin \dom{\De}
		}{
			(\Ga, \De  \cat \typedrole{s}{\p}\tout{\q}{T'} T)
			\envtrans{\actbout{s}{\p}{\q}{\srole{s'}{\p'}}}
			(\Ga, \De \cat \typedrole{s}{\p} T\cat  \set{\typedrole{s'}{\p_i} T_i})
		}
\erule{OpenS}
\\[9mm]

		\tree {
			\srole{s}{\q}, \srole{s'}{\p'}  \notin \dom{\De}
		}{
			(\Ga, \De \cat \typedrole{s}{\p} \tinp{\q}{T'} T)
			\envtrans{\actinp{s}{\p}{\q}{\srole{s'}{\p'}}}
			(\Ga, \De \cat \typedrole{s'}{\p'} T' \cat \typedrole{s}{\p} T)
		}
\erule{RecvS}
\\[9mm]

		\tree {
			\srole{s}{\q} \notin \dom{\De}
		}{
			(\Ga, \De \cat \typedrole{s}{\p} \tselT{\q})
			\envtrans{\actsel{s}{\p}{\q}{l_k}}
			(\Ga, \De \cat \typedrole{s}{\p} T_k)
		}
\erule{Sel}
		\\[9mm]

		\tree {
			\srole{s}{\q} \notin \dom{\De}
		}{
			(\Ga, \De \cat \typedrole{s}{\p} \tbraT{\q})
			\envtrans{\actbra{s}{\p}{\q}{l_k}}
			(\Ga, \De \cat \typedrole{s}{\p} T_k)
		}
\erule{Bra}
\qquad
		\tree {
			\De \red \De' \quad \mbox{or} \quad \De = \De'
		}{
			(\Ga, \De) \envtrans{\tau} (\Ga, \De')
		}
\erule{Tau}
	\end{array}
\]
\caption{
	\label{fig-synch_envtrans}
	Labelled transition system for environments  
}
\end{figure}

The intuition for this definition is that
the observables on session channels occur when 
the corresponding endpoint is not present in the 
linear typing environment $\De$,
and the type of an action's object respects 
the environment $(\Ga, \De)$.
In the case when new names are created or 
received, the environment $(\Ga, \De)$ is extended 
according to the new name.

Rule \erule{Req} says that a reception of a message via
$a$ is possible when $a$'s type $\globaltype{G}$ is recorded into $\Gamma$ 
and the resulting session environment records 
projected types from $G$ ($\set{\srole{\s}{i}: \proj{G}{i}}_{i \in A}$).  
Rule \erule{Acc} is for the send of a message via $a$ and it is dual to the first rule. 
The next four rules are free value output \erule{Send}, bound name output \erule{OpenN}, 
free value input \erule{Recv} and name input \erule{RecvN}.   
The rest of rules are free session output \erule{SendS}, bound session output \erule{OpenS}, and 
session input \erule{RecvS} as well as selection \erule{Sel} and branching \erule{Bra} rules.  
The bound session output \erule{OpenS} records a set of session types $s'[\p_i]$ at opened
session $s'$. 
Rule \erule{Tau} follows the reduction rules for linear
session environment defined
in \S~\ref{subsec:typesoundness}
($\Delta=\Delta'$ is the case for the reduction at hidden sessions).
Note that if $\Delta$ already contains destination ($s[\q]$), the 
environment cannot perform the visible action, but only
the final $\tau$-action. 

The typed LTS requires that a process can perform an untyped
action $\ell$ and that its typing environment $(\Ga, \De)$ can
match the action $\ell$.

\begin{defi}[Typed transition]\rm
\label{def:lts}
	A {\em typed transition relation} is a typed relation
	$\typedprocess{\Gamma_1}{P_1}{\Delta_1} \trans{\ell}\typedprocess{\Gamma_2}{P_2}{\Delta_2}$
	if (1) $P_1 \trans{\ell}P_2$ and 
	(2) $(\Gamma_1,\Delta_1)\trans{\ell}(\Gamma_2, \Delta_2)$
	with $\typedprocess{\Gamma_i}{P_i}{\Delta_i}$.
\end{defi}
%


\subsection{Synchronous multiparty behavioural theory}
We begin with the definition of the typed relation 
as the binary relation over closed, coherent and typed processes.
\begin{defi}[Typed relation]\rm
\label{def:typerel}
	A relation $\mathcal{R}$ as {\em typed relation} if
	it relates two closed, coherent typed terms, written:
\[
	\Gtprocess{P_1}{\De_1}\ \mathcal{R}\ \Gtprocess{P_2}{\De_2}
\]
	We often write $\Gtprocess{P_1}{\De_1}\ \mathcal{R}\ \noGtprocess{P_2}{\De_2}$.
\end{defi}
Next we define the {\em barb} \cite{DBLP:journals/tcs/AmadioCS98} with respect to the judgement:
\begin{defi}[Barbs]\rm
	We write:
	\begin{itemize}
		\item	$\Gtprocess{P}{\De} \barbout{s}{\p}{\q}$ if
			$P \equiv \newsp{\vect{a}\vect{s}}{\out{\s}{\p}{\q}{v} R \Par Q}$ with 
$s \notin \vect{s}$ and $\srole{s}{\q} \notin \dom{\De}$.
		\item	$\Gtprocess{P}{\De} \barbreq{a}$ if 
			$P \equiv \newsp{\vect{a}\vect{s}}{\req{a}{n}{x} R \Par Q}$ with 
			$a \notin \vect{a}$. Then we write $m$ for either $a$ or $s[\p][\q]$.
	\end{itemize}
	We define $\Gtprocess{\PP}{\De} \Barb{m}$ if there exists 
	$Q$ such that  $\PP \Red Q$ and $\Gtprocess{Q}{\De'} \barb{m}$.
\end{defi}

The set of contexts is defined as follows: 
\[
\begin{array}{lcl}
	C &\bnfis& \hole \bnfbar C \Par P \bnfbar P \Par C \bnfbar \news{n} C \bnfbar 
	\IfThenElse{\e}{C}{C'} \bnfbar \rec{X}{C} \bnfbar\\
	& & \bout{\s}{\va}{C} \bnfbar \binp{\s}{\x}{C} \bnfbar \bsel{\s}{l}{C} \bnfbar
	\bbras{\s}{\set{l_i:C_i}_{i \in I}} \bnfbar \breq{\Ia}{\x} C \bnfbar \bacc{\Ia}{\x} C
\end{array}
\]
\noi $\context{C}{\PP}$ substitutes 
process $\PP$ for each hole ($\hole$) in context $C$.
We say $\context{C}{\PP}$ is {\em closed} if
 $\fv{\context{C}{\PP}}=\emptyset$. 
%
\begin{defi}[Linear environment convergence]\rm 
	We write $\De_1 \bistyp \De_2$ if there exists $\De$ such that
	$\De_1 \typingred^\ast \De$ and $\De_2 \typingred^\ast \De$. 
\end{defi}
We expect processes with the {\em same} behaviour to have linear environments
that converge since they should follow the same session behaviour.

Note that the convergence condition is not related with the
fact that reduction on a linear environment is non-deterministic 
(see Definition~\ref{def:lin_red}) and two 
linear environments $\De_1$ and $\De_2$ which are non-deterministic 
may converge. 

We define the reduction-closed congruence based on the 
definition of barb and \cite{HondaKYoshida95}. 

\begin{defi}[Reduction-closed congruence]\rm 
\label{def:src}
	A typed relation $\relfont{R}$ is a {\em reduction-reduction congruence}
	if it satisfies the following conditions for each 
	$\rel{\Gtprocess{P_1}{\De_1}}{R}{\noGtprocess{P_2}{\De_2}}$:
	\begin{enumerate}
		\item	$\De_1 \bistyp \De_2$ 
		\item	$\Gtprocess{P_1}{\De_1} \Barb{m}$ iff 
			$\Gtprocess{P_2}{\De_2} \Barb{m}$.
		\item	
			\begin{itemize}
				\item	$P_1 \Red P_1'$ implies that there exists $P_2'$
					such that  $P_2 \Red P_2'$ and
					$\rel{\Gtprocess{P_1'}{\De_1'}}{R}{\noGtprocess{P_2'}{\De_2'}}$
					with $\De_1' \bistyp \De_2'$. 
				\item	the symmetric case.
			\end{itemize}
		\item	For all closed context $\C$ and for all
                  $\De_1'$ and $\De_2'$,  
			such that $\Gtprocess{\Ccontext{P_1}}{\De_1'}$ and 
			$\Gtprocess{\Ccontext{P_2}}{\De_2'}$ then
                        $\De_1'\bistyp \De_2'$ and  
			$\rel{\Gtprocess{\Ccontext{P_1}}{\Delta_1'}}{R}{\Gtprocess{\Ccontext{P_2}}{\Delta_2'}}$. 
\end{enumerate}
	The union of all reduction-closed congruence relations is denoted as 
$\cong^{s}$.
\end{defi}

We now define the synchronous multiparty session bisimilarity as
the greatest fixed point on the weak labelled transition relation
for the pairs of co-directed processes.

\begin{defi}[Synchronous multiparty session bisimulation]\rm
\label{def:swb}
	A typed relation $\mathcal{R}$ over closed processes 
	is a (weak) {\em synchronous multiparty session bisimulation} 
	or often a {\em synchronous bisimulation} if,  whenever 
	$\Gtprocess{P_1}{\De_1}\ \mathcal{R}\
        \noGtprocess{P_2}{\De_2}$, it holds:
	\begin{enumerate}
		\item	$\Gtprocess{P_1}{\De_1} \trans{\ell} \tprocess{\Ga'}{P_1'}{\De_1'}$ implies 
			$\Gtprocess{P_2}{\De_2} \Trans{\hat{\ell}} \tprocess{\Ga'}{P_2'}{\De_2'}$ such that 
			$\tprocess{\Ga'}{P_1'}{\De_1'}\ \mathcal{R}\ \noGtprocess{P_2'}{\De_2'}$. 
		\item The symmetric case.
	\end{enumerate}
	The maximum bisimulation exists which we call {\em synchronous 
	bisimilarity}, denoted by $\wb^s$. We sometimes leave
	environments implicit, writing e.g.~$P\wb^s Q$. 

\end{defi}
\iflong\else
By using the following lemma, we derive 
Theorem~\ref{the:scoincidence}.
See Appendix \ref{app:Bisimulation_typing_relation}.
\fi
\begin{lem}
	\label{lemma:Bisimulation_typing_relation}
	If $\Gtprocess{P_1}{\De_1} \wb^s \noGtprocess{P_2}{\De_2}$ then
	$\De_1 \bistyp \De_2$.
\end{lem}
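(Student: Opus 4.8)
The plan is to construct a common reduct of $\De_1$ and $\De_2$ by replaying the bisimulation game and reading the resulting moves off at the type level, via the convergence relation $\bistyp$ of Definition~\ref{def:lin_red}. The first ingredient I would set up is the tight correspondence between the two notions of reduction. On the one hand, rule \erule{Tau} of the environment LTS (Figure~\ref{fig-synch_envtrans}) shows that every type-level step $\De_i \typingred \De_i'$ is matched by a transition $(\Ga,\De_i)\trans{\tau}(\Ga,\De_i')$, and that the only other behaviour of a $\tau$ on the environment is $\De_i=\De_i'$. On the other hand, subject reduction (Theorem~\ref{the:subject}) lets me read a chain $\De_i\typingred^\ast\De_i'$ off any internal process reduction $P_i\Red P_i'$. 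Hence a run of $\tau$-closed typed transitions on one side corresponds exactly to a reduction sequence on its linear environment, and $\De_1\bistyp\De_2$ amounts to reachability of a common environment along such runs.

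Next I would use the bisimulation to align the observable parts of the two environments. The key feature of Definition~\ref{def:lts} is that a visible session action is enabled only when the complementary endpoint is \emph{absent} from $\De_i$, that performing it consumes precisely the head prefix recorded in $\De_i$, and that the bisimulation clause forces the two processes to keep the shared environment $\Ga$ \emph{identical} after every matched step. Consequently, whenever $\Gtprocess{P_1}{\De_1}$ fires a visible action $\ell$ on a free endpoint, weak bisimilarity supplies a matching run $\Gtprocess{P_2}{\De_2}\Trans{\hat{\ell}}\,\cdot$, whose internal part is, by the correspondence above, a reduction $\De_2\typingred^\ast\De_2''$ that brings $\De_2''$ to the point where the same prefix is exposed with the same carried type (the $\Ga$-synchronisation handles the bound-name and delegation labels). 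Iterating this over the enabled visible actions shows that the free endpoints of $\De_1$ and $\De_2$ carrying non-trivial (non-$\tinact$) behaviour agree up to internal reduction, i.e.\ the two environments share the same observable skeleton.

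Finally I would discharge the internal, complete sessions: for a session all of whose endpoints are present, coherency $\coherent{\De_i}$ guarantees that Definition~\ref{def:lin_red} permits firing the matching send/receive or select/branch pair, so I would reduce $\De_1$ and $\De_2$ along identical choices until only $\tinact$ components remain, which are identified by rule \trule{Complete}; together with the aligned observable skeletons this produces a single $\De$ with $\De_1\typingred^\ast\De$ and $\De_2\typingred^\ast\De$, i.e.\ $\De_1\bistyp\De_2$. The main obstacle is the alignment step of the middle paragraph: it cannot be phrased as a naive induction on the size of $\De_i$, because recursive local types and the non-confluent branch rule of Definition~\ref{def:lin_red} force the two reduction paths to be chosen in lockstep, so the argument must proceed coinductively, using that the pair of continuations remains in the bisimulation after each matched label. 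A secondary subtlety, resolved by the correspondence of the first paragraph, is that $\bistyp$ is a purely type-level statement whereas a typed $\tau$-transition additionally requires the underlying process to move.
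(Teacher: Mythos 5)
Your first paragraph is sound and is essentially the machinery the paper itself uses for the $\tau$-case (Theorem~\ref{the:subject} together with rule \erule{Tau} of Figure~\ref{fig-synch_envtrans}), but the construction you build on top of it fails. In your final step you propose to discharge complete sessions by reducing both environments ``along identical choices until only $\tinact$ components remain.'' With recursive local types this is impossible: a coherent complete session such as $\typedrole{s}{1} \trec{t}{\tout{2}{U} \tvar{t}} \cat \typedrole{s}{2} \trec{t}{\tinp{1}{U} \tvar{t}}$ reduces forever under Definition~\ref{def:lin_red} and never reaches $\tinact$, yet the lemma must cover such environments. Normalization is also not what $\bistyp$ asks for: if $\De_1$ and $\De_2$ differ on a complete session only because one process has internally advanced further, the correct move is to reduce the lagging environment until it catches up with the other, not to drive both to end types. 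Moreover $\trule{Complete}$ is a typing rule for $\inact$, not an operation on environments; entries of the form $\typedrole{s}{\p}\tinact$ are never removed by $\typingred$, so your ``skeleton plus normalization'' plan would additionally have to show that the residual domains of $\De_1$ and $\De_2$ coincide, which your alignment step does not provide.

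The deeper gap is in your middle paragraph. Probing $P_1$ with each enabled visible action and reading the exposed prefix off the matching weak transition of $P_2$ does not allow you to ``iterate over the enabled visible actions'' on the fixed pair: each probe moves the game to a different derivative, and the internal type reductions of $\De_2$ needed to expose the prefixes for different probes need not be jointly realizable, precisely because reduction in Definition~\ref{def:lin_red} is non-confluent. You concede this yourself and say the argument ``must proceed coinductively,'' but that coinductive statement is exactly the proof obligation, and you never formulate or discharge it. The paper does: it proves, by case analysis on $\ell$, that if $\De_1 \bistyp \De_2$ holds (coinductive hypothesis, with common reduct $\De$) and the two judgements take matched $\hat{\ell}$-transitions, then $\De_1' \bistyp \De_2'$. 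Its key observation---missing from your proposal---is that a prefix fired by a visible action is \emph{frozen} in the common reduct: since $\srole{s}{\q} \notin \dom{\De}$, no environment reduction can consume $\typedrole{s}{\p} \tout{\q}{U} T$, so $\De$ itself has the shape $\typedrole{s}{\p} \tout{\q}{U} T \cat \De''$, and consuming the same prefix in $\De_1$, $\De_2$ and $\De$ yields the new common reduct (the accept/request cases extend $\De$ with the projections of $\Ga(a)$, and the $\tau$-case is your first paragraph). With that invariant-preservation step, your first two paragraphs become the $\tau$- and visible-label cases of the paper's proof, and your third paragraph is both unnecessary and unsalvageable as stated.
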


\begin{proof}
	The proof uses the co-induction method and can be found
	in Appendix \ref{app:Bisimulation_typing_relation}.
\end{proof}

%
%
\begin{thm}[Soundness and completeness]
	\label{the:scoincidence}
	$\cong^{s} \ = \ \wb^s$. 
\end{thm}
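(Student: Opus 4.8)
The plan is to prove the two inclusions $\wb^s \subseteq \cong^{s}$ (soundness) and $\cong^{s} \subseteq \wb^s$ (completeness) separately. For soundness I would check that $\wb^s$ satisfies all four clauses of Definition~\ref{def:src}. Clause~(1), $\De_1 \bistyp \De_2$, is exactly Lemma~\ref{lemma:Bisimulation_typing_relation}. For the barb clause~(2), I would first relate barbs to weak visible actions of the typed LTS: $\Gtprocess{P}{\De}\barbout{s}{\p}{\q}$ holds iff $\Gtprocess{P}{\De}$ can weakly perform an output with that subject (and dually for $\barbreq{a}$), so that since bisimilar processes match each other's visible actions up to $\Trans{}$, the predicates $\Barb{m}$ are preserved in both directions. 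Clause~(3) (reduction closure) follows because each reduction $P \red P'$ is matched by a typed $\tau$-transition via rule \erule{Tau} of Figure~\ref{fig-synch_envtrans}, whence weak $\tau$-matching of the bisimulation yields matching closure under $\Red$.

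The substantive part of soundness is clause~(4), congruence: I must show $\wb^s$ is preserved by every context former of the grammar for $C$. The cleanest route is a congruence lemma proved operator by operator, with parallel composition as the crux. Given $P_1 \wb^s P_2$, I would exhibit the typed relation $\{(\Gtprocess{P_1 \Par R}{\cdot}, \Gtprocess{P_2 \Par R}{\cdot})\}$, closed under the remaining context formers, and prove it is a synchronous bisimulation by case analysis on the origin of each action of $P_i \Par R$: a move of $P_i$, a move of $R$, or a synchronisation formed by $\ltsrule{Tau}$, $\ltsrule{TauS}$, $\ltsrule{AccPar}$ or $\ltsrule{ReqPar}$. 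The session-typed setting makes this delicate, since I must track how the environment $(\Ga,\De)$ splits between the two components and verify that the matching transition respects the side conditions of the environment LTS (notably $\srole{s}{\q}\notin\dom{\De}$ and the completeness of the role set). A bisimulation-up-to-context argument keeps the candidate relation manageable through the prefix and restriction cases.

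For completeness I would show that $\cong^{s}$ is itself a synchronous bisimulation, i.e.\ that reduction-closed congruence can mimic every labelled transition. The standard device is \emph{definability}: for each label $\ell$ I exhibit a typed observer context $C_\ell$ together with a fresh success name whose barb $\Barb{\suc}$ fires precisely when the tested process performs $\ell$. Concretely, given $\ftby{\Ga}{P_1}{\De_1}{\ell}{\Ga'}{P_1'}{\De_1'}$ with $\ell\neq\tau$, I compose $P_1$ with a dual prefix matching $\ell$ (an input against an output, a branch against a select, an accept/request partner against a session-initiation label, and so on) followed by a signal on the fresh success name; $C_\ell$ is chosen typable against the residual environment and so that the \emph{only} way for $C_\ell[P_1]$ to reach $\Barb{\suc}$ is by synchronising with the $\ell$-action. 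Since $P_1 \cong^{s} P_2$, clauses~(2)--(4) of Definition~\ref{def:src} force $C_\ell[P_2]\Barb{\suc}$, and a barb/transition analysis extracts a matching $\Gtprocess{P_2}{\De_2}\Trans{\hat{\ell}}\noGtprocess{P_2'}{\De_2'}$, with the residuals again related by $\cong^{s}$ after using congruence to peel off the observer.

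The main obstacle in both directions is the bound-output and delegation labels $\actbout{\s}{\p}{\q}{\Ia}$ and $\actbout{\s}{\p}{\q}{\srole{\s'}{\p'}}$ together with the multiparty session-initiation labels $\actreq{\Ia}{\A}{\s}$ and $\actacc{\Ia}{\A}{\s}$, which are enabled only relative to the environment $(\Ga,\De)$. For soundness these complicate the parallel-composition case, since matching transitions must honour the environment LTS side conditions and the completeness of $\A$. For completeness they complicate definability, because $C_\ell$ must reconstruct exactly the role set $\A$ and the private name or endpoint being extruded while remaining typable and while guaranteeing that no spurious reduction produces $\Barb{\suc}$. I would isolate these cases as separate lemmas and lean on Definition~\ref{def:lts}, which guarantees that these delicate actions are observable only when the environment permits them.
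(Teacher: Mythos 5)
Your proposal is correct and follows essentially the same route as the paper: the paper proves the governed analogue (Theorem~\ref{the:gcoincidence}) by exactly your two steps --- soundness via a congruence lemma whose crux is the parallel-composition case analysis (Lemma~\ref{lem:bis-cong}(1)), and completeness via Hennessy-style definability testers signalling on a fresh success channel together with an extrusion lemma to peel off the observer (Lemmas~\ref{lem:definibility} and~\ref{lem:extruction}) --- and then obtains Theorem~\ref{the:scoincidence} as a simplification of that proof. Your appeal to Lemma~\ref{lemma:Bisimulation_typing_relation} for clause~(1) and your worry about the bound-output, delegation and role-set labels (handled in the paper by the deadlocked fresh-accept guard in the testers and by completing the role set with extra accept processes) match the paper's treatment.
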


\begin{proof}
	The proof is a simplification of the proof of 
	Theorem \ref{the:gcoincidence} in 
	Appendix \ref{app:bis-complete}. 
\end{proof}

\begin{exa}[Synchronous multiparty bisimulation] 
	\label{ex:swb}

	We use the running example from the introduction,
	\S~\ref{sec:intro},
	for a demonstration of the bisimulation semantics 
	developed in this section.
	In the introduction we considered transition under
	the untyped setting \cite{SangiorgiD:picatomp}. 
        If we follow the typed labelled
	transition system developed in this section we obtain
	similar interaction patterns. 

	Recall the definition of processes $P_1$, $P_2$, $P_3$ and $R_2$ from
	\S~\ref{sec:intro}. The linear types for these processes
	are empty since they have no free session names.
	\begin{eqnarray*}
		\Ga \proves P_1 \hastype \es, \quad 
		\Ga \proves P_2 \hastype \es, \quad 
		\Ga \proves P_3 \hastype \es, \quad \text{and}\quad 
		\Ga \proves R_2 \hastype \es
	\end{eqnarray*}
where $\Gamma=a:G_a\cat b:G_b$ with 
\begin{eqnarray*}
	G_a &=& \valuegt{1}{3}{U}\valuegt{2}{3}{U} \inactgt \\
	G_b &=& \valuegt{1}{2}{U} \inactgt
\end{eqnarray*}
	We follow the untyped LTS from Figure~\ref{fig-synch-lts} to obtain
the following processes 
	by $\ltsrule{Acc}$ and $\ltsrule{Req}$:
	\begin{eqnarray*}
		P_1 &\trans{\actacc{a}{\set{1}}{s_a}}& P_1' = \acc{b}{1}{y} \out{s_a}{1}{3}{v} \sout{y}{2}{w} \inact\\
		P_2 &\trans{\actacc{a}{\set{2}}{s_a}}& P_2' = 
			\req{b}{2}{y} (\sinp{y}{1}{z}  \inact \Par \out{s_a}{2}{3}{v} \inact) \\
		P_3 &\trans{\actreq{a}{\set{3}}{s_a}}& P_3' = \inp{s_a}{3}{1}{z} \inp{s_a}{3}{2}{y} \inact\\
		R_2 &\trans{\actacc{a}{\set{2}}{s_a}}& R_2' = 
			\req{b}{2}{y} (\sinp{y}{1}{z} \out{s_a}{2}{3}{v} \inact)
	\end{eqnarray*}
	\noi The corresponding environment transitions are defined as:
	\begin{eqnarray*}
		(\Ga, \es) &\trans{\actacc{a}{\set{1}}{s_a}}& (\Ga, \ \typedrole{s_a}{1} \tout{3}{U} \tinact)\\ 
		(\Ga, \es) &\trans{\actacc{a}{\set{2}}{s_a}}& (\Ga, \ \typedrole{s_a}{2} \tout{3}{U} \tinact)\\
		(\Ga, \es) &\trans{\actacc{a}{\set{3}}{s_a}}& (\Ga, \ \typedrole{s_a}{3} \tinp{1}{U} \tinp{2}{U} \tinact)\\
	\end{eqnarray*}
	\noi 
We can now observe the typed transitions as follows:
	\begin{eqnarray*}
		\tprocess{\Ga}{P_1}{\es} &\trans{\actacc{a}{\set{1}}{s_a}}& \tprocess{\Ga}{P_1'}{\typedrole{s_a}{1} \tout{3}{U} \tinact} \\
		\tprocess{\Ga}{P_2}{\es} &\trans{\actacc{a}{\set{2}}{s_a}}& \tprocess{\Ga}{P_2'}{\typedrole{s_a}{2} \tout{3}{U} \tinact}\\
		\tprocess{\Ga}{P_3}{\es} &\trans{\actacc{a}{\set{3}}{s_a}}& \tprocess{\Ga}{P_3'}{\typedrole{s_a}{3} \tinp{1}{U} \tinp{2}{U} \tinact}
	\end{eqnarray*}
	By $\ltsrule{AccPar}$, we have
	\[
		P_1 \Par P_2 \trans{\actacc{a}{\set{1, 2}}{s_a}} P_1' \Par P_2'
	\]
By $\ltsrule{ReqPar}$,	another process combination would invoke
	\[
		P_1 \Par P_3 \trans{\actreq{a}{\set{1, 3}}{s_a}} P_1' \Par P_3'
	\]
	If we compose the missing process in either of both
	processes, 
	the role set $\set{1,2, 3}$ is now complete with respect to $3$, so that 
	by synchronisation $\ltsrule{TauS}$ we may observe:
	\[
		P_1 \Par P_2 \Par P_3 \trans{\tau} 
\newsp{s_a}{P_1' \Par P_2' \Par P_3'}
	\]
	Furthermore, we can also observe the corresponding typed transition,
	since $(\Ga, \De)$ can always perform a $\tau$ action:
	\[
		\Gtprocess{P_1 \Par P_2 \Par P_3}{\es} \trans{\tau}
		\Gtprocess{\newsp{s_a}{P_1' \Par P_2' \Par P_3'}}{\es}
	\]

        Now we demonstrate the intuition given in \S~\ref{sec:intro}, i.e.~the bisimulation developed in this section cannot equate  
$P_1 \Par P_2$ and $P_1 \Par R_2$. 
We have the following transitions:
	\[
		\begin{array}{l}
			\Gtprocess{P_1' \Par P_2'}{\De_0} \trans{\tau}
			\\
			\hspace{1cm}
			\Gtprocess{\newsp{s_b}{\out{s_a}{1}{3}{v} \out{s_b}{1}{2}{w} \inact \Par
			\inp{s_b}{2}{1}{z} \inact \Par \out{s_a}{2}{3}{v} \inact} = Q_1}{\De_0} \\
			\Gtprocess{P_1' \Par R_2'}{\De_0} \trans{\tau}
			\\
			\hspace{1cm}
			\Gtprocess{\newsp{s_b}{\out{s_a}{1}{3}{v} 
                        \out{s_b}{1}{2}{w} \inact \Par
			\inp{s_b}{2}{1}{z} \out {s_a}{2}{3}{v} \inact} = Q_2}{\De_0}
		\end{array}
	\]
	\noi with $\De_0 = \typedrole{s_a}{1} \tout{3}{U} \tinact \cat \typedrole{s_a}{2} \tout{3}{U} \tinact$.

	From this point, we can check:
	\[
		\Gtprocess{Q_1}{\De_0} \not\wb^s \Gtprocess{Q_2}{\De_0} 
	\]
	\noi due to the fact that $(\Ga, \De_0) \trans{\actout{s_a}{2}{3}{v}}$ and
	\begin{eqnarray*}
		\Gtprocess{Q_1}{\De_0} \trans{\actout{s_a}{2}{3}{v}} & &\\
		\Gtprocess{Q_2}{\De_0} \stackrel{\actout{s_a}{2}{3}{v}}{\not\longrightarrow}
	\end{eqnarray*}

	The next result distinguishes the semantics of the typed
	equivalence semantics developed in this section from the untyped equivalence semantics \cite{SangiorgiD:picatomp}.
	\[
		\Gtprocess{Q_1\Par P_3'}{\Delta_0 \cat \typedrole{s_a}{3} \tinp{1}{U}
		\tinp{2}{U} \tinact} \wb^s \noGtprocess{\inact}{
		\typedrole{s_a}{a}\tinact \cat \typedrole{s_a}{2}\tinact \cat \typedrole{s_a}{3}\tinact}
	\]
	\noi since
	\[
		(\Gamma,\Delta) \stackrel{\ell}{\not\longrightarrow}
	\]
	for any $\ell\not=\tau$
	with $\Delta=\Delta_0 \cat \typedrole{s_a}{3} \tinp{1}{U} \tinp{2}{U} \tinact$ 
	(by the condition of \erule{Send} in Figure~\ref{fig-synch_envtrans}). 

	However
	the untyped labelled transition semantics do not equate
	the two processes $Q_1\Par P_3' \not\wb \inact$
	since $Q_1\Par P_3'\trans{\actout{\s_a}{1}{3}{\va}}$. 
\end{exa}


\section{Globally Governed Behavioural Theory}
\label{sec:governed}

We introduce the semantics for globally governed behavioural theory.
In the previous section, the local typing ($\De$) is used to 
constrain the untyped LTS and give rise to a local typed LTS.
In a multiparty distributed environment, 
communication follows the global protocol,
which controls both an observed process and its observer. 
The local typing is not sufficient to maintain the consistency 
of transitions of a
process with respect to a global protocol. In this section we refine
the environment LTS with a {\em global environment} $E$ to give a more
fine-grained control over the LTS of the processes.
We then show a bisimulation-based reasoning technique
which equates 
the two processes $P_1 \ | \ P_2$ and $P_1 \ | \ R_2$ in \S~\ref{sec:intro} 
by the governed bisimulation, which cannot be equated by the standard 
synchronous typed bisimulation $\wb^s$ studied in the previous section. 

%
%
\subsection{Global environments and configurations}
\label{subsec:globalenv}

We define a {\em global environment} ($E,E',...$) as
a mapping from session names to global types.
\[
	E \bnfis E \cat \s:\G \bnfbar \es
\]
We extend the projection definition on global environments $E$ as follows:
\[
	\projset{E} = \bigcup_{s:G \in E} \projset{s:G}
\]
where $\projset{s:G}$ associates the projection of type $G$ with session name $s$
as follows: $\projset{\typed{s} \G} = \set{\typedrole{\s}{\p} \proj{\G}{\p}\setbar \p \in \roles{\G}}$. 
Note that $E$ is a mapping from a session channel to a global type,
while $\Gamma$ is a mapping from a shared channel to a global type. 

We define a labelled reduction relation over
global environments
which corresponds to $\Delta_0 \red \Delta'_0$
defined in \S~\ref{subsec:typesoundness}.
We use the labels:
\[
	\lambda \bnfis \actval{s}{\p}{\q}{U} \bnfbar \actgsel{s}{\p}{\q}{l}
\]
to annotate reductions over global environments.
We define $\outp{\lambda}$ and $\inpp{\lambda}$ as:
\[
\begin{array}{rcrcl}
	\outp{\actval{s}{\p}{\q}{U}} &=& \outp{\actgsel{s}{\p}{\q}{l}} &=& \p\\
	\inpp{\actval{s}{\p}{\q}{U}} &=& \inpp{\actgsel{s}{\p}{\q}{l}} &=& \q\\
\end{array}
\]
and write 
$\p \in \ell$ if $\p \in \set{\outp{\ell}} \cup \set{\inpp{\ell}}$. 
 
\begin{defi}[Global environment reduction]
	\label{def:gltsrules}
	We define the relation $E \trans{\lambda} E'$ as the smallest 
	relation generated by the following rules:
	\[
	\begin{array}{c}
		\set{s:\valuegt{\p}{\q}{U} G} \trans{\actval{s}{\p}{\q}{U}} \set{s:G} \quad \mrule{Inter}
		\qquad
		\set{s:\selgt{\p}{\q}{l_i:G_i}_{i \in I}} \trans{\actgsel{s}{\p}{\q}{l_k}} \set{s:G_k} \quad \mrule{SelBra}
		\\[3mm]

		\tree{
			\set{s:G} \trans{\lambda} \set{s:G'} \quad \p, \q \notin \lambda
		}{
			\set{s: \valuegt{\p}{\q}{U} G} \trans{\lambda} \set{s: \valuegt{\p}{\q}{U} G'}
		} \ \mrule{IPerm}
		\\[7mm]
		\tree{
			\forall i \in I \quad \set{s:G_i} \trans{\lambda} \set{s:G_i'}
        		\quad \p, \q \notin \lambda
		}{
			\set{s: \selgt{\p}{\q}{l_i: G_i}_{i\in I}} \trans{\lambda}
		        \set{s: \selgt{\p}{\q}{l_i:G_i'}_{i\in I}}
		}\ \mrule{SBPerm}
		\qquad 
		\tree{
			E \trans{\lambda} E'
		}{
			E \cat E_0 \trans{\lambda} E' \cat E_0
		}\ \mrule{GEnv}
	\end{array}
	\]
\end{defi}
\noindent We often omit the label $\lambda$ by writing
$\red$ for $\trans{\lambda}$ and $\red^*$ for $(\trans{\lambda})^*$.
Rule $\mrule{Inter}$ is the axiom for the input and output interaction between two
parties; rule $\mrule{SelBra}$ reduces on the select/branch choice;
Rules $\mrule{IPerm}$ and $\mrule{SBPerm}$ define the case where we can permute 
action $\lambda$ to be performed under $\p \to \q$ if $\p$ and
$\q$ are not related to the participants in $\lambda$. 
Note that in our synchronous
semantics, we can permute two actions with no relevance in the participating
roles without changing the interaction semantics of the entire global
protocol.
Finaly rule $\mrule{GEnv}$ is a congruence rule over global
environments.

As a simple example of the above LTS, consider the global type:
\[
	s: \valuegt{\p}{\q}{U_1} \selgt{\p'}{\q'}{l_1: \inactgt, l_2: \valuegt{\p'}{\q'}{U_2} \inactgt}
\]
Since $\p, \q, \p', \q'$ are pairwise distinct,
we can apply the second and third rules to obtain:
\[
	s: \valuegt{\p}{\q}{U_1} \selgt{\p'}{\q'}{l_1: \inactgt, l_2: \valuegt{\p'}{\q'}{U_2} \inactgt}
	\trans{\actgsel{\s}{\p'}{\q'}{l_1}}
	s: \valuegt{\p}{\q}{U_1} \inactgt
\]
%


\noindent Next we introduce the {\em governance judgement} which controls the
behaviour of processes by the global environment.
\begin{defi}[Governance judgement]
	Let $\Gtprocess{\PP}{\De}$ be coherent.
	We write $\tprocess{E, \Ga}{\PP}{\De}$ if
	$\exists E'$ such that $E \red^* E'$ and $\De \subseteq \projset{E'}$.
\end{defi}
The global environment $E$ records the knowledge of both the
environment ($\De$) of the observed process $P$ and the environment
of its {\em observer}.
The side conditions ensure that $E$ is coherent with $\Delta$: there
exist  $E'$ reduced from $E$ whose projection should cover the environment
of $P$ since $E$ should include the observer's information together with
the observed process information recorded into $\Delta$. The reason 
that $E$ is allowed to have a few reduction steps behind the local 
environment $\Delta$ is that the observer has more informative 
global knowledge (in the form of a global type) 
before the moment the session was actually reduced to $\Delta$ which 
coincides with the projection of $E'$. 

Next we define the LTS for well-formed environment configurations.
\begin{defi}[Environment configuration]
\label{def:env_conf}
	We write $(E, \Ga, \De)$ if
	$\exists E'$ such that $E \red^* E'$ and $\De \subseteq
        \projset{E'}$.
\end{defi}
The up-to reduction requirement on $E$ allows a global
environment $E$ to configure linear environments $\De$ that
also differ up-to reduction. Specificaly a global
environment $E$ configures pairs of linear environments
that type equivalent processes.

We refined the reduction
relation on $\Delta$ in \S~\ref{subsec:typesoundness} as a labelled
reduction relation on $\Delta$, which is used for defining
a labelled transition system over environment configurations:
\begin{defi}[Linear typing labelled reduction]
$ $
\begin{enumerate}
	\item	$\set{\typedrole{s}{\p} \tout{\q}{U} T \cat \typedrole{s}{\q} \tinp{\p}{U} T'}
		\trans{\actval{\s}{\p}{\q}{U}}
		\set{\typedrole{s}{\p} T \cat \typedrole{s}{\q} T'}$.

	\item	$\set{\typedrole{s}{\p} \tsel{\q}{l_i:T_i}_{i \in I} \cat
		\typedrole{s}{\q} \tbra{\p}{l_j:T'_j}_{j \in J}}
		\trans{\actgsel{s}{\p}{\q}{l_k}}
		\set{\typedrole{s}{\p} T_k \cat \typedrole{s}{\q} T'_k}\ I \subseteq J, k \in I$.

	\item	$\De \cup \De' \trans{\lambda} \De \cup \De''$ if $\De' \trans{\lambda} \De''$.
\end{enumerate}
\end{defi}
Figure~\ref{fig:ELTS} defines an LTS over environment configurations that refines
the LTS over environments
(i.e $(\Ga, \De) \trans{\ell} (\Ga', \De')$) in \S~\ref{sec:tupedLTS}.

\begin{figure}[t]
\[
\small
	\begin{array}{c}
		\tree{
			s \notin \dom{E} \quad
			(\Ga, \De_1) \trans{\actacc{a}{A}{s}} (\Ga, \De_2)
		}{
			(E, \Ga, \De_1) \trans{\actacc{a}{A}{s}} (E \cat s:G, \Ga, \De_2)
		}
		\quad \eltsrule{Acc}
		\qquad \quad
		\tree{
		s \notin \dom{E} \quad
		(\Ga, \De_1) \trans{\actreq{a}{A}{s}} (\Ga, \De_2)
		}{
		(E, \Ga, \De_1) \trans{\actreq{a}{A}{s}} ( E \cat s:G, \Ga, \De_2)
		}
		\quad \eltsrule{Req}
		\\[9mm]

		\tree{
		(\Ga, \De_1) \trans{\actout{s}{\p}{\q}{v}} (\Ga, \De_2) \quad 
		E_1 \trans{\actval{s}{\p}{\q}{U}} E_2
		}{
		(E_1, \Ga, \De_1) \trans{\actout{s}{\p}{\q}{v}} (E_2, \Ga, \De_2)
		}  
		\quad \eltsrule{Out}
		\\[9mm]

		\tree{
			(\Ga, \De_1) \trans{\actinp{s}{\p}{\q}{v}} (\Gacat v:U, \De_2) \quad 
			E_1 \trans{\actval{s}{\q}{\p}{U}} E_2
		}{
			(E_1, \Ga, \De_1) \trans{\actinp{s}{\p}{\q}{v}} (E_2, \Gacat v:U, \De_2)
		}
		\quad \eltsrule{In}
		\\[9mm]

		\tree{
			(\Ga, \De_1) \trans{\actbout{s}{\p}{\q}{a}} (\Ga \cat a:\chtype{G}, \De_2) \quad
			E_1 \trans{\actval{s}{\q}{\p}{\chtype{G}}} E_2
		}{
			(E_1, \Ga, \De_1) \trans{\actbout{s}{\p}{\q}{a}} (E_2, \Ga \cat a:\chtype{G}, \De_2)
		} 
		\quad \eltsrule{ResN}
		\\[9mm]

		\tree{
			(\Ga, \De_1) 
\trans{\actbdel{\s}{\p}{\q}{\s'}{\p'}} (\Ga, \De_2 \cat \set{s'[\p_i]:
  T_i}_{i\in I}) \quad 
\forall i\in I. \proj{G}{\p_i} = T_i
\quad s'\notin \dom{E_1} \quad
			E_1 \trans{\actval{s}{\q}{\p}{T_{\p'}}} E_2
		}{
			(E_1, \Ga, \De_1) \trans{\actbdel{\s}{\p}{\q}{\s'}{\p'}} (E_2 \cat s':G, \Ga, \De_2 \cat \set{s'[\p_i]: T_i}_{i\in I})
		}
		\quad \eltsrule{ResS}
		\\[9mm]

		\tree{
			(\Ga, \De_1) \trans{\actsel{s}{\p}{\q}{l}} (\Ga, \De_2) \quad E_1 \trans{\actval{s}{\p}{\q}{l}} E_2
		}{
			(E_1, \Ga, \De_1) \trans{\actsel{s}{\p}{\q}{l}} (E_2, \Ga, \De_2)
		} 
		\quad \eltsrule{Sel}
		\qquad \quad
		\tree{
			(\Ga, \De_1) \trans{\actbra{s}{\p}{\q}{l}} (\Ga, \De_2) \quad 
			E_1 \trans{\actgsel{s}{\q}{\p}{l}} E_2
		}{
			(E_1, \Ga, \De_1) \trans{\actbra{s}{\p}{\q}{l}} (E_2, \Ga, \De_2)
		}
		\quad \eltsrule{Bra}
		\\[9mm]

		\tree{
			(\De_1 =\De_2, E_1 = E_2)  \vee
			(\De_1 \trans{} \De_2, E_1 \trans{\lambda} E_2)
		}{
			(E_1, \Ga, \De_1) \trans{\tau} (E_2, \Ga, \De_2)
		}
		\quad \eltsrule{Tau}  
		\\[9mm]

		\tree{
			E_1 \red^\ast E_1' \quad
			(E_1', \Ga_1, \De_1) \trans{\ell} (E_2, \Ga_2, \De_2)
		}{
			(E_1, \Ga_1, \De_1) \trans{\ell} (E_2, \Ga_2, \De_2)
		}
		\quad \eltsrule{Inv}
	\end{array}
\]
\caption{Labelled transition system for environment configuations}
\label{fig:ELTS}
\end{figure}

Each rule requires
a corresponding environment transition
(Figure~\ref{fig-synch_envtrans} in \S~\ref{sec:tupedLTS}) and a corresponding
labelled global environment transition
in order to
control a transition following the global protocol.
Rule $\eltsrule{Acc}$ defines the acceptance of a session initialisation
by creating a new mapping $s:G$ which
matches $\Gamma$ in a governed environment $E$.
Rule $\eltsrule{Req}$ defines the request for a new session and
it is dual to $\eltsrule{Acc}$.

The next six rules are the transition relations on session
channels and we assume the condition
$\projset{E_1} \supseteq \Delta_1$ to ensure
the base action of the environment matches one in a global environment.
$\eltsrule{Out}$ is a rule for the output where the type of the
value and the action of $(\Gamma,\Delta)$ meets those in
$E$. $\eltsrule{In}$ is a rule for the input and dual to
$\eltsrule{Out}$. $\eltsrule{ResN}$ is a scope opening rule for
a name so that the environment
can perform the corresponding type $\ENCan{G}$ of $a$.
$\eltsrule{ResS}$ is a scope opening rule for
a session channel which creates a set of mappings for the opened
session channel $s'$ corresponding to the LTS of the environment.
$\eltsrule{Sel}$ and $\eltsrule{Bra}$ are the rules for
select and branch, which are similar to $\eltsrule{Out}$ and
$\eltsrule{In}$. Rule $\eltsrule{Tau}$ defines the silent
action for environment configurations, where we require that
reduction on global environments matches reduction on 
the linear typing. At the same time we allow a silent action
with no effect on the environment configuration.
Rule $\eltsrule{Inv}$ closes the labelled transition system
with respect to the global environment. Global environment 
$E_1$ reduces to $E_1'$ to perform the observer's actions, hence
the observed process can perform the action w.r.t. $E_1'$.

Hereafter we write $\red$ for $\trans{\tau}$. 
\begin{exa}[LTS for environment configuration]
\label{ex:envconf}
$ $

	\noi Let:
	\begin{eqnarray*}
		E &=& s: \valuegt{\p}{\q}{U} \valuegt{\p}{\q}{U} G\\
		\Ga &=& \va:U\\
		\De &=& \typedrole{s}{\p} \tout{\q}{U} T_\p
	\end{eqnarray*}
	\noi with $\proj{G}{\p} = T_\p$, $\proj{G}{\q} = T_\q$ and $\roles{G} = \set{\p, \q}$.

	Tuple $(E, \Ga, \De)$ is an environment configuration since
	there exists $E'$ such that:
	\[
		E \red E' \textrm{ implies } \projset{E'} \supset \De
	\]
	\noi Recall that we can write  $E \red E'$ for $E \trans{\lambda} E'$.
	\noi Indeed we can see that:
	\begin{eqnarray*}
		E &\trans{\actval{s}{\p}{\q}{U}}& s: \valuegt{\p}{\q}{U} G\\
		\projset{s: \valuegt{\p}{\q}{U} G} &=&
		\typedrole{s}{\p} \tout{\q}{U} T_\p \cat \typedrole{s}{\q} \tinp{\p}{U} T_\q \\
		\projset{s: \valuegt{\p}{\q}{U} G} &\supset& \De
	\end{eqnarray*}
	\noi An environment configuration transition takes a place 
	on environment configuration $(E, \Ga, \De)$
	if we apply the condition of rule $\eltsrule{Out}$ to obtain:
	\begin{eqnarray*}
		s: \valuegt{\p}{\q}{U} G &\trans{\actval{s}{\p}{\q}{U}}& s:G\\
		(\Ga, \typedrole{s}{\p} \tout{\q}{U} T_\p) &\trans{\actout{s}{\p}{\q}{\va}}&
		(\Ga, \typedrole{s}{\p} T_\p)
	\end{eqnarray*}
	thus we can obtain:
	\[
		(s: \valuegt{\p}{\q}{U} G, \Ga, \De)
		\trans{\actout{s}{\p}{\q}{\va}}
		(s:G, \Ga, \typedrole{s}{\p} T_\p)
	\]
	By last result and the fact that:
	\[
		E \red s: \valuegt{\p}{\q}{U} G
	\]
	\noi we use rule $\eltsrule{Inv}$,
	to obtain:
	\[
		(E, \Ga, \De) \trans{\actout{s}{\p}{\q}{\va}} (s:G, \Ga,
		\typedrole{s}{\p} T_\p)
	\]
	\noi as required.
\end{exa}

\paragraph{\bf Governed reduction-closed congruence.}
\label{subsec:grc}
To define the reduction-closed congruence, we first refine
the barb,
which is controlled
by the global witness where observables of a configuration
are defined with the global environment of the observer.
%
%
%
\begin{defi}[Governed barb]
\[
	\begin{array}{c}
		\tree {
			\srole{\s}{\q} \notin \dom{\De} \quad \exists E' \textrm{ such that } E \red^\ast E' \trans{\actval{\s}{\p}{\q}{U}}
			\quad
			\Decat \typedrole{\s}{\p} \tout{\q}{U} T \subseteq \projset{E'}
		}{
			(E, \Ga, \Decat \typedrole{\s}{\p} \tout{\q}{U} T) \barbout{\s}{\p}{\q}
		}
		\\[6mm]

		\tree {
			\srole{\s}{\q} \notin \dom{\De} \quad \exists E' \textrm{ such that } E \red^\ast E' \trans{\actgsel{\s}{\p}{\q}{l_k}},
			\quad \Decat \typedrole{\s}{\p} \tsel{\q}{l_i:T_i}_{i \in I} \subseteq \projset{E'} \quad k \in I 
		}{
			(E, \Ga, \Decat \typedrole{\s}{\p} \tsel{\q}{l_i:T_i}_{i \in I}) \barbout{\s}{\p}{\q}
		}
		\\[6mm]

		\tree {
			a\in \dom{\Ga}
		}{
			(E, \Ga, \De) \barbreq{a}
		}
	\end{array}
\]
\end{defi}
We write $(E, \Ga, \De) \Barb{m}$ if $(E, \Ga, \De) \red^\ast (\Ga, \De', E')$ and
$(\Ga, \De', E') \barb{m}$.

We define the binary operator $\sqcup$ over
global environments based on the inclusion
of the syntax tree for global types. The operation is used to define 
the typed relation with respect to 
a global witness and the governed bisimulation. 
The operator $\sqcup$ is used to relate two different, but 
compatible observers, $E_1$ and $E_2$. 
\begin{defi}
	Let $T_1$ and $T_2$ denote local types as defined in \S~\ref{sec:typing}. We write $T_1\sqsubseteq T_2$ 
	if the syntax tree of $T_2$ includes one of $T_1$ as a leaf.
	We extend to  $G_1 \sqsubseteq G_2$ by defining
	$\forall \srole{s}{\p}:T_1 \in \projset{s:G_1}, \exists 
         \srole{s}{\p}:T_2 \in \projset{s:G_2}$ and $T_1 \sqsubseteq T_2$.
	We define:
	$E_1\sqcup E_2 = \set{s:E_i(s) \setbar  E_j(s) \sqsubseteq E_i(s), i, j \in \set{1,2}, i\not=j}\cup
	\set{s:E_1(s), s':E_2(s') \setbar s \notin \dom{E_2}, s' \notin \dom{E_1}}$.
\end{defi}
As an example for global types inclusion consider that: 
	\[
		\tinp{\q}{\U'} T \sqsubseteq \tout{\p}{\U} \tinp{\q}{\U'} T
	\]
As an example of $E_1 \sqcup E_2$, let us define:
\begin{eqnarray*}
	E_1 &=& s_1: \valuegt{\p}{\q}{U_1} \valuegt{\p'}{\q'}{U_2} \valuegt{\p}{\q}{U_3} \inactgt \cat s_2: \valuegt{\p}{\q}{W_2} \inactgt\\
	E_2 &=& s_1: \valuegt{\p}{\q}{U_3} \inactgt \cat s_2: \valuegt{\p'}{\q'}{W_1} 
	\valuegt{\p}{\q}{W_2} \inactgt
\end{eqnarray*}
Then
\[
	E_1 \sqcup E_2 = s_1: \valuegt{\p}{\q}{U_1} \valuegt{\p'}{\q'}{U_2} \valuegt{\p}{\q}{U_3} \inactgt \cat
	s_2: \valuegt{\p'}{\q'}{W_1} \valuegt{\p}{\q}{W_2} \inactgt
\]
The behavioural relation w.r.t. a global
witness is defined below.
\begin{defi}[Configuration relation]
\label{def:configuration}
	The relation $\relfont{R}$ is a {\em configuration relation}
	between two configurations $\tprocess{E_1,\Ga}{\PP_1}{\De_1}$
	and $\tprocess{E_2,\Ga}{\PP_2}{\De_2}$, written
	\[
		\rel{\tprocess{E_1\sqcup E_2, \Ga}{\PP}{\De_1}}{R}{\noGtprocess{\PP_2}{\De_2}}
	\]
	if $E_1\sqcup E_2$ is defined.
\end{defi}
\begin{prop}[Decidability]
\label{pro:decidability}
$ $
	\begin{enumerate}
		\item	Given $E_1$ and $E_2$, a problem whether $E_1\sqcup E_2$ is defined or
			not is decidable and if it is defined, the calculation of $E_1\sqcup
			E_2$ terminates.
		\item	Given $E$, a set $\{ E' \ | \ E\trans{}^\ast E'\}$ is finite.
	\end{enumerate}
\end{prop}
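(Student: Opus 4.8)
The plan is to dispatch the two parts independently, each by reducing it to a structural fact about global types.

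For part~(1), the definedness and the computation of $E_1\sqcup E_2$ reduce to deciding the inclusion $\sqsubseteq$. Since $\dom{E_1}$ and $\dom{E_2}$ are finite, sessions lying in only one domain are copied verbatim, and it suffices to decide, for each shared $s\in\dom{E_1}\cap\dom{E_2}$, whether $E_1(s)\sqsubseteq E_2(s)$ or $E_2(s)\sqsubseteq E_1(s)$; by definition $E_1\sqcup E_2$ is defined exactly when one such inclusion holds at every shared $s$, and then the larger type is kept. Each inclusion test is itself decidable: $G_1\sqsubseteq G_2$ ranges over the finite set $\projset{s:G_1}$ and, for every $\srole{s}{\p}:T_1$ in it, matches the entry $\srole{s}{\p}:T_2\in\projset{s:G_2}$ (unique in $\p$) and asks $T_1\sqsubseteq T_2$; the projections are computable by Definition~\ref{def:projection}, and $T_1\sqsubseteq T_2$ is the syntactic test of whether $T_1$ occurs as a residual subtree of the finite tree $T_2$. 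Finitely many roles, each test terminating, give decidability; when defined, the join is assembled session-by-session, so its computation terminates.

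For part~(2), I would produce a measure that strictly decreases along $\red$. Put $m(E)=\sum_{s:G\in E} w(G)$ with $w(G)$ the number of exchange and selection/branching prefixes in $G$. The axioms $\mrule{Inter}$ and $\mrule{SelBra}$ erase one leading prefix (the latter also discarding the unchosen branches), so $w$ drops; the permutation rules $\mrule{IPerm}$ and $\mrule{SBPerm}$ leave the leading prefix in place but fire an inner reduction in their premise, so an induction on the derivation propagates the strict decrease of the continuation to the conclusion; and $\mrule{GEnv}$ carries an inert component $E_0$ along, preserving the decrease on the active component. Hence $E\red^{*} E'$ forces $m(E')\le m(E)$. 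Since every reachable $E'$ is built over the fixed finite signature of roles, sorts, carried types and labels occurring in $E$, and has bounded measure, only finitely many such $E'$ exist.

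The delicate point is recursion, which is where I expect the real work to lie. The rules of Definition~\ref{def:gltsrules} carry no clause for $\recgt{t}{G}$, so under the equi-recursive reading a prefix can only be consumed after an unfolding, $w$ then ceases to be finite, and $\mrule{IPerm}$ can in principle reintroduce prefixes ahead of a recurring body. The obstacle is thus to exclude infinitely many genuinely distinct reachable environments. I would resolve it by observing that the permutation rules preserve projections, so the reachable environments collapse into finitely many classes under equality of $\projset{\cdot}$ --- equivalently, a guarded recursive global type denotes a regular tree with only finitely many derivatives up to equi-recursive equality. Quotienting the reachable set at the granularity of $\projset{\cdot}$, which is precisely what the governed barbs and bisimulation observe, and re-running the measure argument on the prefix-consuming steps between unfoldings, then yields the finiteness needed for the decidability of governed bisimilarity.
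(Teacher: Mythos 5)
Your part~(1) is essentially the paper's argument: both reduce definedness of $E_1\sqcup E_2$ to finitely many $\sqsubseteq$-tests over the (finite) domains and role sets. The one slip is your phrase ``residual subtree of the \emph{finite} tree $T_2$'': under the paper's equi-recursive reading, recursive local types denote infinite regular trees, so a naive occurrence check on finite syntax is not well defined up to unfolding. This is precisely why the paper's proof routes the test through decidability of type isomorphism (citing the binary-session-type literature) rather than a literal subtree scan; your version is patchable by the same regular-tree machinery, so I count this as a cosmetic gap in~(1).

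Part~(2) contains the genuine gap, and your own diagnosis locates it correctly but your repair does not close it. First, the claim that ``the permutation rules preserve projections'' is false as stated: every transition $E\trans{\lambda}E'$ consumes an interaction and changes the projections of the two participants of $\lambda$; what you presumably mean is that different consumption orders of independent actions yield projection-equal environments, which is weaker. Second, and more seriously, your parenthetical ``a guarded recursive global type denotes a regular tree with only finitely many derivatives up to equi-recursive equality'' is not true for this LTS: take $G=\recgt{t}{\valuegt{\p}{\q}{U}\valuegt{\p'}{\q'}{U'}\vargt{t}}$ and iterate $\mrule{IPerm}$ on successive unfoldings, consuming the inner $\p'\to\q'$ interaction each time; this reaches types with unboundedly many leading $\p\to\q$ prefixes, which are pairwise distinct even as regular trees. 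Hence neither your prefix-counting measure nor the ``bounded measure over a fixed signature'' sentence can conclude, and quotienting by $\projset{\cdot}$-equality --- which \emph{does} collapse that chain --- proves finiteness of a quotient, not of $\{E'\mid E\red^{\ast}E'\}$ as the proposition asserts; your closing appeal to ``what the governed bisimulation observes'' imports the identification you were supposed to justify. The paper's proof takes exactly the route your quotient gestures at, but makes it precise by a different key lemma: it invokes the one-to-one correspondence between this global LTS and the LTS of the global automata of Deni\'elou and Yoshida, whose reachability set is finite; there the states are automata configurations (in effect, tuples of local derivatives), which is the identification under which recursion is tamed. To complete your proof you would need either to carry out that correspondence or to prove directly that the map $E'\mapsto\projset{E'}$ on reachable environments is finite-to-one in the sense demanded by the statement --- neither of which the sketch supplies.
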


\begin{proof}
	\label{app:pro:decidability}
	{\em (1)} since $T_1 \sqsubseteq T_2$ is a syntactic tree inclusion, it
	is reducible to a problem to check the isomorphism between two types.
	This problem is decidable \cite{yoshida.vasconcelos:language-primitives}.

	{\em (2)} the global LTS has one-to-one correspondence with the LTS of
	global automata in \cite{DY12} whose reachability set is finite.
\end{proof}





\begin{defi}[Global configuration transition]
\label{def:glts}
	We write
	$\tprocess{E_1, \Ga_1}{\PP_1}{\De_1} \trans{\ell} \tprocess{E_2, \Ga_2}{\PP_2}{\De_2}$ if $\tprocess{E_1, \Ga_1}{\PP_1}{\De_1}$, 
	$\tprocess{\Ga_1}{\PP_1}{\De_1} \trans{\ell} \tprocess{\Ga_2}{\PP_2}{\De_2}$ and $(E_1, \Ga_1, \De_1) \trans{\ell} (E_2, \Ga_2,\De_2)$.
\end{defi}
Note that $\tprocess{\Ga_2}{\PP_2}{\De_2}$ immediately holds by Definition 
\ref{def:lts}.



The proposition below states that the configuration LTS preserves the
well-formedness.

\begin{prop}[Invariants]
	\label{pro:invariants}
$ $
	\begin{enumerate}
		\item	$(E_1, \Ga, \De_1) \trans{\ell} (E_2, \Ga_2, \De_2)$ implies that
			$(E_2, \Ga_2, \De_2)$ is an environment configuration.
		\item	If $\Gtprocess{P}{\De}$ and $P \red P'$ with $\coherent{\De}$,
			then $\tprocess{E, \Ga}{P}{\De} \red \tprocess{E, \Ga}{P'}{\De'}$ and 
			$\coherent{\De'}$.
		\item	If $\tprocess{E_1, \Ga_1}{P_1}{\De_1} \trans{\ell} \tprocess{E_2, \Ga_2}{P_2}{\De_2}$ then
			$\tprocess{E_2, \Ga_2}{P_2}{\De_2}$ is a governance judgement.
	\end{enumerate}
\end{prop}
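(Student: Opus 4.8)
The plan is to prove the three invariants in turn, treating part~(1) as the crux, since parts~(2) and~(3) then reduce to it together with subject reduction (Theorem~\ref{the:subject}). Before attacking part~(1) I would isolate the one auxiliary fact that drives all the session-channel cases, a \emph{commutation lemma} stating that global-environment reduction is mirrored, through projection, by linear-environment reduction: if $E \trans{\lambda} E'$ then $\projset{E} \trans{\lambda} \projset{E'}$ (as the linear typing labelled reduction). I would prove this by induction on the derivation of $E \trans{\lambda} E'$ from Definition~\ref{def:gltsrules}; the base cases $\mrule{Inter}$ and $\mrule{SelBra}$ follow directly from the projection clauses of Definition~\ref{def:projection}, the congruence case $\mrule{GEnv}$ is immediate, while the permutation cases $\mrule{IPerm}$ and $\mrule{SBPerm}$ are the delicate ones.

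For part~(1) itself I would then do rule induction on the configuration transition of Figure~\ref{fig:ELTS}, showing in each case that the target meets the configuration condition of Definition~\ref{def:env_conf}, i.e.\ that some $E_2'$ has $E_2 \red^\ast E_2'$ and $\De_2 \subseteq \projset{E_2'}$. Rule $\eltsrule{Inv}$ is just the induction hypothesis on its sub-transition. For $\eltsrule{Acc}$, $\eltsrule{Req}$ and $\eltsrule{ResS}$ the target environment gains a fresh mapping $s:G$ whose projection $\set{\srole{s}{\p}: \proj{G}{\p} \setbar \p \in \roles{G}}$ already contains every new endpoint placed in $\De_2$, so the condition is inherited from the source. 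For $\eltsrule{Out}$, $\eltsrule{In}$, $\eltsrule{Sel}$, $\eltsrule{Bra}$ the standing side condition $\projset{E_1} \supseteq \De_1$ and the premise $E_1 \trans{\lambda} E_2$, fed through the commutation lemma, give $\projset{E_2} \supseteq \De_2$ because the reduced endpoints on the two sides coincide. Rule $\eltsrule{Tau}$ splits into its trivial disjunct ($\De_1 = \De_2$, $E_1 = E_2$) and the reducing disjunct, again handled by the lemma.

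For part~(2), Theorem~\ref{the:subject} applied to $\Gtprocess{P}{\De}$ coherent and $P \red P'$ yields $\Gtprocess{P'}{\De'}$ coherent with $\De \typingred^\ast \De'$, which already discharges $\coherent{\De'}$; the governed step is then obtained by observing that a process reduction is a $\tau$-action and firing rule $\eltsrule{Tau}$, using its first disjunct when the communication is internal to a hidden session (so $\De = \De'$ and $E$ is literally unchanged) and its second disjunct otherwise, where $\De \typingred \De'$ is matched by a reduction of $E$ and the governance judgement is preserved up to that reduction. For part~(3) I would unfold Definition~\ref{def:glts}: it supplies a governance judgement at the source, a typed transition $\tprocess{\Ga_1}{P_1}{\De_1} \trans{\ell} \tprocess{\Ga_2}{P_2}{\De_2}$, and a configuration transition. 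The typed transition gives (via Definition~\ref{def:lts}) that $P_2$ is typed under $(\Ga_2, \De_2)$; the configuration condition for the target is exactly part~(1); and coherence follows because, by part~(1), $\De_2 \subseteq \projset{E_2'}$ for some reduct $E_2'$, while Proposition~\ref{prop:bindual} makes any such projection mutually dual, hence coherent with respect to the global witness.

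The hard part will be the permutation cases of the commutation lemma: rules $\mrule{IPerm}$ and $\mrule{SBPerm}$ let an interaction fire underneath the unrelated prefix $\valuegt{\p}{\q}{U}$ (resp.\ a selection), and I must verify that the per-role projection is insensitive to this reordering --- concretely, that projecting onto any role either discards both swapped prefixes or keeps the one in which it participates, so that the global reduction and its projected linear reduction stay in lock-step. Once this is secured, the remainder of the argument is the essentially administrative bookkeeping of the up-to-reduction character of the governance judgement.
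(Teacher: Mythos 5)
Your proposal is correct and takes essentially the same route as the paper's proof (Appendix~\ref{appsubsec:invariant}): part~(1) by rule induction on the configuration LTS of Figure~\ref{fig:ELTS} with \eltsrule{Inv} as the inductive case, part~(2) via Theorem~\ref{the:subject} together with rule \eltsrule{Tau}, and part~(3) by unfolding Definition~\ref{def:glts} and invoking part~(1). Your commutation lemma is just a mildly strengthened packaging of the paper's Lemma~\ref{lem:global_subset_inv}, which tracks only the two endpoints of $\lambda$ (leaving preservation of the uninvolved projections implicit) and is proved there by exactly the induction on the global reduction derivation, with the permutation rules as the delicate cases, that you describe.
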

\begin{proof}
	The proof for Part 1 and Part 3 can be found in
\iflong
	Appendix \ref{appsubsec:invariant}.
\else
	Appendix \ref{appsubsec:invariant}.
\fi
	Part 2 is verified by simple transitions using
	$\eltsrule{Tau}$ in Figure~\ref{fig:ELTS}.
	$\coherent{\De'}$ is derived by Theorem~\ref{the:subject}.
\end{proof}

The definition of the reduction-closed congruence for governance follows.
Below we define $\tprocess{E, \Ga}{P}{\De} \Barb{m}$ if
$\PP \Barb{m}$ and $(E, \Ga, \De) \Barb{m}$.

\begin{defi}[Governed reduction-closed congruence]
\label{def:grc}
A configuration relation $\relfont{R}$ is {\em governed reduction-closed congruence}
if $\rel{\tprocess{E, \Ga}{P_1}{\De_1}}{R}{\noGtprocess{P_2}{\De_2}}$
then
	\begin{enumerate}
		\item	$\tprocess{E, \Ga}{P_1}{\De_1} \Barb{n}$
			if and only if
			$\tprocess{E, \Ga}{P_2}{\De_2} \Barb{n}$

		\item	\begin{itemize}
				\item	$P_1 \Red P_1'$ if there exists $P_2'$ such that $P_2 \Red P_2'$
					and $\rel{\tprocess{E, \Ga}{P_1'}{\De_1'}}{R}{\noGtprocess{P_2'}{\De_2'}}$.
				\item	the symmetric case.
			\end{itemize}

		\item	For all closed context $\C$,  such that
			$\tprocess{E, \Ga}{\Ccontext{P_1}}{\De_1'}$ and
			$\tprocess{E, \Ga}{\Ccontext{P_2}}{\De_2'}$
			then
			$\rel{\tprocess{E, \Ga}{\Ccontext{P_1}}{\De_1'}}{R}{\noGtprocess{\Ccontext{P_2}}{\De_2'}}$.		
\end{enumerate}
The union of all governed reduction-closed congruence relations is denoted as $\govcongs$.
\end{defi}


\subsection{\bf Globally governed bisimulation and its properties}
\label{subsec:gbsim}
This subsection introduces the
globally governed bisimulation relation definition
and studies its main properties.


\begin{defi}[Globally governed bisimulation]
\label{def:govwb}
A configuration relation $\relfont{R}$
is a {\em globally governed weak bisimulation} (or governed
bisimulation) if whenever
$\rel{\tprocess{E, \Ga}{\PP_1}{\De_1}}{R}{\noGtprocess{\PP_2}{\De_2}}$, it holds:
\begin{enumerate}
	\item	$\tprocess{E, \Ga}{\PP_1}{\De_1} \trans{\ell} \tprocess{E_1', \Ga'}{\PP_1'}{\De_1'}$ implies
		$\tprocess{E, \Ga}{\PP_2}{\De_2} \Trans{\hat{\ell}} \tprocess{E_2', \Ga'}{\PP_2'}{\De_2'}$
		such that $\rel{\tprocess{E_1'\sqcup E_2',\Ga'}{\PP_1'}{\De_1'}}{R}{\noGtprocess{\PP_2'}{\De'_2}}$.
		\item The symmetric case.
	\end{enumerate}
The maximum bisimulation exists which we call {\em governed
bisimilarity}, denoted by $\govwbs$. We sometimes leave
environments implicit, writing e.g.~$P\govwbs Q$.
\end{defi}




\iflong
\begin{lem}
$ $
\label{lem:bis-cong}
\label{lem:bis-complete}
	\begin{enumerate}
		\item $\govwbs$ is congruent.
		\item $\govcongs \ \subseteq \ \govwbs$.
	\end{enumerate}
\end{lem}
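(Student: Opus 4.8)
The plan is to prove the two parts separately, treating part (1) by a standard context-closure argument and part (2) by a definability (testing-context) argument; I expect part (2) to carry the real difficulty. For part (1), I would show that the contextual closure of $\govwbs$ is itself a governed bisimulation. Concretely, define
\[
	\relfont{R} = \set{(\tprocess{E, \Ga}{\Ccontext{P_1}}{\De_1'},\ \noGtprocess{\Ccontext{P_2}}{\De_2'}) \setbar \tprocess{E, \Ga}{P_1}{\De_1} \govwbs \noGtprocess{P_2}{\De_2}}
\]
ranging over all closed contexts $C$ for which both configurations are well-formed governance judgements, and prove $\relfont{R}$ is a governed bisimulation by induction on the structure of $C$ following the context grammar of \S~\ref{sec:synchronous}. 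The prefix cases (request, accept, send, receive, select, branch) are immediate: the leading action is forced and matched identically on both sides, after which the residual pair falls back to $P_1 \govwbs P_2$; the conditional and recursion cases reduce to their subcontexts.

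The genuine work in part (1) lies in $C = C' \Par Q$ and $C = \news{n} C'$. For parallel composition I would analyse each transition of $\Ccontext{P_1}$ by the process LTS: a move localised in $C'[P_1]$ is handled by the induction hypothesis, a move localised in $Q$ is matched trivially, and a synchronising $\tau$ (rule $\ltsrule{Tau}$, reflected by $\eltsrule{Tau}$) is decomposed into the two dual contributions, the $C'[P_1]$-contribution being matched weakly by the induction hypothesis and then recomposed with $Q$. Here the invariants of Proposition~\ref{pro:invariants} guarantee that the resulting configuration is again a well-formed governance judgement and that the global environment $E$ evolves consistently; the subtle point is that the matching move of $C'[P_2]$ may require several silent steps, so I must check that $Q$ and the governing witness remain available across the weak transition, using the $\sqcup$-compatibility bookkeeping of Definition~\ref{def:govwb}. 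The restriction case is analogous, appealing to the scope-extrusion rules $\eltsrule{ResN}$ and $\eltsrule{ResS}$ of Figure~\ref{fig:ELTS}. I expect tracking $E$ through weak synchronisations in the parallel case to be the main obstacle of part (1).

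For part (2) I would show that $\govcongs$ is itself a governed bisimulation, i.e.~that reduction-closed congruence already matches every labelled transition. The method is definability: for each label $\ell$ I construct a closed observer context $C_\ell$ and a fresh success name such that $\Ccontext{P}$ exhibits a designated governed barb $\Barb{m}$ exactly when $P$ can perform $\ell$ up to $\Trans{}$, and such that the residual of $C_\ell$ after absorbing $\ell$ continues to test the continuation. Given $\tprocess{E, \Ga}{P_1}{\De_1} \govcongs \noGtprocess{P_2}{\De_2}$ and a transition $P_1 \trans{\ell} P_1'$, I place both processes under $C_\ell$: congruence (clause 3 of Definition~\ref{def:grc}) keeps the pair related, barb preservation (clause 1) forces $\Ccontext{P_2}$ to reach the same success barb, and reduction-closedness (clause 2) lets me follow the matching reduction and read back a weak transition $P_2 \Trans{\hat{\ell}} P_2'$ whose continuation is again related. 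The $\tau$ case follows directly from clause 2.

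The crux of part (2) is the construction of the observers $C_\ell$ so that they are simultaneously (a) typable and coherent under the ambient $\Ga$, (b) compatible with the governing global environment $E$, in the sense that the composite configuration is a well-formed environment configuration (Definition~\ref{def:env_conf}) and respects $\sqcup$, and (c) faithful, detecting $\ell$ and emitting the fresh signal without triggering any spurious barb. The session-typed and governed constraints make the observers for output, bound output and delegation ($\eltsrule{Out}$, $\eltsrule{ResN}$, $\eltsrule{ResS}$) delicate, since each observer must supply precisely the dual session endpoint predicted by the global witness and release its success barb without violating linearity; I expect this faithful, governance-respecting definability argument to be the hardest step of the whole lemma.
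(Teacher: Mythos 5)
Your route coincides with the paper's own proof on both counts. For part (1) the paper likewise argues by case analysis on the context, with parallel composition as the only substantial case: it defines precisely your candidate relation for $C = \hole \Par R$ and discharges the three transition cases (action from the tested component, synchronising $\tau$, action from $R$) by decomposing and recomposing environment-configuration transitions via Lemma~\ref{lemma:configuration_transition}(4)--(5) together with the invariants of Proposition~\ref{pro:invariants} — exactly the bookkeeping you anticipate, including the definedness of $E_1' \sqcup E_2'$. For part (2) the paper uses the Hennessy-style definability technique you propose, with explicit testers $\funcbr{T}{N,\suc,\ell}$ for every external action (the output, bound-output and selection testers using matching, and both branches of the conditional emitting the success message to satisfy the session typing), so your identification of where the difficulty lies is accurate.

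There is, however, one concrete step your sketch elides, and it is a genuine lemma in the paper, not a formality. After composing with the tester, congruence, barb preservation and reduction-closedness only yield
$\newsp{\bn{\ell}, b}{P_1' \Par \out{\suc}{1}{2}{\obj{\ell}, n} \inact} \govcongs \newsp{\bn{\ell}, b}{P_2' \Par \out{\suc}{1}{2}{\obj{\ell}, n} \inact}$:
the continuations remain wrapped in the restriction of the extruded names and the testing debris, and your phrase ``whose continuation is again related'' silently assumes this harness can be cancelled. That cancellation is the Extrusion Lemma (Lemma~\ref{lem:extruction}, after \cite[Lemma 2.38]{Hennessy07}), whose proof requires exhibiting a dedicated governed reduction-closed relation and, crucially, establishing its contextual clause by composing with a further receptor $\inp{\suc}{2}{1}{\vec{y}, x}(R \Par \out{\suc'}{1}{2}{\vec{z}, x} \inact)$ that consumes the success message, re-imports an arbitrary component $R$ \emph{under} the restriction, and re-emits a fresh success signal. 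This is also why the testers must ship $\obj{\ell}$ together with the dual endpoint $n$ of $\subj{\ell}$ in the success output, and why the auxiliary processes $R_i = \acc{b}{1}{x}R_i'$ on the fresh name $b$ appear in the harness — design constraints absent from your observer description. Without this extrusion step (or an equivalent cancellation argument) the completeness proof does not close; with it, your plan matches the paper's.
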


\begin{proof}
	The proof of (1) is by a case analysis on the
	context structure. The interesting
	case is the parallel composition,
	which uses Proposition \ref{pro:invariants}.
	See Appendix \ref{app:bis-cong}.

\fi
	The proof uses the technique from \cite{Hennessy07} 
(the external actions can be always tested).
	The proof can be found in Appendix \ref{app:bis-complete}.
\end{proof}

\begin{thm}[Soundness and completeness]
	\label{the:gcoincidence}
	$\govwbs \ = \ \govcongs$.
\end{thm}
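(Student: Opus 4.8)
The plan is to split the equality $\govwbs = \govcongs$ into its two inclusions. Completeness, $\govcongs \subseteq \govwbs$, is already in hand: it is exactly Lemma~\ref{lem:bis-complete}(2), so nothing remains to be done in that direction.

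For soundness, $\govwbs \subseteq \govcongs$, I would show that $\govwbs$ is itself a governed reduction-closed congruence; since $\govcongs$ is by Definition~\ref{def:grc} the union of all such relations, the inclusion then follows immediately. Concretely, I would verify the three clauses of Definition~\ref{def:grc} for $\govwbs$. The contextuality clause is precisely Lemma~\ref{lem:bis-cong}(1), namely that $\govwbs$ is congruent, so for every closed context $\C$ the pair $\Ccontext{P_1}, \Ccontext{P_2}$ stays in $\govwbs$. The reduction-closure clause I would obtain by noting, via Proposition~\ref{pro:invariants}(2), that a process reduction $P_1 \Red P_1'$ is tracked by a sequence of governed $\tau$-transitions $\tprocess{E, \Ga}{P_1}{\De_1} \red^\ast \tprocess{E, \Ga}{P_1'}{\De_1'}$; the weak bisimulation game for $\govwbs$ then supplies a matching weak $\tau$-sequence from $P_2$ landing in a still-related configuration. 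Finally, the barb-preservation clause I would derive by relating each governed barb to a weak visible action of the configuration LTS: an output/selection barb $\barbout{s}{\p}{\q}$ corresponds to the availability of a visible output or selection transition after some internal reductions, and a request barb $\barbreq{a}$ to a request transition $\actreq{a}{A}{s}$. The transition-matching clause of $\govwbs$ then answers any such action of $P_1$ by a weak action of $P_2$ with the same subject, and symmetrically, yielding the required ``iff''.

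The main obstacle is reconciling the treatment of the global environment across the two definitions. The bisimulation clause of Definition~\ref{def:govwb} recombines the two configurations into $E_1' \sqcup E_2'$ after every step, whereas the congruence clauses of Definition~\ref{def:grc} keep a single shared witness $E$. I would handle this by checking that, along the matched transition sequences, the two global environments remain $\sqcup$-compatible and that $\sqcup$ acts as the identity on a shared witness, so that the combined environment produced by the bisimulation coincides with the single $E$ demanded by the congruence. Care is also needed so that the up-to-reduction slack in the governance judgement and in rule $\eltsrule{Inv}$ does not let $P_2$ escape a barb exhibited by $P_1$; here the invariants of Proposition~\ref{pro:invariants} and the finiteness guaranteed by Proposition~\ref{pro:decidability}(2) keep the argument well founded. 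Once these points are settled, the three clauses combine to show that $\govwbs$ is a governed reduction-closed congruence, which establishes soundness and hence, together with completeness, the theorem.
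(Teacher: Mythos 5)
Your proposal follows essentially the same route as the paper: the paper derives soundness ($\govwbs \subseteq \govcongs$) directly from Lemma~\ref{lem:bis-cong}(1), i.e.\ from $\govwbs$ being a (governed reduction-closed) congruence --- with barb preservation and reduction closure handled exactly as you sketch, via the transition-matching game and Proposition~\ref{pro:invariants} --- and completeness from Lemma~\ref{lem:bis-cong}(2), the Hennessy-style definability/extrusion argument. Your extra care about reconciling the $E_1' \sqcup E_2'$ recombination of Definition~\ref{def:govwb} with the single witness of Definition~\ref{def:grc} is a sound elaboration of a point the paper leaves implicit (resolved by the weakening/strengthening lemmas and the up-to-reduction slack in the governance judgement), not a divergence in method.
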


\begin{proof}
	The fact that $\govwb \subseteq \govcong$ comes directly from
	the first part of Lemma~\ref{lem:bis-cong}.
	The proof is completed using the second part of Lemma~\ref{lem:bis-cong}.
\end{proof}


\noi The next theorem clarifies the relation between the locally controlled 
bisimilarity $\wb^s$ and globally governed bisimilarity $\wb_g^s$.

\begin{thm}
\label{lem:full-abstraction}
	If for all $E$ such that $\tprocess{E, \Ga}{P_1}{\De_1} \wb_g^s \noGtprocess{P_2}{\De_2}$ then
	$\Gtprocess{P_1}{\De_1} \wb^s \noGtprocess{P_2}{\De_2}$.
	Also if $\Gtprocess{P_1}{\De_1} \wb^s \noGtprocess{P_2}{\De_2}$, then for all
	$E$, $\tprocess{E, \Ga}{P_1}{\De_1} \wb_g^s \noGtprocess{P_2}{\De_2}$.
\end{thm}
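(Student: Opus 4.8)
The plan is to prove the two implications separately, each as a coinductive argument, resting on two auxiliary facts about the configuration LTS of Figure~\ref{fig:ELTS}. \textbf{(i) Refinement:} every governed transition refines a local one, i.e.\ $(E,\Ga,\De)\trans{\ell}(E',\Ga',\De')$ implies $(\Ga,\De)\trans{\ell}(\Ga',\De')$; this is immediate since each rule of Figure~\ref{fig:ELTS} carries the corresponding environment transition of \S~\ref{sec:tupedLTS} as a premise. \textbf{(ii) Enabling witness:} for each valid local step $(\Ga,\De)\trans{\ell}(\Ga',\De')$ there is a global environment $E$ with $(E,\Ga,\De)$ an environment configuration and $(E,\Ga,\De)\trans{\ell}(E',\Ga',\De')$. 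I would build this $E$ by completing, for the session $s$ named in $\ell$, the coherent local types of $\De$ to a global type that puts the interaction of $\ell$ at the front; this is possible because $\De$ is coherent and the object of $\ell$ targets a role absent from $\dom{\De}$, so prefixing the missing $\p\to\q$ step leaves the projections onto the already-present roles untouched. All other sessions are reconstructed from coherence of $\De$; the fresh sessions created by $\eltsrule{ResN},\eltsrule{ResS}$ and delegation are the only awkward cases.

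For the second, \emph{easy} implication, assume $\Gtprocess{P_1}{\De_1}\wb^s\noGtprocess{P_2}{\De_2}$ and fix any $E$. I would show that
\[\mathcal{S}=\{(\tprocess{E_1\sqcup E_2,\Ga}{P_1}{\De_1},\,\noGtprocess{P_2}{\De_2})\ :\ \Gtprocess{P_1}{\De_1}\wb^s\noGtprocess{P_2}{\De_2},\ E_1\sqcup E_2\text{ defined, both configurations valid}\}\]
is a governed bisimulation, which gives $\tprocess{E,\Ga}{P_1}{\De_1}\wb_g^s\noGtprocess{P_2}{\De_2}$ by taking $E_1=E_2=E$. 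Given a governed challenge from $P_1$ with label $\ell$, fact (i) projects it to a local challenge, and $\wb^s$ supplies a local answer $\Gtprocess{P_2}{\De_2}\Trans{\hat{\ell}}\tprocess{\Ga'}{P_2'}{\De_2'}$ with $\De_1'\bistyp\De_2'$ by Lemma~\ref{lemma:Bisimulation_typing_relation}. I then re-lift this answer to a governed weak transition: since $E_1\sqcup E_2$ already licensed the $\ell$-labelled global step for $\De_1$ and the two linear environments converge, the same global component fires the matching step for $\De_2$, with the intermediate $\tau$'s matched by $\eltsrule{Tau}$. Closure of the continuation under $\sqcup$ and preservation of well-formedness are handled by Proposition~\ref{pro:invariants}.

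For the first, \emph{hard} implication, assume the left-hand side holds for every $E$ and consider
\[\mathcal{R}=\{(\Gtprocess{P_1}{\De_1},\,\noGtprocess{P_2}{\De_2})\ :\ \forall E.\ \tprocess{E,\Ga}{P_1}{\De_1}\wb_g^s\noGtprocess{P_2}{\De_2}\}.\]
Given a local challenge $\Gtprocess{P_1}{\De_1}\trans{\ell}\tprocess{\Ga'}{P_1'}{\De_1'}$, I would pick, by fact (ii), a witness $E^\ast$ with $(E^\ast,\Ga,\De_1)\trans{\ell}(E^{\ast\prime},\Ga',\De_1')$, instantiate the universal hypothesis at $E^\ast$, and thus promote the local challenge to a governed one. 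Its governed answer $\tprocess{E^\ast,\Ga}{P_2}{\De_2}\Trans{\hat{\ell}}\tprocess{E_2',\Ga'}{P_2'}{\De_2'}$ projects, by fact (i) applied to each step, to the required local answer $\Gtprocess{P_2}{\De_2}\Trans{\hat{\ell}}\tprocess{\Ga'}{P_2'}{\De_2'}$.

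The \emph{main obstacle} is closing the continuation inside $\mathcal{R}$: the governed game only delivers $\tprocess{E^{\ast\prime}\sqcup E_2',\Ga'}{P_1'}{\De_1'}\wb_g^s\noGtprocess{P_2'}{\De_2'}$ for the single updated witness, whereas membership in $\mathcal{R}$ demands governed bisimilarity under \emph{all} continuation witnesses $E''$; and one cannot simply exhibit a single ``permissive'' $E$ recovering the local LTS, because an observer may impose a cross-role ordering (e.g.\ $\p\to\q$ before $\p'\to\q$) that is absent from $\De$. I would resolve this not by fixing one run but by re-selecting the initial witness per target: given any valid continuation witness $E''$, fact (ii) lets me prepend the $\ell$-interaction to $E''$ to obtain an initial $E_0$ with $(E_0,\Ga,\De_1)\trans{\ell}(E'',\Ga',\De_1')$, run the universal hypothesis at $E_0$, and thereby reach $E''$; finiteness of the reachable global environments (Proposition~\ref{pro:decidability}(2)) keeps this selection well-defined, while the $\sqcup$-bookkeeping is controlled by Proposition~\ref{pro:invariants}. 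I expect the genuinely laborious part to be the interaction of this witness surgery with the permutation and up-to-reduction rules $\mrule{IPerm}$, $\mrule{SBPerm}$ and $\eltsrule{Inv}$, together with the scope-opening and delegation labels noted above, where the candidate witness must additionally account for the freshly created session.
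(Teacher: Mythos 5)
Your proposal follows essentially the same route as the paper's proof in Appendix~\ref{app:full-abstraction}: your facts (i) and (ii) are exactly parts 1 and 3 of Lemma~\ref{lemma:configuration_transition}, the easy direction is handled just as you describe by extracting $\De_1 \bistyp \De_2$ (Lemma~\ref{lemma:Bisimulation_typing_relation}) and re-lifting the answer with part 2 of that lemma, and the hard direction is the same pick-an-enabling-witness/instantiate/project argument. The coinductive-closure obstacle you analyse at length (reaching every continuation witness $E''$ by prepending the $\ell$-interaction) is passed over silently in the paper's proof, so your witness-surgery discussion is added care on top of, rather than a departure from, the published argument.
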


\begin{proof}
	The proof is based on the properties that exist between semantics of the
	environment tuples $(\Ga, \De)$ and 
	the semantics of the environment configurations $(E, \Ga, \De)$.
	The full proof can be found in Appendix~\ref{app:full-abstraction}.
\end{proof}

To clarify the above theorem, consider the following processes:
\[
\begin{array}{rcl}
	\small
	P_1 & = & \out{s_1}{1}{3}{v} \out{s_2}{1}{2}{w} \inact \Par \out{s_1}{2}{3}{v} \inp{s_2}{2}{1}{x} 
	\out{s_2}{2}{3}{x} \inact\\
	P_2 & = & \out{s_1}{1}{3}{v} \inact \Par \out{s_2}{1}{2}{w} \inact \Par \out{s_1}{2}{3}{v} \inp{s_2}{2}{1}{x} \out{s_2}{2}{3}{x} \inact
\end{array}
\]
\noi We can show that $P_1 \wb^s P_2$. 
By Theorem~\ref{lem:full-abstraction}, 
we expect that 
for all $E$, we have $\tprocess{E, \Ga}{P_1}{\De_1}$ and $\tprocess{E, \Ga}{P_2}{\De_2}$ then
$E \proves P_1 \wb_g^s P_2$.
This is in fact true because 
the possible $E$ that can type 
$P_1$ and $P_2$ are:
\[
\begin{array}{rcl}
\small
	E_1 &=& s_1: \valuegt{1}{3}{U} \valuegt{2}{3}{U} \inactgt \cat s_2: \valuegt{1}{2}{W} \valuegt{2}{3}{W} \inactgt\\
	E_2 &=& s_1: \valuegt{2}{3}{U} \valuegt{1}{3}{U} \inactgt \cat s_2: \valuegt{1}{2}{W} \valuegt{2}{3}{W} \inactgt
\end{array}
\]
\noi and all the up-to weakening instances $E$
(see Lemma~\ref{lemma:weakening})
of $E_1$ and $E_2$.

To clarify the difference between $\wb^s$ and $\wb_g^s$, we introduce
the notion of
a {\em simple multiparty  process} defined in \cite{HYC08}.
A simple process contains only a single session so that it satisfies
the progress property as proved in \cite{HYC08}.
Formally a process $P$ is {\em simple} when it is typable
with a type derivation where the session typing in the premise and
the conclusion of each prefix rule is restricted to at most
a single session (i.e.~any
$\tprocess{\Ga}{P}{\De}$ which appears in a
derivation, $\Delta$ contains at most one session channel in its
domain, see \cite{HYC08}).
Since there is no interleaving of sessions in simple
processes, the difference between $\wb^s$ and $\wb_g^s$ disappears.

\begin{thm}[Coincidence]
\label{thm:coincidence}
	Assume $P_1$ and $P_2$ are simple.
	If there exists $E$ such that $\tprocess{E, \Ga}{P_1}{\De_1} \wb_g^s
	\noGtprocess{P_2}{\De_2}$, then
	$\tprocess{\Ga}{P_1}{\De_1} \wb^s \noGtprocess{P_2}{\De_2}$.
\end{thm}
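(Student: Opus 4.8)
The plan is to build a synchronous multiparty session bisimulation out of the governed one, the key point being that for \emph{simple} processes the global witness $E$ never restricts the observable transitions beyond what the local environment $\De$ already does. Concretely, I would take
\[
\mathcal{R} = \{\,(\Gtprocess{P_1}{\De_1},\ \noGtprocess{P_2}{\De_2}) \mid \exists E.\ \tprocess{E,\Ga}{P_1}{\De_1}\ \wb_g^s\ \noGtprocess{P_2}{\De_2},\ P_1,P_2 \text{ simple}\,\}
\]
and show that $\mathcal{R}$ is a synchronous bisimulation in the sense of Definition~\ref{def:swb}. Since the hypothesis supplies one such $E$, the pair $(\Gtprocess{P_1}{\De_1},\noGtprocess{P_2}{\De_2})$ then lies in $\mathcal{R}\subseteq\ \wb^s$, which is exactly the conclusion.

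The heart of the argument is a transition-correspondence lemma for simple processes: whenever $\tprocess{E,\Ga}{P}{\De}$ is a governance judgement with $P$ simple,
\[
\Gtprocess{P}{\De}\trans{\ell}\tprocess{\Ga'}{P'}{\De'} \quad\text{iff}\quad \exists E'.\ \tprocess{E,\Ga}{P}{\De}\trans{\ell}\tprocess{E',\Ga'}{P'}{\De'}.
\]
The right-to-left direction is immediate from Definition~\ref{def:glts}, since a configuration transition always carries an underlying typed transition. For left-to-right I would proceed by case analysis on the environment-LTS rule of Figure~\ref{fig-synch_envtrans} justifying $(\Ga,\De)\trans{\ell}(\Ga',\De')$, and in each case exhibit a matching labelled global step $E\red^\ast E''\trans{\lambda}E'$ so that rule $\eltsrule{Inv}$ combined with the corresponding rule of Figure~\ref{fig:ELTS} applies. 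The step is obtained by first advancing $E$, via $\eltsrule{Inv}$ and the permutation rules $\mrule{IPerm}$/$\mrule{SBPerm}$, past all interactions preceding the one matching $\ell$, and then firing $\mrule{Inter}$ or $\mrule{SelBra}$.

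The delicate point — and the place where simplicity is essential — is verifying that the resulting configuration is still well formed, i.e.\ that $\De'\subseteq\projset{E''}$ for some reachable $E''$, as required for Proposition~\ref{pro:invariants}(1) to hold. Advancing $E$ past the interactions preceding $\ell$ amounts to letting the observer move ahead; if $\De$ carried a second, independently scheduled obligation of the same session (or of another interleaved session), this advance could strand that obligation outside $\projset{E''}$ — this is precisely the phenomenon that separates $Q_1$ and $Q_2$ in Example~\ref{ex:swb}. Because $P$ is simple there is no interleaving of sessions, so $\De$ carries a single active thread: every interaction of $E$ lying before the one matching $\ell$ concerns only roles absent from $\De$ (the observer's roles), and the projection of the reduced $E''$ still covers the continuation $\De'$. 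I expect this well-formedness check to be the main obstacle, and I would discharge it by induction on the structure of $E(s)$, using Proposition~\ref{prop:bindual} to align the projections recorded in $\De$ with the interactions still available in $E$.

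With the correspondence lemma established, closing the coinduction is routine. Given $\Gtprocess{P_1}{\De_1}\trans{\ell}\tprocess{\Ga'}{P_1'}{\De_1'}$, I lift it to a configuration transition $\tprocess{E,\Ga}{P_1}{\De_1}\trans{\ell}\tprocess{E_1',\Ga'}{P_1'}{\De_1'}$, invoke $\tprocess{E,\Ga}{P_1}{\De_1}\ \wb_g^s\ \noGtprocess{P_2}{\De_2}$ to obtain a matching $\tprocess{E,\Ga}{P_2}{\De_2}\Trans{\hat{\ell}}\tprocess{E_2',\Ga'}{P_2'}{\De_2'}$ with $\tprocess{E_1'\sqcup E_2',\Ga'}{P_1'}{\De_1'}\ \wb_g^s\ \noGtprocess{P_2'}{\De_2'}$, and then project this weak configuration transition back to a standard weak typed transition $\noGtprocess{P_2}{\De_2}\Trans{\hat{\ell}}\noGtprocess{P_2'}{\De_2'}$ (each $\tau$-configuration-step yields a typed $\tau$-step and the $\ell$-step a typed $\ell$-step, by the right-to-left direction). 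Since simplicity is preserved by reduction and $E_1'\sqcup E_2'$ witnesses the governed bisimilarity of the residuals, the resulting pair lies again in $\mathcal{R}$, so $\mathcal{R}$ is a synchronous bisimulation. As an alternative packaging, the same lemma shows that for simple processes governed bisimilarity does not depend on the chosen witness, so that the existential ``$\exists E$'' upgrades to a universal ``$\forall E$'' and the claim follows directly from Theorem~\ref{lem:full-abstraction}.
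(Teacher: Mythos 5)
Your proposal takes essentially the same route as the paper: the proof in Appendix~\ref{app:thm_coincidence} rests on exactly your transition-correspondence lemma — for simple processes, every typed transition lifts to a configuration transition under \emph{any} valid witness, because the single-session structure of each thread makes the premise $E\trans{\lambda}E'$ of the rules in Figure~\ref{fig:ELTS} always dischargeable via $\eltsrule{Inv}$ (Lemma~\ref{lemma:configuration_transition}(3) supplies the converse lifting) — and it then concludes by upgrading the existential over $E$ to a universal and invoking Theorem~\ref{lem:full-abstraction}, which is precisely your ``alternative packaging''. Your primary packaging as a direct coinduction on $\mathcal{R}$ is just an equivalent assembly of the same ingredients, and your discussion of the well-formedness of the advanced witness is the same argument (flagged, if anything, more explicitly) as the paper's.
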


\begin{proof}
	The proof follows the fact that if
	$P$ is simple and
	$\tprocess{\Ga}{P}{\De} \trans{\ell} \noGtprocess{P'}{\De'}$
	then $\exists E$ such that
	$\tprocess{E, \Ga}{P}{\De} \trans{\ell} \noGtprocess{P'}{\De'}$
	to continue that
	if $P_1$ and $P_2$ are simple
	and there exists $E$ such that
	$\tprocess{E, \Ga}{P_1}{\De_1} \wb_g^s \noGtprocess{P_2}{\De_2}$ then
	$\forall E, \tprocess{E, \Ga}{P_1}{\De_1} \wb_g^s \noGtprocess{P_2}{\De_2}$.
	The result then comes by applying
	Theorem~\ref{lem:full-abstraction}.
	The details of the proof are in the
	Appendix~\ref{app:thm_coincidence}.
\end{proof}

To clarify the above theorem, consider:
\begin{eqnarray*}
	P_1  &=&  \inp{s}{1}{2}{x} \out{s}{1}{3}{x} \inact \Par
	\out{s}{2}{1}{v} \inact\\
	P_2  &=&  \out{s}{1}{3}{v} \inact
\end{eqnarray*}
It holds that for 
\[
	E = s: \valuegt{2}{1}{U} \valuegt{1}{3}{U} \inactgt
\]
We can easily reason that $E \proves P_1 \wb_g^s P_2$ hence $P_1 \wb^s P_2$.

\begin{exa}[Governed bisimulation]
\label{ex:govern}
Recall the example from \S~\ref{sec:intro} and Example \ref{ex:swb}.
$Q_1$ is the process corresponding to 
a sequential thread (this corresponds to $P_1 \Par P_2$ in \S~\ref{sec:intro}), while
$Q_2$ has a parallel thread instead of the sequential composition (this corresponds to $P_1 \Par R_2$ in \S~\ref{sec:intro}). 
\[
	\begin{array}{rcl}
		Q_1 & = &
\newsp{s_b}{\out{s_a}{1}{3}{v} \out{s_b}{1}{2}{w} \inact \Par
		\inp{s_b}{2}{1}{x} \inact \Par \out{s_a}{2}{3}{v} \inact}\\
		Q_2 & = &
\newsp{s_b}{\out{s_a}{1}{3}{v} \out{s_b}{1}{2}{w} \inact \Par
				\inp{s_b}{2}{1}{x} \out{s_a}{2}{3}{v} \inact}
	\end{array}
\]
Assume:
\[
\small
	\begin{array}{rcl}
		\Ga & = &	a: G_a \cat b: G_b\\
		\De_0 & = &	\typedrole{s_a}{1} \tout{3}{S} \tinact \cat
				\typedrole{s_a}{2} \tout{3}{S} \tinact
\end{array}
\]
Then we have $\Gtprocess{Q_1}{\De_0}$ and $\Gtprocess{Q_2}{\De_0}$.
Now assume the two global witnesses as:
\[
\small
	\begin{array}{rcl}
		E_1 & = &	s_a: \valuegt{1}{3}{S} \valuegt{2}{3}{S} \inactgt\\

		E_2 & = &	s_a: \valuegt{2}{3}{S} \valuegt{1}{3}{S} \inactgt 
	\end{array}
\]
Then the projection of $E_1$ and $E_2$ is given as:
\[
\small
\begin{array}{rcl}
\projset{E_1} & = &	\typedrole{s_a}{1} \tout{3}{S} \tinact \cat
			\typedrole{s_a}{2} \tout{3}{S} \tinact \cat
			\typedrole{s_a}{3} \tinp{1}{S} \tinp{2}{S} \tinact
\\

		\projset{E_2} & = &	\typedrole{s_a}{1} \tout{3}{S} \tinact \cat
					\typedrole{s_a}{2} \tout{3}{S} \tinact \cat
			\typedrole{s_a}{3} \tinp{2}{S} \tinp{1}{S} \tinact
	\end{array}
\]
	\noi	with $\De_0 \subset \projset{E_1}$ and $\De_0 \subset
        \projset{E_2}$. The reader should note that the difference
between $E_1$ and $E_2$ is the type of the participant $3$ at $s_a$
(the third mapping in $E_1$ and $E_2$).

By definition of the global environment configuration,
we can write:
\begin{center}
	$\tprocess{E_i, \Ga}{Q_1}{\De_0}$ and
	$\tprocess{E_i, \Ga}{Q_2}{\De_0}$ for $i=1,2$.
\end{center}
Both processes are well-formed global configurations under both witnesses.
Now we can observe
\[
\tprocess{\Ga}{Q_1}{\De_0} \trans{\actout{s_a}{2}{3}{v}}
\tprocess{\Ga}{Q_1'}{\De'_0}
\] 
but
\[	\tprocess{\Ga}{Q_2}{\De_0}
        \stackrel{\actout{s_a}{2}{3}{v}}{\not\longrightarrow}
\]
Hence
	$\Gtprocess{Q_1}{\De_0} \not\wb^s \noGtprocess{Q_2}{\De_0}$
as detailed in Example \ref{ex:swb}.

Similarly, we have:
\[
\tprocess{E_2, \Ga}{Q_1}{\De_0} \not\wb_g^s \noGtprocess{Q_2}{\De_0}
\]
because $E_2$ allows to output
action $\actout{s_a}{2}{3}{v}$ by $\eltsrule{Out}$ in 
Figure~\ref{fig:ELTS} (since $E_2 \trans{s_a:2 \to 3:S} E_2'$).

On the other hand, since $E_1$ {\em forces} to wait for
$\actout{s_a}{2}{3}{v}$,  
\[\tprocess{E_1,\Ga}{Q_1}{\De_0}
\stackrel{\actout{s_a}{2}{3}{v}}{\not\longrightarrow}\] 
because we cannot apply 
$\eltsrule{Out}$ in Figure~\ref{fig:ELTS}.   
$E_1$ does not allow to output
action $\actout{s_a}{2}{3}{v}$ 
 (since $E_1 \not\trans{s_a:2 \to 3:S}$).
Hence
	$Q_1$ and $Q_2$ are bisimilar under $E_1$, 
i.e.~$\tprocess{E_1,\Ga}{Q_1}{\De_0} \wb_g^s
        \noGtprocess{Q_2}{\De_0}$. This concludes
the optimisation illustrated in \S~\ref{sec:intro} is correct.
\end{exa}

\section{Usecase: UC.R2.13 ``Acquire Data
From Instrument'' from the 
Ocean Observatories Initiative (OOI)~\cite{ooi}}
\label{sec:app:ooi}

\begin{figure}[t]
\usetikzlibrary{positioning,calc}
\tikzset{
  role/.style={rounded corners,draw,text width=3.2em,minimum height=1em,text height=1em,text centered},
  action/.style={midway,yshift=5pt,text=black} 
}
\begin{center}
\begin{tikzpicture}

  \node (uc1-instrument) [role] {\scriptsize $\textsf{Instrument}$};
  \node (uc1-agent) [role,right=1em of uc1-instrument] {\scriptsize $\textsf{Agent}$};
  \node (uc1-user) [role,right=1em of uc1-agent] {\scriptsize $\textsf{User}$};

  \draw[dash pattern= on 8pt off 4pt]
    (uc1-instrument.south) -- ($(uc1-instrument.south)+(0,-3)$);
  \draw[dash pattern= on 8pt off 4pt]
    (uc1-agent.south) -- ($(uc1-agent.south)+(0,-3)$);
  \draw[dash pattern= on 8pt off 4pt]
    (uc1-user.south) -- ($(uc1-user.south)+(0,-3)$);

  \draw[->, very thick, violet]
    ($(uc1-instrument)+(0,-1)$) -- ($(uc1-agent)+(0,-1)$)
    node [action] {\scriptsize $\actout{s_2}{\instrument}{\agentone}{rd}$};
  \draw[->, very thick, blue]
    ($(uc1-agent)+(0,-1.5)$) -- ($(uc1-user)+(0,-1.5)$)
    node [action] {\scriptsize $\actout{s_1}{\agentone}{\user}{pd1}$};
  \draw[<-, very thick, violet]
    ($(uc1-instrument)+(0,-2)$) -- ($(uc1-agent)+(0,-2)$)
    node [action] {\scriptsize $\actout{s_2}{\agentone}{\instrument}{ack}$};
  \draw[->, very thick, blue]
    ($(uc1-instrument)+(0,-2.5)$) -- ($(uc1-user)+(0,-2.5)$)
    node [action] {\scriptsize $\actout{s_1}{\instrument}{\user}{pd2}$};
  \node [below=3 of uc1-agent] {\footnotesize $\textsf{Usecase 1}$};


  \node (uc2-instrument) [role,right=2em of uc1-user] {\scriptsize $\textsf{Instrument}$};
  \node (uc2-agent1) [role,right=1em of uc2-instrument] {\scriptsize $\textsf{Agent}_1$};
  \node (uc2-agent2) [role,right=1em of uc2-agent1] {\scriptsize $\textsf{Agent}_2$};
  \node (uc2-user)  [role,right=1em of uc2-agent2] {\scriptsize $\textsf{User}$};

  \draw[dash pattern= on 8pt off 4pt]
    (uc2-instrument.south) -- ($(uc2-instrument.south)+(0,-3.4)$);
  \draw[dash pattern= on 8pt off 4pt]
    (uc2-agent1.south) -- ($(uc2-agent1.south)+(0,-3.4)$);
  \draw[dash pattern= on 8pt off 4pt]
    (uc2-agent2.south) -- ($(uc2-agent2.south)+(0,-3.4)$);
  \draw[dash pattern= on 8pt off 4pt]
    (uc2-user.south) -- ($(uc2-user.south)+(0,-3.4)$);

  \draw[->, very thick, violet]
    ($(uc2-instrument)+(0,-1)$) -- ($(uc2-agent1)+(0,-1)$)
    node [action] {\scriptsize $\actout{s_2}{\instrument}{\agentone}{rd}$};
  \draw[->, very thick, blue]
    ($(uc2-agent1)+(0,-1.5)$) -- ($(uc2-user)+(0,-1.5)$)
    node [action] {\scriptsize $\actout{s_1}{\agentone}{\user}{pd1}$};
  \draw[<-, very thick, violet]
    ($(uc2-instrument)+(0,-2)$) -- ($(uc2-agent1)+(0,-2)$)
    node [action] {\scriptsize $\actout{s_2}{\agentone}{\instrument}{ack}$};
  \draw[->, very thick, violet]
    ($(uc2-instrument)+(0,-2.5)$) -- ($(uc2-agent2)+(0,-2.5)$)
    node [action] {\scriptsize $\actout{s_2}{\instrument}{\agenttwo}{rd}$};
  \draw[->, very thick, blue]
    ($(uc2-agent2)+(0,-3)$) -- ($(uc2-user)+(0,-3)$)
    node [action] {\scriptsize $\actout{s_1}{\agenttwo}{\user}{pd2}$};
  \draw[<-, very thick, violet]
    ($(uc2-instrument)+(0,-3.2)$) -- ($(uc2-agent2)+(0,-3.2)$)
    node [action] {\scriptsize $\actout{s_2}{\agenttwo}{\instrument}{ack}$};

  \path (uc2-agent1) -- (uc2-agent2) node [midway] (uc2-agent12) {};
  \node [below=3.5 of uc2-agent12] {\footnotesize $\textsf{Usecase 2}$};

\end{tikzpicture}
\end{center}

\vspace{3mm}

\begin{center}
\begin{tikzpicture}

  \node (uc3-instrument) [role] {\scriptsize $\textsf{Instrument}$};
  \node (uc3-agent1) [role,right=1em of uc3-instrument] {\scriptsize $\textsf{Agent}_1$};
  \node (uc3-agent2) [role,right=1em of uc3-agent1] {\scriptsize $\textsf{Agent}_2$};
  \node (uc3-user)  [role,right=1em of uc3-agent2] {\scriptsize $\textsf{User}$};

  \draw[dash pattern= on 8pt off 4pt]
    (uc3-instrument.south) -- ($(uc3-instrument.south)+(0,-3.4)$);
  \draw[dash pattern= on 8pt off 4pt]
    (uc3-agent1.south) -- ($(uc3-agent1.south)+(0,-3.4)$);
  \draw[dash pattern= on 8pt off 4pt]
    (uc3-agent2.south) -- ($(uc3-agent2.south)+(0,-3.4)$);
  \draw[dash pattern= on 8pt off 4pt]
    (uc3-user.south) -- ($(uc3-user.south)+(0,-3.4)$);

  \draw[->, very thick, violet]
    ($(uc3-instrument)+(0,-0.7)$) -- ($(uc3-agent1)+(0,-0.7)$)
    node [action] {\scriptsize $\actout{s_2}{\instrument}{\agentone}{rd}$};
  \draw[->, very thick, violet]
    ($(uc3-instrument)+(0,-1.2)$) -- ($(uc3-agent2)+(0,-1.2)$)
    node [action] {\scriptsize $\actout{s_2}{\instrument}{\agenttwo}{rd}$};
  \draw[->, very thick, blue]
    ($(uc3-agent1)+(0,-1.7)$) -- ($(uc3-user)+(0,-1.7)$)
    node [action] {\scriptsize $\actout{s_1}{\agentone}{\user}{pd1}$};
  \draw[->, very thick, blue]
    ($(uc3-agent2)+(0,-2.2)$) -- ($(uc3-user)+(0,-2.2)$)
    node [action] {\scriptsize $\actout{s_1}{\agenttwo}{\user}{pd2}$};
  \draw[<-, very thick, violet]
    ($(uc3-instrument)+(0,-2.7)$) -- ($(uc3-agent1)+(0,-2.7)$)
    node [action] {\scriptsize $\actout{s_2}{\agentone}{\instrument}{ack}$};	
  \draw[<-, very thick, violet]
    ($(uc3-instrument)+(0,-3.2)$) -- ($(uc3-agent2)+(0,-3.2)$)
    node [action] {\scriptsize $\actout{s_2}{\agenttwo}{\instrument}{ack}$};

  \path (uc3-agent1) -- (uc3-agent2) node [midway] (uc3-agent12) {};
  \node [below=3.5 of uc3-agent12] {\footnotesize $\textsf{Usecase 3}$};

\end{tikzpicture}
\end{center}
\caption{Three usecases from UC.R2.13 ``Acquire Data
From Instrument'' in \cite{ooi}}
\label{fig:ooi}
\end{figure}

The running example for the thread transformation in the previous sections 
is the minimum to demonstrate a difference between
$\wb^s_g$ and $\wb^s$. This
discipline can be applied to general situations where
multiple agents need to interact following a global specification.
Our governance bisimulation
can be useful in other large applications, for example, it can be applied to 
the optimisation and verification of distributed systems,  
and the correctness of service communication.
In this section, we present a reasoning example based on
the real world usecase, UC.R2.13 ``Acquire Data
From Instrument'', from the 
Ocean Observatories Initiative (OOI)~\cite{ooi},
and show the optimisation and
verification of network services.

In this usecase, we assume a user program (\CODE{U})
which is connected to the Integrated Observatory Network (ION).
The ION provides the interface between users and remote sensing
instruments. 
The user requests, via the ION agent services (\CODE{A}), 
the acquisition of processed data from an instrument 
(\CODE{I}). More specifically the user requests from the ION
two different formats of the instrument data.
In the above usecase we distinguish two points of
communication coordination: i) an internal ION multiparty
communication and ii) an external communication between
ION instruments and agents and the user. In other
words it is natural to require the initiation of two
multiparty session types to coordinate the services
and clients involved in the usecase.
The behaviour of the multiparty session connection 
between the User (\CODE{U}) and ION is dependent on
the implementation and the synchronisation
of the internal ION session.

Below we present three possible implementation scenarios
and compare their behaviour with respect to the user
program. Depending on the ION requirements we can
choose the best implementation with the correct behaviour.

\subsection{Usecase Scenario 1}

In the first scenario (depicted in Usecase 1 in Figure~\ref{fig:ooi}) 
the user program 
(\CODE{U}) wants to acquire the first format of
data from  the instrument (\CODE{I}) 
and at the same time acquire the second format of the 
data from an agent service (\CODE{A}).
The communication between the agent (\CODE{A}) 
and the instrument happens internally in the ION
on a separate private session. 

\begin{enumerate}
	\item	A new session connection $s_1$ is established between 
		(\CODE{U}), (\CODE{I}) and (\CODE{A}).
	\item	A new session connection $s_2$ is established 
		between (\CODE{A}) and (\CODE{I}).
	\item	(\CODE{I}) sends raw data through $s_2$ to (\CODE{A}).
	\item	(\CODE{A}) sends processed data (format 1) through $s_1$ to (\CODE{U}).
	\item	(\CODE{A}) sends the acknowledgement through $s_2$ to (\CODE{I}).
	\item	(\CODE{I}) sends processed data (format 2) through $s_1$ to (\CODE{U}).
\end{enumerate}

The above scenario is implemented as follows:
\begin{eqnarray*}
	I_0 \Par A \Par U
\end{eqnarray*}
\noi where
\begin{eqnarray*}
	I_0 &=&	\acc{a}{\instrumentzero}{s_1} \req{b}{\instrumentzero}{s_2} 
		\out{s_2}{\instrumentzero}{\agentone}{\mathtt{rd}} \inp{s_2}{\instrumentzero}{\agentone}{x} 
		\out{s_1}{\instrumentzero}{\user}{\mathtt{pd}} \inact\\
	A &=&	\acc{a}{\agentone}{s_1} \acc{b}{\agentone}{s_2} 
		\inp{s_2}{\agentone}{\instrumentzero}{x} \out{s_1}{\agentone}{\user}{\mathtt{pd}} 
		\out{s_2}{\agentone}{\instrumentzero}{\mathtt{ack}} \inact\\
	U &=&	\req{a}{\user}{s_1} \inp{s_1}{\user}{\agentone}{x} \inp{s_1}{\user}{\instrumentzero}{y} \inact
\end{eqnarray*}
\noi and $\instrumentzero$ is the instrument role, 
$\agentone$ is the agent role and $\user$ is the user role.

\subsection{Usecase scenario 2}

Use case scenario 1 implementation requires from the instrument program to process raw data
in a particular format (format 2) before sending them to the user program. 
In a more modular and 
fine-grain implementation, the instrument program should only send raw data to the ION
interface for processing and forwarding to the user. A separate session 
between the instrument and the ION interface and a separate session between
the ION interface and the user make a distinction into different logical
and processing levels. 

To capture the above implementation we assume a scenario
(depicted in Usecase 2 in Figure~\ref{fig:ooi}) 
with the user program (\CODE{U}), the instrument (\CODE{I}) 
and agents (\CODE{A}$_1$) and (\CODE{A}$_2$):

\begin{enumerate}
	\item	A new session connection $s_1$ is established between 
		(\CODE{U}), (\CODE{A$_1$}) and (\CODE{A$_2$}).
	\item	A new session connection $s_2$ is established 
		between (\CODE{A$_1$}), (\CODE{A$_2$}) and (\CODE{I}).
	\item	(\CODE{I}) sends raw data through $s_2$ to (\CODE{A$_1$}).
	\item	(\CODE{A$_1$}) sends processed data (format 1) through $s_1$ to (\CODE{U}).
	\item	(\CODE{A$_1$}) sends the acknowledgement through $s_2$ to (\CODE{I}).
	\item	(\CODE{I}) sends raw data through $s_2$ to (\CODE{A$_2$}).
	\item	(\CODE{A$_2$}) sends processed data (format 2) through $s_1$ to (\CODE{U}).
	\item	(\CODE{A$_2$}) sends the acknowledgement through $s_2$ to (\CODE{I}).
\end{enumerate}
The above scenario is implemented as follows:
\begin{eqnarray*}
	I_1 \Par A_1 \Par A_2 \Par U
\end{eqnarray*}
\noi where
\begin{eqnarray*}
	I_1 &=&	\req{b}{\instrument}{s_2} 
		\out{s_2}{\instrument}{\agentone}{\mathtt{rd}} 
		\inp{s_2}{\instrument}{\agentone}{x} 
		\out{s_2}{\instrument}{\agenttwo}{\mathtt{rd}}
		\inp{s_2}{\instrument}{\agentone}{x} \inact \\
	A_1 &=&	\acc{a}{\agentone}{s_1} \acc{b}{\agentone}{s_2} 
		\inp{s_2}{\agentone}{\instrument}{x} 
		\out{s_1}{\agentone}{\user}{\mathtt{pd}} 
		\out{s_2}{\agentone}{\instrument}{\mathtt{ack}} \inact\\
	A_2 &=&	\acc{a}{\agenttwo}{s_1} \acc{b}{\agenttwo}{s_2} 
		\inp{s_2}{\agenttwo}{\instrument}{x} 
		\out{s_1}{\agenttwo}{\user}{\mathtt{pd}} 
		\out{s_2}{\agenttwo}{\instrument}{\mathtt{ack}} \inact\\
	U &=&	\req{a}{\user}{s_1} \inp{s_1}{\user}{\agentone}{x} \inp{s_1}{\user}{\agenttwo}{y} \inact
\end{eqnarray*}
\noi and $\instrument$ is the instrument role, 
$\agentone$ and $\agenttwo$ are the agent roles and $\user$ is the user role.
Furthermore, for session $\s_1$ we let role 
$\instrumentzero$ (from scenario 1) as $\agenttwo$, 
since we maintain the session $s_1$ as it is defined in the scenario 1.

\subsection{Usecase scenario 3}
A step further is to enhance the performance of usecase
scenario 2 if the instrument (\CODE{I}) 
code in usecase scenario 2
can have a different implementation, where raw data
\NY{is} sent to both agents (\CODE{A$_1$}, \CODE{A$_2$})
before any acknowledgement is received. ION agents
can process data in parallel resulting in an
optimised implementation. This scenario is 
depicted in Usecase 3 in Figure~\ref{fig:ooi}. 



\begin{enumerate}
	\item	A new session connection $s_1$ is established between 
		(\CODE{U}), (\CODE{A$_1$}) and (\CODE{A$_2$}).
	\item	A new session connection $s_2$ is established 
		between (\CODE{A$_1$}), (\CODE{A$_2$}) and (\CODE{I}).
	\item	(\CODE{I}) sends raw data through $s_2$ to (\CODE{A$_1$}).
	\item	(\CODE{I}) sends raw data through $s_2$ to (\CODE{A$_2$}).
	\item	(\CODE{A$_1$}) sends processed data (format 1) through $s_1$ to (\CODE{U}).
	\item	(\CODE{A$_1$}) sends acknowledgement through $s_2$ to (\CODE{I}).
	\item	(\CODE{A$_2$}) sends processed data (format 2) through $s_1$ to (\CODE{U}).
	\item	(\CODE{A$_2$}) sends acknowledgement through $s_2$ to (\CODE{I}).
\end{enumerate}

The process is now refined as
\begin{eqnarray*}
	I_2 \Par A_1 \Par A_2 \Par U
\end{eqnarray*}
\noi where
\begin{eqnarray*}
	I_2 &=&	\req{b}{\instrument}{s_2} 
		\out{s_2}{\instrument}{\agentone}{\mathtt{rd}} 
		\out{s_2}{\instrument}{\agenttwo}{\mathtt{rd}}
		\inp{s_2}{\instrument}{\agentone}{x} 
		\inp{s_2}{\instrument}{\agentone}{x} \inact \\
\end{eqnarray*}
\noi and $\instrument$ implements the instrument role, $\agentone$ and $\agenttwo$ are the agent roles 
and $\user$ is the user role.

\subsection{Bisimulations}
The main concern of the three scenarios is to 
implement the Integrated Ocean Network interface 
respecting the multiparty communication protocols.

Having the user process as the observer we can see that typed
processes:
\begin{eqnarray*}
	\Gtprocess{I_0 \Par A}{\De_0}& \text{and} 
	&\Gtprocess{I_1 \Par A_1 \Par A_2}{\De_1}
\end{eqnarray*}
\noi are bisimilar (using $\swb$)
since in both process we observe
the following transition relations 
(recall that $\instrumentzero = \agenttwo$) :

\[
	\Gtprocess{I_0 \Par A}{\De_0} 
					\ \trans{\actacc{a}{s}{\agentone, \instrumentzero}}
					\ \trans{\tau} 
					\ \trans{\actout{s_1}{\agentone}{\user}{\mathtt{pd}}}
					\ \trans{\actout{s_1}{\instrumentzero}{\user}{\mathtt{pd}}}
\]
and
\[
	\Gtprocess{I_1 \Par A_1 \Par A_2}{\De_1} 
							\ \trans{\actacc{a}{s}{\agentone, \agenttwo}}
							\ \trans{\tau} 
							\ \trans{\actout{s_1}{\agentone}{\user}{\mathtt{pd}}}
							\ \trans{\actout{s_1}{\agenttwo}{\user}{\mathtt{pd}}}
\]

Next we give the bisimulation
closure. Let:
\[
\begin{array}{rclcl}
	\Gtprocess{I_0 \Par A}{\De_0} &\trans{\actacc{a}{s}{\agentone, \agenttwo}}& \Gtprocess{P_1}{\De_{01}}
	& \trans{\tau} & \Gtprocess{P_2}{\De_{02}}\\
	& \trans{\tau} & \Gtprocess{P_3}{\De_{03}}
	& \trans{\actout{s_1}{\agentone}{\user}{\mathtt{pd}}} & \Gtprocess{P_4}{\De_{04}}\\
	& \trans{\tau} & \Gtprocess{P_5}{\De_{05}}
	& \trans{\actout{s_1}{\instrumentzero}{\user}{\mathtt{pd}}} & \Gtprocess{P_6}{\De_{06}}\\
	\\
	\Gtprocess{I_1 \Par A_1 \Par A_2}{\De_1} &\trans{\actacc{a}{s}{\agentone, \agenttwo}}& 
	\Gtprocess{Q_1}{\De_{11}}
	& \trans{\tau} & \Gtprocess{Q_2}{\De_{12}}\\
	& \trans{\tau} & \Gtprocess{Q_3}{\De_{13}}
	& \trans{\actout{s_1}{\agentone}{\user}{\mathtt{pd}}} & \Gtprocess{Q_4}{\De_{14}}\\
	& \trans{\tau} & \Gtprocess{Q_5}{\De_{15}}
	& \trans{\tau} & \Gtprocess{Q_6}{\De_{16}}\\
	& \trans{\actout{s_1}{\agenttwo}{\user}{\mathtt{pd}}} & \Gtprocess{P_7}{\De_{17}}
	& \trans{\tau} & \Gtprocess{Q_8}{\De_{18}}
\end{array}
\]
The bisimulation closure is:
\[
\begin{array}{rcl}
	\mathcal{R} &=& 
	\set{(\Gtprocess{I_0 \Par A}{\De_0}, \Gtprocess{I_1 \Par A_1 \Par A_2}{\De_1}),
	(\Gtprocess{P_1}{\De_{01}}, \Gtprocess{Q_1}{\De_{11}})\\
	& &(\Gtprocess{P_2}{\De_{02}}, \Gtprocess{Q_2}{\De_{12}}),
	(\Gtprocess{P_3}{\De_{03}}, \Gtprocess{Q_3}{\De_{13}})\\
	& &(\Gtprocess{P_4}{\De_{04}}, \Gtprocess{Q_4}{\De_{14}}),
	(\Gtprocess{P_5}{\De_{05}}, \Gtprocess{Q_5}{\De_{15}})\\
	& &(\Gtprocess{P_5}{\De_{05}}, \Gtprocess{Q_6}{\De_{16}}),
	(\Gtprocess{P_6}{\De_{06}}, \Gtprocess{Q_7}{\De_{17}})\\
	& &(\Gtprocess{P_6}{\De_{06}}, \Gtprocess{Q_8}{\De_{18}})
	}
\end{array}
\]

The two implementations (scenario 1 and scenario 2)
are completely interchangeable with respect to $\swb$.

If we proceed with the case of the scenario 3 we can 
see that typed process $\Gtprocess{I_2 \Par A_1 \Par A_2}{\De_2}$
cannot be simulated (using $\swb$) by
scenarios 1 and 2, since we can observe the execution:

\[	
	\Gtprocess{I_1 \Par A_1 \Par A_2}{\De_1} 	\ \trans{\tau} 
							\ \trans{\actacc{a}{s}{\agentone, \agenttwo}}
							\ \trans{\actout{s_1}{\agenttwo}{\user}{\mathtt{pd}}}
\]

By changing the communication ordering in the ION private
session $s_2$ we change the communication behaviour on
the external session channel $s_1$. Nevertheless, the
communication behaviour remains the same if we take into
account the global multiparty protocol of $s_1$ and the way
it governs the behaviour of the three usecase scenarios.

Hence we use $\govwbs$. The definition of the global environment is as follows:
\begin{eqnarray*}
E & = & s_1: \valuegt{\agentone}{\user}{\mathtt{PD}} \valuegt{\agenttwo}{\user}{\mathtt{PD}}
\end{eqnarray*}
The global protocol governs processes $I_1 \Par A_1 \Par A_2$ (similarly, $I_0 \Par A$) 
and $I_2 \Par A_1 \Par A_2$
to always observe action $\trans{\actout{s_1}{\agenttwo}{\user}{\mathtt{pd}}}$
after action $\trans{\actout{s_1}{\agentone}{\user}{\mathtt{pd}}}$ for
both processes.

Also note that the global protocol for $s_2$ is not present in the
global environment, because $s_2$ is restricted. 
The specification and implementation of
session $s_2$ are abstracted from the behaviour of
session $s_1$.

\section{Related and Future Work}
\label{sec:related}
Session types \cite{THK,honda.vasconcelos.kubo:language-primitives}
have been studied over the last decade for a wide range of
process calculi and programming languages, 
as a typed foundation for structured communication programming. 
Recently several works developed
multiparty session types and their extensions. While typed behavioural
equivalences are one of the central topics of the $\pi$-calculus,
surprisingly the typed behavioural semantics based on session types
have been less explored
and focusing only on binary (two-party) sessions.

In this section we first compare our work in a broader context
in relation with
the previous work on typed behavioural theories in the
$\pi$-calculus.  We then discuss and compare our work 
with more specific results: behavioural theories in the binary session types and
bisimulations defined with environments.

\paragraph{\bf Typed behavioural theories in the $\pi$-calculus}
An effect of types to behaviours of processes was first studied with 
the IO-subtyping in \cite{PiSa96b}. Since types can limit
contexts (environments) where processes can interact, typed equivalences
usually offer {\em coarse} semantics than untyped semantics. 
After \cite{PiSa96b}, many works on typed $\pi$-calculi 
have investigated correctness of encodings of known concurrent and
sequential calculi in order to examine semantic
effects of proposed typing systems. 

The type discipline closely related
to session types is a family of linear typing systems. The
work \cite{LinearPi} first proposed a linearly typed barbed congruence and 
reasoned a tail-call optimisation of higher-order functions which are
encoded 
as processes. 
The work \cite{Yoshida96} had
used a bisimulation of graph-based types to prove the full abstraction
of encodings of the polyadic synchronous $\pi$-calculus into the
monadic synchronous $\pi$-calculus. 
Later typed equivalences of a
family of linear and affine calculi \cite{BHY,YBH04,BergerHY05} 
were used to encode 
PCF \cite{Plotkin1977223,Milner19771}, the simply typed $\lambda$-calculi with sums and products, and system F \cite{GirardJY:protyp}
fully abstractly (a fully abstract encoding of the $\lambda$-calculi 
was an open problem in \cite{MilnerR:funp}).  
The work \cite{YHB02} proposed a new bisimilarity
method associated with linear type structure and strong
normalisation. It presented applications to reason secrecy in
programming languages. A subsequent work \cite{HY02} adapted these results
to a practical direction. It proposes new typing
systems for secure higher-order and multi-threaded programming 
languages. 
In these works, typed properties, linearity and liveness, 
play a fundamental role in the analysis. In general, linear types 
are suitable to encode ``sequentiality'' in the sense of 
\cite{HylandJME:fulapi,AbramskyS:fulap}.

Our first bisimulation $\wb^s$ is classified 
as one of linear bisimulations, 
capturing a mixture between shared behaviours (interactions at 
shared names) and linear behaviours (interactions at session names). 
Hence it is coarser than the untyped semantics 
(see Example~\ref{ex:swb}). 
Contrast to these linear bisimulations, 
the governance bisimulation offers more {\em fine-grained} 
equivalences since the same typable processes are observed in different ways 
depending on a witness (global types). 
See the last paragraph for a relationship with environment bisimulations. 

\paragraph{\bf Behavioural theories in the binary session types}
Our work in \cite{DBLP:conf/forte/KouzapasYH11,KouzapasYHH13}
develops an {\em asynchronous binary} session typed behavioural theory with
event operations. A 
labelled transition system 
is defined on session type
process judgements and ensures properties,
such as linearity in the presence of asynchronous queues.
We then apply the theory to validate
a transformation between threaded and event servers based on
the Lauer-Needham duality \cite{LauerNeedham79}.
For reasoning this transformation, we use a confluence technique 
developed in  \cite{confPi}. We have established several up-to 
techniques using confluence and determinacy 
properties on reductions on typed session names.  
These useful up-to techniques are still applicable to our standard 
and governed bisimulations since the up-to bisimulation obtained 
in \cite{DBLP:conf/forte/KouzapasYH11,KouzapasYHH13} is only concerned 
on the local $\tau$-actions on session names. It is an interesting future work to investigate the up-to techniques or useful axioms which are specific to 
the governed bisimulation. 

The work \cite{PCPT12} proves that
the proof conversions induced by a Linear Logic interpretation
of session types 
coincide with an observational equivalence over
a strict subset of the binary synchronous session processes.
The approach is extended to 
the binary asynchronous and binary synchronous polymorphic session processes 
in \cite{DeYoungCPT12} and \cite{CairesPPT13}, respectively.

The main focus of our paper is {\em multiparty} session types
and governed bisimulation, whose definitions and
properties crucially depend on
information of global types.
In the first author's PhD thesis \cite{dkphdthesis},
we studied how governed bisimulations can be systematically developed
under various semantics including three kinds of asynchronous
semantics 
by modularly changing the LTS for processes, environments and global types.
For governed bisimulations, we can reuse all of the definitions
among four semantics
by only changing the conditions of the LTS of global types to suit each semantics. 

Another recent work \cite{DemangeonH11} gives a fully abstract encoding of a {\em binary} synchronous
session typed calculus into a linearly typed $\pi$-calculus \cite{BHY}.\footnote{The work \cite{Dardha:2012:STR:2370776.2370794} also 
uses linear types to 
encode binary session types for the first and higher-order 
$\pi$-calculi \cite{tlca07}, but it does not 
study full abstraction results with respect to 
a behavioural equivalence or bisimulation.} 
We believe the same encoding method is smoothly
applicable to $\swb$ since it is
defined solely based on the projected types (i.e.~local types).
However a governed bisimulation requires a global witness, hence
the additional global information would be required for full abstraction. 

\paragraph{\bf Behavioural semantics defined with environments}
The constructions of our work are hinted by
\cite{DBLP:journals/mscs/HennessyR04}
which studies typed behavioural semantics for the $\pi$-calculus
with IO-subtyping where
an LTS for pairs of typing environments
and processes is used for defining typed
testing equivalences and barbed congruence.
On the other hand, in \cite{DBLP:journals/mscs/HennessyR04}, 
the type environment indexing the
observational equivalence resembles more a dictator 
where the refinement can be obtained by the fact that 
the observer has only partial knowledge on the typings, 
than a coordinator like our approach.  
Several papers have developed bisimulations
for the higher-order $\pi$-calculus or its variants
using the information of the environments. 
In \cite{DBLP:conf/lics/SangiorgiKS07} the authors take a
general approach for developing a behavioural theory for
higher order processes, both in the $\lambda$-calculus and
the $\pi$-calculus. The bisimulation relations are developed
in the presence of an environment knowledge for higher
order communication. Congruence and compositionality of
processes are restricted with respect to the environment.
A recent paper
\cite{DBLP:conf/esop/KoutavasH11} uses
a pair of a process and an observer knowledge
set for the LTS. The knowledge set contains a mapping from first order
values to the higher-order processes, which allows a tractable
higher-order behavioural theory using the first-order LTS. 

We record
a choreographic type as the witness in the environment to obtain
fine-grained bisimulations of multiparty processes.
The highlight of our bisimulation construction is
an effective use of the semantics of
global types for LTSs of processes (cf.~[Inv] in Figure~\ref{fig:ELTS} and Definition \ref{def:glts}).
Global types can guide the coordination among parallel
threads giving explicit protocols, hence it is applicable to
a semantic-preserving optimisation (cf.~Example \ref{ex:govern} and 
\S~\ref{sec:app:ooi}).

\paragraph{\bf Future work}
While it is known that it is undecidable to check
$P\wb Q$ in the full $\pi$-calculus, it is an interesting future topic to
investigate automated bisimulation-checking techniques or 
finite axiomatisations 
for the governed bisimulations for some subset of
multiparty session processes. 

More practical future direction is incorporating with, not only well-known 
subtyping of session types \cite{GH05,DemangeonH11} but also 
advanced refinements for communication optimisation 
(such as asynchronous subtyping  
\cite{mostrous_yoshida_honda_esop09,mostrous09sessionbased} 
and asynchronous distributed states \cite{ChenH12}) 
to seek practical applications of governed bisimulations 
to, e.g.~parallel algorithms \cite{NYH12} and 
distributed computing \cite{ooi}.


\bibliographystyle{abbrv}
\bibliography{session}

\begin{thebibliography}{10}

\bibitem{ooi}
{{O}cean {O}bservatories {I}nitiative ({OOI})}.
\newblock \url{http://www.oceanobservatories.org/}.

\bibitem{AbramskyS:fulap}
S.~Abramsky, R.~Jagadeesan, and P.~Malacaria.
\newblock Full abstraction for {PCF}.
\newblock {\em TCS}, 163:409--470, 2000.

\bibitem{DBLP:journals/tcs/AmadioCS98}
R.~M. Amadio, I.~Castellani, and D.~Sangiorgi.
\newblock On bisimulations for the asynchronous pi-calculus.
\newblock {\em TCS}, 195(2):291--324, 1998.

\bibitem{BHY}
M.~Berger, K.~Honda, and N.~Yoshida.
\newblock Sequentiality and the $\pi$-calculus.
\newblock In {\em Proc.~TLCA'01}, volume 2044 of {\em LNCS}, pages 29--45,
  2001.

\bibitem{BergerHY05}
M.~Berger, K.~Honda, and N.~Yoshida.
\newblock Genericity and the pi-calculus.
\newblock {\em Acta Inf.}, 42(2-3):83--141, 2005.

\bibitem{DBLP:journals/corr/BernardiH13}
G.~Bernardi and M.~Hennessy.
\newblock Using higher-order contracts to model session types.
\newblock {\em CoRR}, abs/1310.6176, 2013.

\bibitem{BettiniCDLDY08}
L.~Bettini et~al.
\newblock Global progress in dynamically interleaved multiparty sessions.
\newblock In {\em CONCUR}, volume 5201 of {\em LNCS}, pages 418--433. Springer,
  2008.

\bibitem{CairesPPT13}
L.~Caires, J.~A. P{\'e}rez, F.~Pfenning, and B.~Toninho.
\newblock Behavioral polymorphism and parametricity in session-based
  communication.
\newblock In {\em ESOP}, volume 7792 of {\em LNCS}, pages 330--349. Springer,
  2013.

\bibitem{CDL}
{W3C} {W}eb {S}ervices {C}horeography.
\newblock \url{http://www.w3.org/2002/ws/chor/}.

\bibitem{ChenH12}
T.-C. Chen and K.~Honda.
\newblock Specifying stateful asynchronous properties for distributed programs.
\newblock In {\em CONCUR}, volume 7454 of {\em LNCS}, pages 209--224. Springer,
  2012.

\bibitem{CDYP13}
M.~Coppo, M.~Dezani-Ciancaglini, N.~Yoshida, and L.~Padvani.
\newblock Global progress in dynamically interleaved multiparty sessions.
\newblock {\em Mathematical Structures in Computer Science}.
\newblock To appear. Available from
  \url{http://mrg.doc.ic.ac.uk/publications.html}.

\bibitem{Dardha:2012:STR:2370776.2370794}
O.~Dardha, E.~Giachino, and D.~Sangiorgi.
\newblock Session types revisited.
\newblock In {\em Proceedings of the 14th symposium on Principles and practice
  of declarative programming}, PPDP '12, pages 139--150, New York, NY, USA,
  2012. ACM.

\bibitem{DemangeonH11}
R.~Demangeon and K.~Honda.
\newblock Full abstraction in a subtyped pi-calculus with linear types.
\newblock In {\em CONCUR}, volume 6901 of {\em LNCS}, pages 280--296. Springer,
  2011.

\bibitem{DY12}
P.-M. Deni{\'e}lou and N.~Yoshida.
\newblock Multiparty session types meet communicating automata.
\newblock In {\em ESOP}, volume 7211 of {\em LNCS}, pages 194--213. Springer,
  2012.

\bibitem{DYBH12}
P.-M. Deni{\'e}lou, N.~Yoshida, A.~Bejleri, and R.~Hu.
\newblock Parameterised multiparty session types.
\newblock {\em Logical Methods in Computer Science}, 8(4), 2012.

\bibitem{DeYoungCPT12}
H.~DeYoung, L.~Caires, F.~Pfenning, and B.~Toninho.
\newblock {Cut Reduction in Linear Logic as Asynchronous Session-Typed
  Communication}.
\newblock In P.~C{\'e}gielski and A.~Durand, editors, {\em CSL}, volume~16 of
  {\em LIPIcs}, pages 228--242. Schloss Dagstuhl - Leibniz-Zentrum fuer
  Informatik, 2012.

\bibitem{GH05}
S.~Gay and M.~Hole.
\newblock {Subtyping for Session Types in the Pi-Calculus}.
\newblock {\em Acta Informatica}, 42(2/3):191--225, 2005.

\bibitem{GirardJY:protyp}
J.-Y. Girard, Y.~Lafont, and P.~Taylor.
\newblock {\em Proofs and Types}, volume~7 of {\em Cambridge Tracts in
  Theoretical Computer Science}.
\newblock CUP, 1989.

\bibitem{Hennessy07}
M.~Hennessy.
\newblock {\em A {D}istributed {P}i-Calculus}.
\newblock CUP, 2007.

\bibitem{DBLP:journals/mscs/HennessyR04}
M.~Hennessy and J.~Rathke.
\newblock Typed behavioural equivalences for processes in the presence of
  subtyping.
\newblock {\em Mathematical Structures in Computer Science}, 14(5):651--684,
  2004.

\bibitem{honda.vasconcelos.kubo:language-primitives}
K.~Honda, V.~T. Vasconcelos, and M.~Kubo.
\newblock Language primitives and type disciplines for structured
  communication-based programming.
\newblock In {\em ESOP'98}, volume 1381 of {\em LNCS}, pages 22--138. Springer,
  1998.

\bibitem{HondaKYoshida95}
K.~Honda and N.~Yoshida.
\newblock On reduction-based process semantics.
\newblock {\em TCS}, 151(2):437--486, 1995.

\bibitem{HY02}
K.~Honda and N.~Yoshida.
\newblock A uniform type structure for secure information flow.
\newblock {\em TOPLAS}, 29(6), 2007.

\bibitem{HYC08}
K.~Honda, N.~Yoshida, and M.~Carbone.
\newblock {Multiparty Asynchronous Session Types}.
\newblock In {\em POPL'08}, pages 273--284. ACM, 2008.

\bibitem{HylandJME:fulapi}
J.~M.~E. Hyland and C.~H.~L. Ong.
\newblock On full abstraction for {PCF}.
\newblock {\em Inf.~\& Comp.}, 163:285--408, 2000.

\bibitem{LinearPi}
N.~Kobayashi, B.~C. Pierce, and D.~N. Turner.
\newblock Linearity and the {Pi-Calculus}.
\newblock {\em TOPLAS}, 21(5):914--947, Sept. 1999.

\bibitem{DBLP:conf/esop/KoutavasH11}
V.~Koutavas and M.~Hennessy.
\newblock A testing theory for a higher-order cryptographic language.
\newblock In {\em ESOP}, volume 6602 of {\em LNCS}, pages 358--377, 2011.

\bibitem{dkphdthesis}
D.~Kouzapas.
\newblock {\em A study of bisimulation theory for session types}.
\newblock PhD thesis, Department of Computing, Imperial College London, June
  2013.

\bibitem{KY13}
D.~Kouzapas and N.~Yoshida.
\newblock Globally governed session semantics.
\newblock In P.~R. D'Argenio and H.~C. Melgratti, editors, {\em CONCUR}, volume
  8052 of {\em LNCS}, pages 395--409. Springer, 2013.

\bibitem{DBLP:conf/forte/KouzapasYH11}
D.~Kouzapas, N.~Yoshida, and K.~Honda.
\newblock On asynchronous session semantics.
\newblock In {\em FMOODS/FORTE}, volume 6722 of {\em Lecture Notes in Computer
  Science}, pages 228--243, 2011.

\bibitem{KouzapasYHH13}
D.~Kouzapas, N.~Yoshida, R.~Hu, and K.~Honda.
\newblock On asynchronous eventful session semantics.
\newblock {\em MSCS}.
\newblock To appear.

\bibitem{LauerNeedham79}
H.~C. Lauer and R.~M. Needham.
\newblock On the duality of operating system structures.
\newblock {\em SIGOPS Oper. Syst. Rev.}, 13(2):3--19, 1979.

\bibitem{Milner19771}
R.~Milner.
\newblock Fully abstract models of typed lambda-calculi.
\newblock {\em TCS}, 4(1):1 -- 22, 1977.

\bibitem{MilnerR:funp}
R.~Milner.
\newblock Functions as processes.
\newblock {\em MSCS}, 2(2):119--141, 1992.

\bibitem{tlca07}
D.~Mostrous and N.~Yoshida.
\newblock Two session typing systems for higher-order mobile processes.
\newblock In {\em TLCA'07}, volume 4583 of {\em LNCS}, pages 321--335.
  Springer, 2007.

\bibitem{mostrous09sessionbased}
D.~Mostrous and N.~Yoshida.
\newblock Session-based communication optimisation for higher-order mobile
  processes.
\newblock In {\em TLCA'09}, volume 5608 of {\em LNCS}, pages 203--218.
  Springer, 2009.

\bibitem{mostrous_yoshida_honda_esop09}
D.~Mostrous, N.~Yoshida, and K.~Honda.
\newblock Global principal typing in partially commutative asynchronous
  sessions.
\newblock In {\em ESOP'09}, number 5502 in LNCS. Springer, 2009.

\bibitem{NYH12}
N.~Ng, N.~Yoshida, and K.~Honda.
\newblock {Multiparty Session C}: Safe parallel programming with message
  optimisation.
\newblock In {\em TOOLS}, volume 7304 of {\em LNCS}, pages 202--218. Springer,
  2012.

\bibitem{PCPT12}
J.~A. P{\'e}rez, L.~Caires, F.~Pfenning, and B.~Toninho.
\newblock {Linear Logical Relations for Session-Based Concurrency}.
\newblock In {\em ESOP}, volume 7211 of {\em LNCS}, pages 539--558. Springer,
  2012.

\bibitem{confPi}
A.~Philippou and D.~Walker.
\newblock On confluence in the pi-calculus.
\newblock In {\em ICALP'97}, volume 1256 of {\em Lecture Notes in Computer
  Science}, pages 314--324. Springer, 1997.

\bibitem{PierceBC:typsysfpl}
B.~Pierce.
\newblock {\em Types and Programming Languages}.
\newblock MIT Press, 2002.

\bibitem{PiSa96b}
B.~Pierce and D.~Sangiorgi.
\newblock Typing and subtyping for mobile processes.
\newblock {\em MSCS}, 6(5):409--454, 1996.

\bibitem{Plotkin1977223}
G.~Plotkin.
\newblock {LCF} considered as a programming language.
\newblock {\em Theoretical Computer Science}, 5(3):223 -- 255, 1977.

\bibitem{DBLP:conf/lics/SangiorgiKS07}
D.~Sangiorgi, N.~Kobayashi, and E.~Sumii.
\newblock Environmental bisimulations for higher-order languages.
\newblock In {\em LICS}, pages 293--302. IEEE Computer Society, 2007.

\bibitem{SangiorgiD:picatomp}
D.~Sangiorgi and D.~Walker.
\newblock {\em The $\pi$-{C}alculus: a {T}heory of {M}obile {P}rocesses}.
\newblock Cambridge University Press, 2001.

\bibitem{THK}
K.~Takeuchi, K.~Honda, and M.~Kubo.
\newblock {An Interaction-based Language and its Typing System}.
\newblock In {\em PARLE'94}, volume 817 of {\em LNCS}, pages 398--413, 1994.

\bibitem{Yoshida96}
N.~Yoshida.
\newblock Graph types for monadic mobile processes.
\newblock In {\em FSTTCS}, volume 1180 of {\em LNCS}, pages 371--386. Springer,
  1996.

\bibitem{YBH04}
N.~Yoshida, M.~Berger, and K.~Honda.
\newblock Strong {N}ormalisation in the $\pi$-{C}alculus.
\newblock {\em Information and Computation}, 191(2004):145--202, 2004.

\bibitem{YHB02}
N.~Yoshida, K.~Honda, and M.~Berger.
\newblock Linearity and bisimulation.
\newblock In {\em FoSSaCs02}, volume 2303 of {\em LNCS}, pages 417--433.
  Springer, 2002.

\bibitem{yoshida.vasconcelos:language-primitives}
N.~Yoshida and V.~T. Vasconcelos.
\newblock Language primitives and type discipline for structured
  communication-based programming revisited: Two systems for higher-order
  session communication.
\newblock {\em Electr. Notes Theor. Comput. Sci.}, 171(4):73--93, 2007.

\end{thebibliography}

\ifcamera\else
\iflong\else 
\fi
\appendix
\section{Proof for Theorem~\ref{the:subject}}
\label{app:sr}
We first state 
a substitution lemma, used in the proof of the subject 
reduction theorem (Theorem~\ref{the:subject}). The statement is 
from the substitution lemma in \cite{BettiniCDLDY08}. 

\begin{lem}[Substitution]
\label{lem:subst}
$ $
	\begin{itemize}
		\item	If $\Ga \cat x: S \proves P \hastype \De$ and $\Gproves{v}{S}$ then
			$\Ga \proves P\subst{v}{x} \hastype \De$.
		\item	If $\Ga \proves P \hastype \De \cat y: T$ then
			$\Ga \proves P \subst{\srole{s}{\p}}{y} \hastype \De \cat \srole{s}{\p}: T$
	\end{itemize}

\end{lem}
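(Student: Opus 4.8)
The plan is to prove both statements simultaneously by induction on the derivation of the typing judgement for $P$, performing a case analysis on the last rule applied. For the first part I would first record the analogous fact at the level of expressions: if $\Gacat{\typed{x} \So} \proves \e \hastype \So'$ and $\Gproves{v}{\So}$ then $\Gproves{\e\subst{v}{x}}{\So'}$, which follows by a straightforward induction on $\e$ using rules $\trule{Name}$, $\trule{Bool}$, $\trule{And}$ and $\trule{Match}$. Throughout I assume, by $\alpha$-conversion, that every bound name and variable is chosen distinct from $x$ (resp.\ $y$) and from the free names of $v$ (resp.\ of $\srole{s}{\p}$), so that no capture occurs and binders may be pushed under the substitution freely.

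For the first part, the cases where $x$ genuinely occurs are the base and prefix rules that mention an identifier or an expression. When the identifier $u$ in $\trule{MReq}$ or $\trule{MAcc}$ is $x$ itself, the premise forces $\So = \gG$ and the value $v$ is a shared name with $\Gproves{v}{\gG}$, so the conclusion is recovered by replacing the $\trule{Name}$ leaf with the derivation of $\Gproves{v}{\gG}$; when $u \neq x$ the claim follows from the induction hypothesis on the continuation. In $\trule{Send}$ and $\trule{If}$ the object expression may contain $x$, and here the preliminary expression fact yields the typing of $\e\subst{v}{x}$, while the induction hypothesis handles the continuation(s). All remaining rules ($\trule{Recv}$, $\trule{Deleg}$, $\trule{SRecv}$, $\trule{Sel}$, $\trule{Bra}$, $\trule{Complete}$, $\trule{NRes}$, $\trule{SRes}$, $\trule{Rec}$) commute with the substitution by the induction hypothesis; for $\trule{Conc}$ one uses that the substitution distributes over the disjoint split $\De_1 \cat \De_2$, since $x$ is a shared variable and does not affect $\dom{\De_i}$.

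For the second part, the substitution acts only on the linear layer: $y$ is a channel variable recorded in $\De \cat \typed{y} \T$, and it is replaced by the session endpoint $\srole{s}{\p}$. The only rules that inspect $y$ are the session-prefix rules, whose subject $c$ ranges over both variables and session endpoints; hence replacing $c = y$ by $c = \srole{s}{\p}$ leaves the premises and conclusion well typed, the local type $\T$ being carried unchanged from $y$ to $\srole{s}{\p}$. The delicate cases are $\trule{Deleg}$ and $\trule{SRecv}$, where the linear environment is extended or contracted by a second channel: here one checks that the endpoint being substituted is not the delegated/received object (guaranteed by the linearity side-conditions together with $\alpha$-conversion), so that $\srole{s}{\p}$ lands in the correct slot of $\De$. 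The inductive rules carry through unchanged.

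I expect the routine bulk to be the case analysis itself, and the only genuine subtlety to be the interaction of the channel substitution in the second part with the linearity constraints of the delegation rules — ensuring that moving from the variable $y$ to the endpoint $\srole{s}{\p}$ does not clash with an endpoint already present in $\De$, which is exactly what the disjointness and freshness conditions built into $\trule{Deleg}$, $\trule{SRecv}$ and $\trule{Conc}$ provide.
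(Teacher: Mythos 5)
Your proposal is correct and takes essentially the same approach as the paper: the paper's entire proof is the one-line remark that the lemma follows by a standard induction on the typing derivation (the statement itself being imported from \cite{BettiniCDLDY08}), and your case analysis is a faithful elaboration of exactly that induction, including the expression-level substitution fact and the $\alpha$-conversion convention the paper leaves implicit. One small overstatement worth fixing: in the $\trule{Deleg}$ case the substituted variable $y$ \emph{can} be the free delegated object $c'$ (nothing excludes this, unlike the bound receive variable in $\trule{SRecv}$), but that case is discharged by the same rule with $s[\p]$ in the object slot, so your argument goes through unchanged.
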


\begin{proof}
	The proof is a standard induction on the typing derivation.
\end{proof}

We state Subject congruence.

\begin{lem}[Subject Congruence]
	Let $\Gamma \proves P_1 \hastype \Delta$ and $P_1 \scong P_2$.
	Then $\Gamma \proves P_2 \hastype \Delta$.
\end{lem}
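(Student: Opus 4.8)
The plan is to proceed by rule induction on the derivation of $P_1 \scong P_2$. Since $\scong$ is the least congruence generated by the axioms of Figure~\ref{fig:synch-str-cong}, the induction splits into two kinds of cases: the base cases, one per generating axiom, and the closure cases for reflexivity, transitivity and compatibility with every process constructor. Reflexivity is immediate and transitivity composes the two implications. Because the statement is symmetric in $P_1,P_2$ (applying it to $P_2 \scong P_1$ yields the converse), it suffices to verify each base axiom in \emph{both} directions, from which the symmetry rule follows. The compatibility cases are routine: if $P_1 \scong P_2$ sits inside a context former such as $\news{n}(\cdot)$, $(\cdot)\Par R$, $\rec{X}{(\cdot)}$, a prefix, or a conditional branch, one inverts the unique typing rule applicable at that constructor, applies the induction hypothesis to the immediate subterm, and re-applies the same rule. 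This reduces the whole lemma to checking the generating axioms.

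First I would dispatch the parallel-monoid laws. For $P \scong P \Par \inact$, the forward direction applies $\trule{Inact}$ to type $\inact$ at $\es$ and then $\trule{Conc}$ (its disjointness side condition holds trivially); the backward direction inverts $\trule{Conc}$, but here I must account for $\trule{Complete}$, since the $\inact$ component may have been typed at a completed environment of $\tinact$ endpoints rather than at $\es$. I would record an auxiliary observation that $\inact$ is typable exactly at completed environments, and that completed components can be freely shed or absorbed while preserving $\coherent{-}$, so the resulting environment is exactly $\De$. Commutativity and associativity of $\Par$ then follow by inverting and re-applying $\trule{Conc}$, using that the concatenation $\cat$ is commutative and associative and that the disjointness conditions are symmetric. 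The reordering axiom $\news{n}\news{n'}P \scong \news{n'}\news{n}P$ and the garbage law $\news{n}\inact \scong \inact$ are handled by inverting $\trule{NRes}$/$\trule{SRes}$ the appropriate number of times and re-applying in the other order, noting for the session case that full coherency $\fcoherent{-}$ of the completed inactive environment is maintained. Scope extrusion $\newsp{n}{P} \Par Q \scong \newsp{n}{P \Par Q}$ with $n \notin \fn{Q}$ is the most bookkeeping-heavy: I would invert $\trule{Conc}$ and then $\trule{NRes}$/$\trule{SRes}$, use $n \notin \fn{Q}$ to see that $Q$'s typing does not mention $n$, and re-associate the restriction outside the composition.

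The substantive axioms are alpha-conversion and recursion unfolding. For $P \acong Q$ I would invoke a standard lemma that typing is invariant under renaming of bound names and variables, proved by an easy induction on typing derivations and assumed here. For $\rec{X}{P} \scong P\subst{\rec{X}{P}}{X}$ I would invert $\trule{Rec}$ to obtain $\tprocess{\Gacat{\typed{\X}\De}}{P}{\De}$ and then appeal to a process-variable substitution lemma: if $\tprocess{\Gacat{\typed{\X}\De}}{P}{\De}$ and $\Gtprocess{R}{\De}$, then $\Gtprocess{P\subst{R}{\X}}{\De}$. Instantiating $R := \rec{X}{P}$, which is typable at $\De$ by $\trule{Rec}$, yields the forward direction, and the converse re-applies $\trule{Rec}$ after folding.

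I expect the recursion-unfolding case to be the main obstacle, since it rests on the auxiliary process-variable substitution lemma, which must be proved by induction on the typing of $P$ and interacts with the equi-recursive convention adopted in \S~\ref{sec:global-local}. A secondary, pervasive source of friction is the weakening rule $\trule{Complete}$: because $\inact$-typings are not unique, every inversion of $\trule{Conc}$, $\trule{NRes}$ and $\trule{SRes}$ must carry enough slack to recover \emph{exactly} $\De$ while preserving coherency. Once the substitution lemma and this completed-environment bookkeeping are established, the remaining cases are mechanical.
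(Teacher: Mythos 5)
Your proposal is correct and takes the same route as the paper, which dispatches this lemma with ``standard and straightforward by induction'': your rule induction over the congruence derivation, with both directions of each generating axiom checked, is exactly the standard argument being alluded to. You also correctly identify the only two non-mechanical ingredients — the process-variable substitution lemma for the unfolding axiom $\rec{X}{\PP} \scong \PP\subst{\rec{X}{\PP}}{X}$, and the shedding/absorption of completed ($\tinact$) entries needed because $\trule{Complete}$ makes inversions of $\trule{Conc}$, $\trule{NRes}$ and $\trule{SRes}$ non-deterministic — which is precisely the bookkeeping the paper's one-line proof leaves implicit.
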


\begin{proof}
	The proof is standard and straightforward by induction.
\end{proof}

\subsection{Proof for Theorem~\ref{the:subject}}

\begin{proof}
	For the subject reduction proof we apply
	induction on the structure of the reduction relation.
	We present the two key cases: the rest is 
	similar with the subject reduction theorem for the communication 
	typing system in \cite{BettiniCDLDY08}.\\

	\Case{$\orule{Link}$}
	\noi Let:
	\[
		P = \acc{a}{\p}{x_1} P_1 \Par \dots \Par \req{a}{n}{x_n} P_n
	\]
	\noi We apply the
	typing rules $\trule{MAcc}$, $\trule{MReq}$ and $\trule{Conc}$ to obtain $\Gtprocess{P}{\De}$ with $\coherent{\De}$.
	Next assume that:
	\[
		P \red P' = \newsp{s}{P_1 \subst{\srole{s}{1}}{x_1} \Par \dots \Par P_n \subst{\srole{s}{n}}{x_n}}
	\]
	\noi From rule $\trule{Conc}$ and Lemma~\ref{lem:subst}, we obtain:
	\[ 
		\Gtprocess{P_1 \subst{\srole{s}{1}}{x_1} \Par \dots \Par P_n 
		\subst{\srole{s}{n}}{x_n}}
		{\Decat \typedrole{\s}{1} T_1 \dots \typedrole{\s}{n} T_n}
	\] 
with $\fcoherent{\set{\typedrole{\s}{1} T_1 \dots  \typedrole{\s}{n}
    \T_n}}$ (since each $T_i$ is a projection of $\Gamma(a)$).
	We apply rule $\trule{SRes}$
	to obtain $\Gtprocess{P'}{\De}$, as required.
	\\

	\Case{$\orule{Comm}$}
	\noi Let:
	\[
		P = \out{\s}{\p}{\q}{\va} P_1 \Par \inp{\s}{\q}{\p}{\x} P_2
	\]
	\noi and
	\[
		P \red P_1 \Par P_2 \subst{v}{x}
	\]
	\noi We apply typing rules
	$\trule{Send}$, $\trule{Rcv}$ and $\trule{Conc}$ to
        obtain: 
	$\Gtprocess{P}{\De}$ with:
	\[
		\De = \De_1 \cat \typedrole{\s}{\p} \tout{\q}{\U} T_\p \cat \typedrole{\s}{\q} \tinp{\p}{U} T_\q
	\]
	\noi and using Lemma~\ref{lem:subst} we obtain that
	$\Ga \proves P_1 \Par P_2 \subst{v}{x} \hastype \De_1$.

	\noi From the induction hypothesis we know that $\coherent{\De}$. 
	From the coherency of $\De$ and from Proposition~\ref{prop:bindual} we obtain: 
	\begin{eqnarray*}
		\coherent{\De_1}\\
		\proj{(\tout{\q}{\U} T_\p)}{\q} &=& \dual{(\tinp{\p}{U} \proj{\T_\q)}{\p}}
	\end{eqnarray*}
	\noi The latter result implies that 
	$\btout{U} (\proj{T_\p}{\q}) = \dual{\btinp{U} (\proj{\T_\q}{\p})}$,
	which in turn implies that 
	$\proj{T_\p}{\q} = \dual{\proj{T_\q}{\p}}$.
	From the last result and the coherency of $\De_1$ we get:
	\[
		\coherent{\De_1 \cat \typedrole{\s}{\p} T_\p \cat \typedrole{\s}{\q} T_\q}
	\]
Hence $\Gtprocess{P_1 \Par P_2 \subst{\va}{x}}{\De'}$ with 
$\De' = \De_1 \cat \typedrole{\s}{\p} T_\p \cat \typedrole{\s}{\q} T_\q$, and 
$\De'$ is coherent.
\end{proof}

\section{Proofs for Bisimulation Properties}
\label{app:bis-properties}

%

\subsection{Proof for Lemma \ref{lemma:Bisimulation_typing_relation}}
\label{app:Bisimulation_typing_relation}

\begin{proof}
	We use the coinductive method based on 
	the bisimilarity definition.
	Assume that for 
	$\Gtprocess{P_1}{\De_1} \swb \noGtprocess{P_2}{\De_2}$, 
	we have 
	$\De_1 \bistyp \De_2$. 
	Then by the definition of $\bistyp$, there exists $\De$ such that:
	\begin{eqnarray}
		\De_1 \red^* \De \textrm{ and } \De_2 \red^* \De \label{proof:bisimulation_typing_relation_1}
	\end{eqnarray}
	\noi Now assume that 
	$\Gtprocess{P_1}{\De_1} \trans{\ell} \noGtprocess{P_1'}{\De_1'}$ then,
	$\Gtprocess{P_2}{\De_2} \Trans{\ell} \noGtprocess{P_2'}{\De_2'}$ and 
	by the typed transition definition we obtain
	$(\Ga, \De_1) \trans{\ell} (\Ga, \De_1')$ and $(\Ga, \De_2) \Trans{\ell} (\Ga, \De_2')$.
	We need to show that $\De_1' \bistyp \De_2'$.

	We prove by a case analysis on the transition 
	$\trans{\ell}$ on $(\Ga, \De_1)$ and $(\Ga, \De_2)$.\\

		\noi {\bf Case $\ell = \tau$:}

		\noi We use the fact that $\trans{\tau}$ with $\equiv$
                coincides with $\red$.
		By Theorem~\ref{the:subject}, we obtain 
		that if $\Gtprocess{P_1}{\De_1}$ and $P_1 \red P_1'$ then
		$\Gtprocess{P_1'}{\De_1'}$ and $\De_1 \red \De_1'$ or $\De_1 = \De_1'$.

		For environment $\De_2$ we obtain that
		if $\Gtprocess{P_2}{\De_2}$ and $P_2 \Red P_2'$ then
		$\Gtprocess{P_2'}{\De_2'}$ and $\De_2 \red^\ast \De_2'$.
		From the coinductive hypothesis in 
(\ref{proof:bisimulation_typing_relation_1}), 
		we obtain that there exists $\De$ such that:
\[
		\begin{array}{c}
			\De_1 \red \De_1' \red^\ast \De \\
			\De_2 \red^\ast \De_2' \red^\ast \De
		\end{array} 
\]
		\noi or 
\[
		\begin{array}{c}
			\De_1 = \De_1' \red^\ast \De \\
			\De_2 \red^\ast \De_2' \red^\ast \De
		\end{array}
\]
		\noi as required.
		\\

		\noi {\bf Case $\ell = \actacc{a}{\p}{s}$ or $\ell = \actreq{a}{\p}{s}$:}

		\noi The environment tuple transition on $\ell$ is:
		\begin{eqnarray*}
			(\Ga, \De_1) &\trans{\ell}& 
			(\Ga, \De_1 \cat \typedrole{s}{\p} T_\p \cat \dots \cat \typedrole{s}{\q} T_\q)\\
			(\Ga, \De_2) &\Trans{} \trans{\ell} \Trans{}&
			(\Ga, \De_2'' \cat \typedrole{s}{\p} T_\p \cat \dots \cat \typedrole{s}{\q} T_\q)
		\end{eqnarray*}

		\noi We set 
		\[
			\De' = \De \cat \typedrole{s}{\p} T_\p \cat \dots \cat \typedrole{s}{\q} T_\q
		\]
		\noi to obtain:
		\begin{eqnarray*}
			\De_1 \cat \typedrole{s}{\p} T_\p \cat \dots \cat \typedrole{s}{\q} T_\q &\red^*& \De'\\
			\De_2'' \cat \typedrole{s}{\p} T_\p \cat \dots \cat \typedrole{s}{\q} T_\q &\red^*& \De'
		\end{eqnarray*}
		\noi by the coinductive hypothesis (\ref{proof:bisimulation_typing_relation_1}).
		\\

		\noi {\bf Case $\ell = \actout{s}{\p}{\q}{v}$:}

		\noi We know
		from the definition of environment transition, 
		$\srole{s}{\q} \notin \dom{\De_1}$ and $\srole{s}{\q} \notin \dom{\De_2}$ and
		thus $\srole{s}{\q} \notin \dom\De$.

		From the typed transition we know that $\De_1$ and $\De_2$
		have the form:
		\begin{eqnarray*}
			\De_1 &=& \typedrole{s}{\p} \tout{\q}{v} T \cat \De_1''\\
			\De_2 &=& \typedrole{s}{\p} \tout{\q}{v} T \cat \De_2''
		\end{eqnarray*}
		\noi and from the coinductive hypothesis (\ref{proof:bisimulation_typing_relation_1}),
		there exists $\De = \typedrole{s}{\p} \tout{\q}{v} T \cat \De''$ such that:
		\begin{eqnarray*}
			\De_1 &\red^*& \De\\
			\De_2 &\red^*& \De
		\end{eqnarray*}
		\noi Note that there is no reduction on $s[{\p}]$ 
		because $\srole{s}{\q} \notin \dom\De$.

		\noi From the environment transition relation we obtain that:
		\begin{eqnarray*}
			\De_1' &=& \typedrole{s}{\p} T \cat \De_1''\\
			\De_2' &=& \typedrole{s}{\p} T \cat \De_2''\\
		\end{eqnarray*}
		
		The last step is to set $\De' = \typedrole{s}{\p} T \cat \De''$
		to obtain
		$\De_1' \red^\ast \De'$ and $\De_2' \red^*{} \De'$ as required.
		\\

		\noi {\bf Case $\ell = \actbdel{s}{\p}{\q}{\s'}{\p'}$:}

		\noi This case follows a similar argumentation with the case $\ell = \actout{s}{\p}{\q}{v}$.

		\noi We know
		from the definition of environment transition, 
		$\srole{s}{\q} \notin \dom{\De_1}$ and $\srole{s}{\q} \notin \dom{\De_2}$, thus 
		$\srole{s}{\q} \notin \dom\De$. 

		From the typed transition we know that $\De_1$ and $\De_2$
		have the form:
		\begin{eqnarray*}
			\De_1 &=& \typedrole{s}{\p} \tout{\q}{T'} T \cat \De_1''\\
			\De_2 &=& \typedrole{s}{\p} \tout{\q}{T'} T \cat \De_2''
		\end{eqnarray*}
		\noi and from the coinductive hypothesis (\ref{proof:bisimulation_typing_relation_1}),
		there exists $\De = \typedrole{s}{\p} \tout{\q}{T'} T \cat \De''$ such that:
		\begin{eqnarray*}
			\De_1 &\red^*& \De\\
			\De_2 &\red^*& \De
		\end{eqnarray*}
		\noi From the environment transition relation we obtain that:
		\begin{eqnarray*}
			\De_1' &=& \typedrole{s}{\p} T \cat \De_1''\\
			\De_2' &=& \typedrole{s}{\p} T \cat \De_2''
		\end{eqnarray*}
		The last step is to set $\De' = \typedrole{s}{\p} T \cat \De''$
		to obtain
		$\De_1' \red^\ast \De'$ and $\De_2' \red^*{} \De'$, as required.
		\\

		\noi	The remaining cases  on session channel actions are similar.
\end{proof}

\subsection{Weakening and strengthening}
The following lemmas are essential for invariant properties. 

\begin{lem}[Weakening]
	\label{lemma:weakening}
	\begin{enumerate}
		\item	If $\tprocess{E, \Ga}{\PP}{\De}$ then
		\begin{itemize}
		\item 	$\tprocess{E \cat s:G, \Ga}{P}{\De}$.
		\item 	$E = E' \cat s:G$ and $\exists G'$ such that $\set{s:\G'} \red^\ast \set{s:\G}$ then
			$\tprocess{E' \cat s:G', \Ga}{P}{\De}$.
		\end{itemize}
	\item	If $(E, \Ga, \De) \trans{\ell} (E, \Ga', \De')$ then
	\begin{itemize}
		\item	$(E \cat s:G, \Ga, \De) \trans{\ell} (E \cat s:G, \Ga', \De')$
		\item	If $E = E' \cat s:G$ and $\set{s:G'} \red^\ast \set{s:G}$ then
				$(E' \cat s:G', \Ga, \De) \trans{\ell} (E' \cat s:G', \Ga', \De')$
		\end{itemize}
		\item	If	$\tprocess{E, \Ga}{P_1}{\De_2} \govwbs \noGtprocess{P_2}{\De_2}$
		\begin{itemize}
			\item	$\tprocess{E \cat s:G, \Ga}{P_1}{\De_2} \govwbs \noGtprocess{P_2}{\De_2}$
			\item	If $E = E' \cat s:G$ and $\set{s:G'} \red^\ast \set{s:G}$ then
				$\tprocess{E' \cat s:G', \Ga}{P_1}{\De_2} \govwbs \noGtprocess{P_2}{\De_2}$
		\end{itemize}
	\end{enumerate}
\end{lem}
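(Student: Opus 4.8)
The plan is to prove the three parts in order, using parts (1) and (2) as auxiliary transition lemmas inside a coinductive argument for part (3). The guiding observation throughout is that, because $E \cat s:G$ is a \emph{disjoint} concatenation, the session name $s$ is fresh: it occurs neither in $\De$ nor in any label $\ell$ the configuration can fire ($s \notin \nam{\ell}$), so the mapping $s:G$ acts as an inert block that is merely carried along. Part (1) then follows directly from the definition of the governance judgement (there is $E'$ with $E \red^\ast E'$ and $\De \subseteq \projset{E'}$) together with the congruence rule $\mrule{GEnv}$ of Definition~\ref{def:gltsrules}: for the additive case, $\mrule{GEnv}$ lifts $E \red^\ast E'$ to $E \cat s:G \red^\ast E' \cat s:G$, and $\projset{E' \cat s:G} = \projset{E'} \cup \projset{s:G} \supseteq \De$; for the second sub-claim, $\set{s:G'} \red^\ast \set{s:G}$ gives $E' \cat s:G' \red^\ast E' \cat s:G = E$ by $\mrule{GEnv}$, and transitivity of $\red^\ast$ closes the case.

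For part (2) I would argue by cases on the last rule of the configuration step in Figure~\ref{fig:ELTS}. Each of $\eltsrule{Acc}$--$\eltsrule{Bra}$ fires on the basis of an embedded global-environment step (or a freshness side condition), and $\mrule{GEnv}$ lets me replay that step with the disjoint block $s:G$ attached; since $s \notin \nam{\ell}$, all side conditions are unaffected, so the same $\ell$-transition is available from $E \cat s:G$ with the block $s:G$ carried along unchanged. Rule $\eltsrule{Inv}$ is handled by prepending $E \cat s:G \red^\ast E'' \cat s:G$, and the $\set{s:G'} \red^\ast \set{s:G}$ sub-claim simply folds one more $\eltsrule{Inv}$ pre-reduction in front. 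The statement extends from $\trans{\ell}$ to weak transitions $\Trans{\hat\ell}$ by iteration.

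Part (3) is the substantive part, to be proved coinductively. I would take the candidate configuration relation
\[
\relfont{S}=\{(\tprocess{E\cat s:G,\Ga}{P_1}{\De_1},\,\noGtprocess{P_2}{\De_2}) : \tprocess{E,\Ga}{P_1}{\De_1}\govwbs\noGtprocess{P_2}{\De_2},\ s\notin\dom{E}\},
\]
closed additionally under replacing the $s$-component $s:G$ by any $s:G'$ with $\set{s:G'}\red^\ast\set{s:G}$, and show $\relfont{S}$ is a governed bisimulation, whence $\relfont{S}\subseteq\govwbs$ yields both sub-claims. First I would check $\relfont{S}$ is a well-defined configuration relation (the join on the left exists), which holds because $s$ is fresh. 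For the bisimulation clause, given $\tprocess{E\cat s:G,\Ga}{P_1}{\De_1}\trans{\ell}\tprocess{E_1',\Ga'}{P_1'}{\De_1'}$, Definition~\ref{def:glts} decomposes it into a process-level typed transition (independent of the witness) and a configuration step $(E\cat s:G,\Ga,\De_1)\trans{\ell}(E_1',\Ga',\De_1')$. Since $s\notin\nam{\ell}$, I reflect this back to a step $(E,\Ga,\De_1)\trans{\ell}(\bar E_1',\Ga',\De_1')$ with the block carried inertly, taking $E_1'=\bar E_1'\cat s:G$. Feeding the reflected transition into the hypothesis $\tprocess{E,\Ga}{P_1}{\De_1}\govwbs\noGtprocess{P_2}{\De_2}$ yields $\tprocess{E,\Ga}{P_2}{\De_2}\Trans{\hat\ell}\noGtprocess{P_2'}{\De_2'}$ with $\tprocess{\bar E_1'\sqcup\bar E_2',\Ga'}{P_1'}{\De_1'}\govwbs\noGtprocess{P_2'}{\De_2'}$; re-weakening this $P_2$-transition by part (2) gives the matching move under witness $\bar E_2'\cat s:G$. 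To land back in $\relfont{S}$ I would use the distribution law $(\bar E_1'\cat s:G)\sqcup(\bar E_2'\cat s:G)=(\bar E_1'\sqcup\bar E_2')\cat s:G$, which holds because the two $s$-components coincide and $\sqsubseteq$ is reflexive while $s$ is absent from $\bar E_1',\bar E_2'$; the derivative pair then lies in $\relfont{S}$.

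The main obstacle is the \emph{reflection} step: showing that a configuration transition under the weakened witness $E\cat s:G$ is faithfully mirrored by one under $E$ with the fresh block carried unchanged, and in particular that the freedom of $\eltsrule{Inv}$ to pre-reduce the witness never needs to touch the $s$-component when $s\notin\nam{\ell}$. I would isolate this as an auxiliary carry-along/reflection lemma (the exact converse of part (2)), proved by the same case analysis on Figure~\ref{fig:ELTS}. Once that is in place, the $\sqcup$-distribution law and the weak-matching bookkeeping are routine, and the $s:G'$ sub-case of part (3) follows by the single extra $\eltsrule{Inv}$ pre-reduction $\set{s:G'}\red^\ast\set{s:G}$ already built into the closure of $\relfont{S}$, exactly mirroring the governance-judgement argument used for part (1).
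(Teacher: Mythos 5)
Your proposal is correct and takes essentially the same approach as the paper: the paper's proof establishes only Part~1, exactly by your argument (lift $E \red^\ast E'$ through the disjoint block $s:G$ via the congruence rule \mrule{GEnv} and use $\projset{E' \cat s:G} \supseteq \projset{E'} \supseteq \De$, with transitivity of $\red^\ast$ for the $s:G'$ sub-case), and declares the remaining parts ``similar.'' Your elaboration of Parts~2--3 --- the case analysis on Figure~\ref{fig:ELTS}, the reflection lemma handling \eltsrule{Inv}'s gratuitous pre-reductions of the fresh $s$-component, and the coinductive candidate relation closed under arbitrary $G$ at $s$ together with the $\sqcup$-distribution law --- is precisely the intended filling-in of those omitted details, with no gaps.
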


\begin{proof}
	We only show Part 1. Other parts are similar. 

	\begin{itemize}

		\item	From the governance judgement definition we have that
			$E \red^* E'$ and $\projset{E'} \supseteq 
			\ifsynch \De \else \mtcat{\De} \fi$.
Let $E \cat s:G \red E' \cat s:G$. Then
			$\projset{E' \cat s:G} = 
\projset{E'} \cup \projset{s:G}
			\supseteq \projset{E'} \supseteq 
			\ifsynch \De \else \mtcat{\De} \fi$.

		\item	From the governance judgement definition we have that
			there exist 
$E_1$ an $G_1$ such that 
$E' \cat s:G \red^* E_1 \cat s:G_1$ and 
			$\projset{E_1 \cat s:G_1} \supseteq 
			\ifsynch \De \else \mtcat{\De} \fi$.
Let $E' \cat s:G' \red^* E' \cat s:G \red^* E_1 \cat s:G_1$.
			Hence the result is immediate.\qedhere
	\end{itemize}
\end{proof}

\begin{lem}[Strengthening]
	\label{lemma:strengthening}
	\begin{enumerate}
		\item	If $\tprocess{E \cat s:G, \Ga}{\PP}{\De}$ and
		\begin{itemize}
			\item 	If $s \notin \fn{P}$ then $\tprocess{E, \Ga}{P}{\De}$
			\item	If $\exists G',G_1$ s.t. 
$E\cdot s:G \red^\ast E_2 \cdot s:G' \red^\ast E_1 \cdot s:G_1$ with 
$\projset{E_1 \cat s:G_1}\supseteq \Delta$, 
				then $\tprocess{E \cat s:G', \Ga}{P}{\De}$
		\end{itemize}

		\item	If $(E \cat s:G, \Ga, \De)
			\trans{\ell} (E' \cat s:G, \Ga', \De')$ then
		\begin{itemize}
			\item	$(E, \Ga, \De) \trans{\ell} (E', \Ga', \De')$
			\item	
If $\exists G'$ s.t.  
$E\cdot s:G \red^\ast E_2 \cdot s:G' \red^\ast E_1 \cdot s:G_1$ 
with $\projset{E_1 \cat s:G_1}\supseteq \Delta$,  
		$(E \cat s:G', \Ga, \De) \trans{\ell} (E' \cat s:G', \Ga', \De')$
		\end{itemize}

		\item	If $\tprocess{E \cat s:G, \Ga}{P_1}{\De_2} \govwbs 
			\noGtprocess{P_2}{\De_2}$
		\begin{itemize}
			\item	If $s \notin \fn{P}$ then 
		$\tprocess{E, \Ga}{P_1}{\De_2} \govwbs \noGtprocess{P_2}{\De_2}$
\item	
If $\exists G'$ s.t.  
$E\cdot s:G \red^\ast E_2 \cdot s:G' \red^\ast E_1 \cdot s:G_1$ 
with $\projset{E_1 \cat s:G_1}\supseteq \Delta$,  
			$\tprocess{E \cat s:G', \Ga}{P_1}{\De_2} \govwbs \noGtprocess{P_2}{\De_2}$
		\end{itemize}
	\end{enumerate}

\end{lem}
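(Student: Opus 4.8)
The plan is to treat Lemma~\ref{lemma:strengthening} as the exact converse of the Weakening Lemma~\ref{lemma:weakening}, and to prove it from the same two ingredients: the definition of the governance judgement ($\tprocess{E,\Ga}{P}{\De}$ holds iff $\Gtprocess{P}{\De}$ is coherent and some $E'$ with $E \red^\ast E'$ satisfies $\De \subseteq \projset{E'}$) together with the fact that global-environment reduction is componentwise and monotone. Concretely, every rule generating $\red$ on global environments ($\mrule{Inter}$, $\mrule{SelBra}$, $\mrule{IPerm}$, $\mrule{SBPerm}$) fires inside a single session slot and is closed only under the congruence rule $\mrule{GEnv}$; hence any reduction $E \cat s:G \red^\ast F$ factors as $E \red^\ast E^\dagger$ and $s:G \red^\ast s:G^\dagger$ with $F = E^\dagger \cat s:G^\dagger$, and since reduction only consumes prefixes, $s:G$ reappearing unchanged in a reduct forces that no step touched the $s$-slot. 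I would isolate this decomposition as a small auxiliary observation and reuse it throughout.

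For Part~1 the argument runs as follows. In the first bullet, from $\tprocess{E\cat s:G,\Ga}{P}{\De}$ we obtain $F$ with $E\cat s:G \red^\ast F$ and $\De \subseteq \projset{F}$; by the decomposition $F = E^\dagger \cat s:G^\dagger$ with $E \red^\ast E^\dagger$. Since $s \notin \fn{P}$, the environment $\De$ carries no $s$-endpoint (an inactive one could never be covered by any $\projset{\cdot}$, and an active one would force $s \in \fn{P}$), so the part of $\projset{F}$ that mentions $s$ is irrelevant to covering $\De$; discarding it gives $\De \subseteq \projset{E^\dagger}$ and hence $\tprocess{E,\Ga}{P}{\De}$. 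For the second bullet the hypothesis supplies $E\cat s:G \red^\ast E_2\cat s:G' \red^\ast E_1\cat s:G_1$ with $\projset{E_1\cat s:G_1}\supseteq \De$; decomposing both segments yields $E \red^\ast E_1$ and $s:G' \red^\ast s:G_1$, and recombining through $\mrule{GEnv}$ gives $E\cat s:G' \red^\ast E_1\cat s:G_1$. As this single reduct already covers $\De$, the judgement $\tprocess{E\cat s:G',\Ga}{P}{\De}$ holds directly, the intermediate $E_2$ never entering the conclusion.

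Parts~2 and~3 then follow the same skeleton. For Part~2 I would induct on the derivation of the configuration transition in Figure~\ref{fig:ELTS}: each rule carries a labelled global-environment move that, by monotonicity, cannot have altered the $s$-slot when $s:G$ is returned unchanged, so the same derivation replays after erasing the $s$-slot (first bullet) or after advancing it to $G'$ using the reduct supplied by $\eltsrule{Inv}$ (second bullet), the coverage side conditions being inherited exactly as in Part~1. For Part~3 I would exhibit the candidate relation obtained by closing $\govwbs$ under the strengthening operation on the witness and verify it is a governed bisimulation (Definition~\ref{def:govwb}), using Part~2 to transport each matching move and Part~1 to keep every residual a well-formed governance judgement, the process components being untouched since they do not depend on the witness. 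I expect the main obstacle to sit precisely here, in reconciling the strengthening of the witness with the closure clause that replaces the residual witnesses by $E_1'\sqcup E_2'$: one must check that erasing or advancing the $s$-slot commutes with $\sqcup$, so that the two residuals remain compatible, and this is the single step where Part~1's coverage bookkeeping and the tree-inclusion definition of $\sqcup$ genuinely interact.
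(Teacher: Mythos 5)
Your proposal follows the same route as the paper, which proves only Part~1 and declares Parts~2 and~3 ``similar'': for the first bullet the paper derives $E \cat s:G \red^\ast E_1 \cat s:G_1$ with $\projset{E_1 \cat s:G_1} = \projset{E_1} \cup \projset{s:G_1} \supseteq \De$, notes that $s \notin \fn{P}$ gives $s \notin \dom{\De}$, hence $\projset{s:G_1} \cap \De = \es$ and $\De \subseteq \projset{E_1}$ with $E \red^\ast E_1$ --- exactly your coverage argument, with the componentwise factorisation of $\red^\ast$ left implicit where you isolate it as an auxiliary observation. The second bullet the paper calls ``immediate from the definition''; your decompose-and-recombine via $\mrule{GEnv}$ is the spelled-out version, and your rule induction for Part~2 and candidate-relation argument for Part~3 (including the check that strengthening the witness commutes with $\sqcup$) supply detail the paper omits entirely.

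There is, however, one concrete step in your auxiliary observation that fails as stated: ``since reduction only consumes prefixes, $s:G$ reappearing unchanged in a reduct forces that no step touched the $s$-slot.'' The paper takes an \emph{equi-recursive} view of types, so a recursive global type can reduce to itself: $\set{s: \recgt{t}{\valuegt{\p}{\q}{U} \vargt{t}}} \trans{\actval{s}{\p}{\q}{U}} \set{s: \recgt{t}{\valuegt{\p}{\q}{U} \vargt{t}}}$ by $\mrule{Inter}$ applied to the unfolding. A reduct syntactically identical to $s:G$ therefore does \emph{not} witness that the $s$-slot was untouched, and this is precisely the claim your Part~2 induction leans on when erasing or re-routing the $s$-slot (both in the per-rule premises $E_1 \trans{\lambda} E_2$ and in the pre-reduction allowed by $\eltsrule{Inv}$). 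The repair is cheap and keeps your structure intact: in the intended uses, mirroring Part~1's side conditions, $s$ does not occur in $\De$, and rules $\eltsrule{Out}$ through $\eltsrule{Bra}$ tie the global move $\lambda$ to the session channel of $\ell$, so $\subj{\ell}$ and every matching $\lambda$ are forced away from $s$; untouchedness of the $s$-slot should be derived from $s \notin \dom{\De}$ rather than from syntactic identity of the reduct. With that substitution your Part~2 induction goes through, and Part~3 as you organise it --- transporting moves by Part~2, preserving well-formedness by Part~1, and checking compatibility of the residual witnesses under $\sqcup$ --- is sound and indeed more careful than anything the paper records.
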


\begin{proof}
	We prove for part 1. Other parts are similar.

	\begin{itemize}	
		\item	From the governance judgement definition we have that
			$E \cat s:G \red^* E_1 \cat s:G_1$ and 
			$\projset{E_1 \cat s:G_1} = \projset{E_1} \cup \projset{s:G_1} \supseteq 
			\ifsynch \De \else \mtcat{\De} \fi$.
			Since $s \notin \fn{P}$ then $s \notin \dom{\De}$, then
			$\projset{s:G_1} \cap \ifsynch \De \else \mtcat{\De} \fi = \es$.
			So $\projset{E_1} \supseteq \ifsynch \De \else \mtcat{\De} \fi$ and $E \red^* E_1$.

		\item	The result is immediate from the definition of	
			governance judgement.\qedhere
	\end{itemize}
\end{proof}

\subsection{Configuration Transition Properties}
\label{appsubsec:invariant}

\begin{lem}
\label{lem:global_subset_inv}
$ $ \\
	\begin{itemize}
		\item	If $E \trans{\actval{s}{\p}{\q}{U}} E'$ then
			$\set{\typedrole{s}{\p} \tout{\q}{U} T_\p, \typedrole{s}{\q} \tinp{\p}{U} T_\q} 
			\subseteq \projset{E}$
			and $\set{\typedrole{s}{\p} T_\p, \typedrole{s}{\q} T_\q} \subseteq \projset{E'}$.

		\item	If $E \trans{\actgsel{s}{\p}{\q}{l}} E'$ then
			$\set{\typedrole{s}{\p} \tsel{\q}{l_i: T_{i\p}}, 
			\typedrole{s}{\q} \tbra{\p}{l_i: T_{i\q}}} 
			\subseteq \projset{E}$
			and $\set{\typedrole{s}{\p} T_{k\p}, \typedrole{s}{\q} T_{k\q}} \subseteq \projset{E'}$
	\end{itemize}
\end{lem}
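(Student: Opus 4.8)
The plan is to prove both parts simultaneously by induction on the derivation of the global environment transition $E \trans{\lambda} E'$ (Definition~\ref{def:gltsrules}), instantiating $\lambda = \actval{s}{\p}{\q}{U}$ for Part~1 and $\lambda = \actgsel{s}{\p}{\q}{l}$ for Part~2. In each part only four of the five rules can emit the relevant label: a value label is produced by $\mrule{Inter}$ (base) and closed under $\mrule{IPerm}$, $\mrule{SBPerm}$, $\mrule{GEnv}$; a selection label is produced by $\mrule{SelBra}$ (base) and closed under the same three rules. This keeps the case analysis short.

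For the base case $\mrule{Inter}$, where $E = \set{s:\valuegt{\p}{\q}{U} G}$ and $E' = \set{s:G}$, I would unfold the projection function (Definition~\ref{def:projection}): the leading prefix projects onto $\p$ as $\tout{\q}{U} \proj{G}{\p}$ and onto $\q$ as $\tinp{\p}{U} \proj{G}{\q}$. Setting $T_\p = \proj{G}{\p}$ and $T_\q = \proj{G}{\q}$ then gives the two inclusions into $\projset{E}$, while the same $T_\p, T_\q$ are exactly the $s$-entries of $\projset{E'}$. The base case $\mrule{SelBra}$ is identical using the selection clause of the projection, with $T_{i\p} = \proj{G_i}{\p}$ and the chosen branch $T_{k\p} = \proj{G_k}{\p}$.

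The closure rules are where the projection must commute with the permutation. In $\mrule{IPerm}$ we have $E = \set{s:\valuegt{\p'}{\q'}{U'} G}$ and $E' = \set{s:\valuegt{\p'}{\q'}{U'} G'}$ with $\set{s:G} \trans{\lambda} \set{s:G'}$ and side condition $\p', \q' \notin \lambda$, so $\p, \q$ differ from both $\p', \q'$. The key observation is that the outer prefix then projects away, $\proj{\valuegt{\p'}{\q'}{U'} G}{\p} = \proj{G}{\p}$ and likewise for $G'$, so the required inclusions reduce verbatim to the induction hypothesis for $G, G'$; membership of $\p, \q$ in the role sets survives because $\roles{\valuegt{\p'}{\q'}{U'} G} \supseteq \roles{G}$. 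Rule $\mrule{GEnv}$ is immediate from $\projset{E \cat E_0} = \projset{E} \cup \projset{E_0}$, since $s \notin \dom{E_0}$ and the witnessing entries live entirely in the $s$-component supplied by $\projset{E}$ (resp.\ $\projset{E'}$).

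I expect the real obstacle to be the $\mrule{SBPerm}$ case. There the induction hypothesis is applied to each branch $\set{s:G_i} \trans{\lambda} \set{s:G_i'}$ separately, yielding per-branch witnesses, whereas the projection of the outer selection onto a role $\p \neq \p', \q'$ is defined only under the well-formedness condition $\forall i.\ \proj{G_i}{\p} = \proj{G_1}{\p}$. I would have to show this common-projection condition is inherited by the reducts: since all $\proj{G_i}{\p}$ coincide and $\lambda$ acts uniformly, the induction hypothesis forces all $\proj{G_i'}{\p}$ to coincide as well, so a single witness $T_\p = \proj{G_1'}{\p}$ works across branches. The remaining subtlety sits already in the base cases: after a reduction a participant may disappear from the continuation (e.g.\ $\valuegt{\p}{\q}{U}\inactgt \trans{} \inactgt$), giving $T_\p = \tinact$; there the inclusion into $\projset{E'}$ is to be read modulo the standard $\tinact$-weakening built into rule $\trule{Complete}$, which I would note explicitly.
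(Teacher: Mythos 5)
Your proposal is correct and follows essentially the same route as the paper's proof: induction on the derivation of the global environment transition, with the base cases $\mrule{Inter}$/$\mrule{SelBra}$ discharged by unfolding the projection (Definition~\ref{def:projection}) and the permutation cases reduced to the induction hypothesis because the side condition $\p, \q \notin \lambda$ makes the outer prefix project away. You are in fact more thorough than the paper, which spells out only $\mrule{Inter}$ and $\mrule{IPerm}$ and dismisses the rest with ``Part 2: Similar'': your explicit treatment of $\mrule{SBPerm}$ (the reduct branches share a common projection because each $\proj{G_i}{\p}$ equals $\proj{G_1}{\p}$ and the prefix constructor is injective, so the IH witnesses coincide) and your observation that $\roles{\inactgt} = \es$ makes the literal inclusion into $\projset{E'}$ fail for terminated continuations, to be read modulo the standard $\tinact$-weakening, are both sound refinements of details the paper's proof leaves implicit.
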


\begin{proof}
	Part 1: We apply inductive hypothesis on the structure of the definition of 
$\actval{s}{\p}{\q}{U}$.
	The base case
	\[
		\set{s:\valuegt{\p}{\q}{U} G} \trans{\actval{s}{\p}{\q}{U}} \set{s:G}
	\]
	\noi is easy since 
	\begin{eqnarray*}
		\set{\typedrole{s}{\p} \proj{(\valuegt{\p}{\q}{U} G)}{\p}, 
		\typedrole{s}{\q} \proj{(\valuegt{\p}{\q}{U} G)}{\q}} =\\ 
		\set{\typedrole{s}{\p} \tout{\q}{U} T_\p, \typedrole{s}{\q} \tinp{\p}{U} T_\q} 
		\subseteq \projset{s: \valuegt{\p}{\q}{U} G}
	\end{eqnarray*}
	\noi and
	\begin{eqnarray*}
		\set{\typedrole{s}{\p} \proj{G}{\p}, 
		\typedrole{s}{\q} \proj{G}{\q}} =
		\set{\typedrole{s}{\p} T_\p, \typedrole{s}{\q} T_\q} 
		\subseteq \projset{s: G}
	\end{eqnarray*}

	\noi The main induction rule concludes that:
	\[
		\set{s:\valuegt{\p'}{\q'}{U} G} \trans{\actval{s}{\p}{\q}{U}} \set{s:G}
	\]
	\noi if $\p \not= \p'$ and $\q \not= \q'$ and
	$\set{s:G} \trans{\actval{s}{\p}{\q}{U}} \set{s:G'}$.
	From the induction hypothesis we know that:
	\begin{eqnarray*}
		\set{\typedrole{s}{\p} \tout{\q}{U} T_\p, \typedrole{s}{\q} \tinp{\p}{U} T_\q} 
		&\subseteq& \projset{s:G}\\
		\set{\typedrole{s}{\p} T_\p, \typedrole{s}{\q} T_\q} 
		&\subseteq& \projset{s: G'}
	\end{eqnarray*}
	\noi to conclude that:
	\begin{eqnarray*}
		\set{\typedrole{s}{\p} \proj{(\valuegt{\p'}{\q'}{U} G)}{\p}, 
		\typedrole{s}{\q} \proj{(\valuegt{\p'}{\q'}{U} G)}{\q}} =\\
		\set{\typedrole{s}{\p} \proj{G}{\p}, 
		\typedrole{s}{\q} \proj{G}{\q}} =\\  
		\set{\typedrole{s}{\p} \tout{\q}{U} T_\p, \typedrole{s}{\q} \tinp{\p}{U} T_\q} 
		\subseteq \projset{s: G}
	\end{eqnarray*}
	\noi and
	\begin{eqnarray*}
		\set{\typedrole{s}{\p} \proj{(\valuegt{\p'}{\q'}{U} G')}{\p}, 
		\typedrole{s}{\q} \proj{(\valuegt{\p'}{\q'}{U} G')}{\q}} =\\
		\set{\typedrole{s}{\p} \proj{G'}{\p}, 
		\typedrole{s}{\q} \proj{G'}{\q}} =\\
		\set{\typedrole{s}{\p} T_\p, \typedrole{s}{\q} T_\q} 
		\subseteq \projset{s: G'}
	\end{eqnarray*}
	\noi as required.

	Part 2: Similar.

	Part 3:
	From the global configuration transition relation (Definition~\ref{def:glts}), we obtain that
	\begin{eqnarray*}
		(E_1, \Ga_1, \De_1) &\trans{\ell}& (E_2, \Ga_2, \De_2)\\
		\Ga_1 \proves P_1 \hastype \De_1 &\trans{\ell}& \Ga_2 \proves P_2 \hastype \De_2
	\end{eqnarray*}
	Then from the definition of governed environment, we can show that
	$E_2, \Ga_2 \proves P_2 \hastype \De_2$ is a governed judgement: this is because 
	$(E_2, \Ga_2, \De_2)$ is an environment configuration, hence 
	$\exists E_2', \ E_2 \red E_2'$ and $\projset{E_2'} \supseteq \De_2$. 
\end{proof}

\paragraph{\bf Proof for Proposition \ref{pro:invariants}}
\begin{proof}
	{\bf (1)} We apply induction on the definition structure of $\trans{\ell}$.

\noi	{\bf Basic Step:}\\[1mm]
	\noi {\bf Case: $\ell = \actreq{a}{s}{A}$}.
From rule $\eltsrule{Acc}$ we obtain
	\[ 
		(E_1, \Ga_1, \De_1) \trans{\ell} 
		(E_1 \cat s:G, \Ga_1, \De_1 \cat \set{s[\p_i]: \proj{s}{\p_i}}_{\p_i \in A})
	\]
	\noi From the environment configuration definition we obtain that
	\[
		\exists E_1' \textrm{ such that } E_1 \red^* E_1', 
\quad 
\ifsynch
\projset{E_1'} \supseteq \De_1
\else 
\projset{E_1'} \supseteq \mtcat{\De_1}
\fi
	\]
	\noi We also obtain that $\projset{s:G} \supseteq \set{s[\p_i]: \proj{G}{\p_i}}_{i \in A}$.
	Thus we can conclude that 
	\[
\begin{array}{l}
		E_1 \cat s:G \red^* E_1' \cat s:G\\
		\projset{E_1 \cat s:G} \supseteq \De_1 \cat \set{s[\p_i]: 
\proj{G}{\p_i}}_{\p_i \in A}
\end{array}
	\]

	\noi {\bf Case: $\ell = \actreq{a}{s}{A}$}. Similar as above.
	\\

	\noi {\bf Case: $\ell = \actout{s}{\p}{\q}{v}$}.
From rule $\eltsrule{Out}$ we obtain
	\begin{eqnarray}
		(E_1, \Ga, \Decat \typedrole{s}{\p} \tout{\q}{U} T) &\trans{\ell}& (E_2, \Ga, \Decat \typedrole{s}{\p} T) \label{elts:inv1}\\ 	
		\projset{E_1} &\supseteq& \Decat \typedrole{s}{\p} \tout{\q}{U} T \label{elts:inv2}\\
		E_1 &\trans{\actval{s}{\p}{\q}{U}}& E_2 \label{elts:inv3}
	\end{eqnarray}

	From (\ref{elts:inv2}), we obtain 
	$\projset{E_1} \supseteq \Decat \set{\typedrole{s}{\p} \tout{\q}{U} T \cat \typedrole{s}{\q} \tinp{\p}{U} T'}$
	and from (\ref{elts:inv3}) and Lemma
        \ref{lem:global_subset_inv}, we obtain that $\projset{E_2}
        \supseteq \Decat \set{\typedrole{s}{\p} T \cat
          \typedrole{s}{\q} T'}$. \\

	\noi {\bf Case: $\ell = \actbdel{s}{\p}{\q}{s'}{\p'}$}.
	\begin{eqnarray}
		(E_1, \Ga, \Decat \typedrole{s}{\p} \tout{\q}{T\p'} T) &\trans{\ell}& 
		(E_2 \cat s:G, \Ga, \Decat \typedrole{s}{\p} T \cat \set{\typedrole{s}{\p_i} \proj{G}{\p_i}}) \label{elts:inv4}\\ 	
		\projset{E_1} &\supseteq& \Decat \typedrole{s}{\p} \tout{\q}{T_\p'} T \label{elts:inv5}\\
		E_1 &\trans{\actval{s}{\p}{\q}{T_\p'}}& E_2 \label{elts:inv6}\\
		\projset{s:G} &\supseteq& \set{\typedrole{s}{\p_i} \proj{G}{\p_i}} \label{elts:inv7}
	\end{eqnarray}
	From (\ref{elts:inv5}) we obtain
	$\projset{E_1} \supseteq \Decat \set{\typedrole{s}{\p} \tout{\q}{U} T \cat \typedrole{s}{\q} \tinp{\p}{U} T'}$
	and from (\ref{elts:inv6}) and Lemma \ref{lem:global_subset_inv} we obtain that $\projset{E_2} \supseteq \Decat \set{\typedrole{s}{\p} T \cat \typedrole{s}{\q} T'} \supset \Decat \typedrole{s}{\p} T$. From (\ref{elts:inv7}) we obtain that
	$\projset{E_2 \cat s:G} \supseteq \Decat \typedrole{s}{\p} T \cat \set{\typedrole{s}{\p_i} \proj{G}{\p_i}}$, as required.

	\noi The rest of the base cases are similar.\\

	\noi {\bf Inductive Step:}

	\noi The inductive rule for environment configuration is $\eltsrule{Inv}$.
	Let $(E_1, \Ga_1, \De_1) \trans{\ell} (E_2, \Ga_2, \De_2)$. From rule $\eltsrule{Inv}$
	we obtain:
	\begin{eqnarray}
		E_1 &\red^*& E_1' \label{elts:inv8}\\
		(E_1', \Ga_1, \De_1) &\trans{\ell}& (E_2, \Ga_2, \De_2) \label{elts:inv9}
	\end{eqnarray}
	From the inductive hypothesis we know that, from (\ref{elts:inv9}),
	there exists $E_3$ such that $E_2 \red^* E_3$ and $\De_2 \subseteq \projset{E_3}$.
	Then the result is by (\ref{elts:inv8}).
\end{proof}

\begin{lem}
	\label{lemma:configuration_transition}
$ $\\
	\begin{enumerate}
		\item	If	$(E, \Ga, \De_1) \trans{\ell} (E', \Ga', \De_2)$ then
				$(\Ga, \De_1) \trans{\ell} (\Ga', \De_2)$
		\item 	If	$(E, \Ga, \De_1) \trans{\ell} (E', \Ga', \De_1')$ and
				$\De_1 \bistyp \De_2$ then
				$(E, \Ga, \De_2) \Trans{\ell} (E', \Ga', \De_2')$

		\item	If	$(\Ga, \De_1) \trans{\ell} (\Ga', \De_2)$ then 
				there exists $E$ such that 
				$(E, \Ga, \De_1) \trans{\ell} (E', \Ga', \De_2)$

		\item	If	$(E, \Ga, \De \cat \typedrole{s}{\p} T_\p) \trans{\ell} 
				(E', \Ga, \De' \cat \typedrole{s}{\p} T_\p)$ then
				$(E, \Ga, \De) \trans{\ell} (E', \Ga, \De')$

		\item	If	$(E, \Ga, \De_1) \trans{\ell} (E', \Ga, \De_2)$
			then
				$(E, \Ga, \De_1 \cat \De)
				\transs{\ell} 
				(E', \Ga, \De_2 \cat \De)$ 
				provided that if
				$(E, \Ga, \De) \trans{\ell'} (E, \Ga, \De')$
				then $\ell \not\coh \ell'$
	\end{enumerate}
\end{lem}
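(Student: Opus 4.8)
The plan is to establish the five parts largely independently, each by induction on or inspection of the derivation of the configuration transition in Figure~\ref{fig:ELTS}, with Part~2 reserved as the one genuinely substantial case. The unifying observation is that every base rule of Figure~\ref{fig:ELTS} carries, as a premise, the corresponding environment-tuple transition of Figure~\ref{fig-synch_envtrans} together with a labelled global reduction $E_1\trans{\lambda}E_2$, so most parts amount to relating these two components.

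For Part~1 I would proceed by rule induction on $(E,\Ga,\De_1)\trans{\ell}(E',\Ga',\De_2)$. Each base rule ($\eltsrule{Acc}$, $\eltsrule{Req}$, $\eltsrule{Out}$, $\eltsrule{In}$, $\eltsrule{ResN}$, $\eltsrule{ResS}$, $\eltsrule{Sel}$, $\eltsrule{Bra}$, $\eltsrule{Tau}$) contains $(\Ga,\De_1)\trans{\ell}(\Ga',\De_2)$ as a premise, so the conclusion is read off directly; the $\eltsrule{Tau}$ case uses that $\De_1\red\De_2$ or $\De_1=\De_2$ is exactly the premise of the environment rule $\erule{Tau}$. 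The only inductive rule is $\eltsrule{Inv}$, whose preliminary $E\red^\ast E'$ leaves $(\Ga,\De_1)$ untouched, so the induction hypothesis applies. For Part~3 I would do a case analysis on the rule of Figure~\ref{fig-synch_envtrans} justifying $(\Ga,\De_1)\trans{\ell}(\Ga',\De_2)$ and exhibit a witness $E$ with $\projset{E}\supseteq\De_1$ admitting the $\lambda$ matching $\ell$ (Definition~\ref{def:gltsrules}). The construction rebuilds a global type $G$ for the session $s$ named in $\ell$ from the local type recorded at $\srole{s}{\p}$ in $\De_1$: the active prefix becomes the leading interaction (e.g.\ $\valuegt{\p}{\q}{U}G'$ for an output with $\proj{G'}{\p}$ the residual), which is always possible since local types are images of projection; when several endpoints of $s$ occur in $\De_1$ their coherence lets a single $G$ cover them. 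Then $E$ fires $\lambda$ by $\mrule{Inter}$/$\mrule{SelBra}$ and the matching rule of Figure~\ref{fig:ELTS} yields the configuration transition.

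Parts~4 and~5 I expect to follow by inspection of the same rules. For Part~4 the endpoint $\typedrole{s}{\p}T_\p$ is present unchanged before and after $\ell$, hence it is never the acting endpoint; deleting it only shrinks the domain, which can only ease the negative side conditions ``$\srole{s}{\q}\notin\dom{\De}$'' and preserves $\projset{E}\supseteq\De$, so the firing rule still applies (the $\eltsrule{Inv}$ case again by induction). For Part~5 the hypothesis that $(E,\Ga,\De)\trans{\ell'}(E,\Ga,\De')$ implies $\ell\not\coh\ell'$ is precisely what guarantees the added $\De$ contains no endpoint whose presence would falsify the negative domain guards of $\ell$; hence enlarging the linear environment leaves the transition derivable.

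The hard part will be Part~2. Given $\De_1\bistyp\De_2$ with common reduct $\De$ ($\De_1\red^\ast\De$, $\De_2\red^\ast\De$), the key structural fact is that the internal reductions of Definition~\ref{def:lin_red} require \emph{both} endpoints $\srole{s}{\p}$ and $\srole{s}{\q}$ to occur, whereas any observable $\ell$ requires the partner endpoint \emph{absent}; thus no reduction of $\De_i\red^\ast\De$ can consume the endpoint on which $\ell$ acts, and the action commutes with the reductions. Using this I would first project to the environment level by Part~1, drive $\De_2$ along $\red^\ast$ to $\De$, fire $\ell$ there (still enabled by disjointness), and collect the residual reductions to obtain $(\Ga,\De_2)\Trans{\ell}(\Ga',\De_2')$ with $\De_1'\bistyp\De_2'$. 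The remaining obstacle is re-lifting this weak run to the configuration level so that it lands at exactly the same $E'$: the internal $\tau$-steps are matched by $\eltsrule{Tau}$ (advancing $E$ on governed sessions, leaving it fixed on hidden ones) and the action step as in Part~3, while $\eltsrule{Inv}$ lets $E$ catch up. Checking that this bookkeeping is consistent, in particular that the $\lambda$ determined by $\ell$ drives $E$ to the same $E'$ regardless of how it interleaves with the internal reductions, is where the main care is required.
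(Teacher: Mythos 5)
Your proposal follows essentially the same route as the paper's proof: Parts 1 and 4 by inspection of the rules in Figure~\ref{fig:ELTS}, Part 3 by case analysis on $\ell$ with a witness $E = E' \cat s:G$ whose projection covers $\De_1$ and supplies the dual endpoint, Part 5 by deriving a dual-label contradiction from the proviso, and Part 2 via the common reduct $\De$ of $\De_1 \bistyp \De_2$, which still carries the acting prefix precisely because session-environment reductions require both endpoints while an observable action requires the partner absent. If anything, you are more explicit than the paper on the Part 2 bookkeeping: the paper simply writes $(E, \Ga, \De_2) \Trans{} (E, \Ga, \De) \trans{\ell}$ without discussing how the silent steps and $\eltsrule{Inv}$ make the run land at the same $E'$, whereas you flag and resolve exactly that point.
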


\begin{proof}

	Part 1:

	\noi	The proof for part 1 is easy to be implied
		by a case analysis on the configuration
		transition definition with respect to
		environment transition definition.

	Part 2:

	\noi	By the case analysis on $\ell$.

	\noi	{\bf Case $\ell = \tau$:} The result is trivial.

	\noi	{\bf Case $\ell = \actreq{a}{\p}{s}$ or $\ell = \actacc{a}{\p}{s}$:}
		The result comes from a simple transition.

	\noi	{\bf Case $\ell = \actout{s}{\p}{\q}{v}$:} 
		$\De_1 \bistyp \De_2$ implies $\De_1 \red^* \De$ and
		$\De_2 \red^* \De$ for some $\De$ and
		$\De = \De' \cat \typedrole{s}{\p} \tout{\q}{U} T$
\ifsynch\else	for synchronous and input asynchronous MSP and 
		$\De = \De' \cat \typedrole{s}{\p} M;\mtout{\q}{U}$.
		for output and input/output asynchronous MSP
\fi.

Hence $(E, \Ga, \De_2) \Trans{} (E, \Ga, \De) \trans{\ell}$ as required.

	\noi	{\bf Case $\ell = \actbdel{s}{\p}{\q}{\s'}{\p'}$:} 
		$\De_1 \bistyp \De_2$ implies $\De_1 \red^* \De$ and
		$\De_2 \red^* \De$ for some $\De$ and
		$\De = \De' \cat \typedrole{s}{\p} \tout{\q}{T'} T$
\ifsynch\else	for synchronous and input asynchronous MSP and 
		$\De = \De' \cat \typedrole{s}{\p} M;\mtout{\q}{T'}$.
		for output and input/output asynchronous MSP
\fi.

	\noi	$(E, \Ga, \De_2) \Trans{} (E, \Ga, \De) \trans{\ell}$ as required.
	The remaining cases are similar.

	Part 3:

	\noi	We do a case analysis on $\ell$.

	\noi	{\bf Cases $\ell = \tau, \ell = \actreq{a}{\p}{s}, \ell = \actacc{a}{\p}{s}$:}
		The result holds for any $E$.

	\noi	{\bf Case $\ell = \actout{s}{\p}{\q}{v}:$}
		$\De_1 = \De_1' \cat \De_1''$ with
		$\De_1'' = 
		\typedrole{s}{\p} \tout{\q}{U} T_\p \cat \dots \cat \typedrole{s}{\gtfont{r}} T_\gtfont{r}$ 
\ifsynch\else	for synchronous and input asynchronous MSP. \fi
		Choose $E = E' \cat s:G$ with 
		$\ifsynch \De_1'' \else \mtcat{\De_1''} \fi \subseteq \projset {s:G}$ and
		$\typedrole{s}{\q} \tinp{\p}{U} T_\q \in \projset {s:G}$ and
		$\ifsynch \De_1 \else \mtcat\De_1 \fi \subseteq \projset{E}$. 
		By the definition of configuration transition relation, we obtain 
		$(E, \Ga, \De) \trans{\ell} (E, \Ga', \De_2)$, as required.
\ifsynch\else	
	\noi	$\De_1 = \De_1' \cat \De_1''$ with
		$\De_1'' = 
		\typedrole{s}{\p} T_\p \cat \typedqroleo{s}{\p} M; \mtout{\q}{U} \cat \dots \cat \typedrole{s}{\gtfont{r}} T_\gtfont{r}$ 
	for output and input/output asynchronous MSP.
		Choose $E = E' \cat s:G$ with 
		$\mtcat{De_1''} \subseteq \projset {s:G}$ and
		$\typedrole{s}{\q} \tinp{\p}{U} T_\q \in \projset {s:G}$ and
		$\mtcat\De_1 \subseteq \projset{E}$
		By the definition of configuration transition relation, we obtain 
		$(E, \Ga, \De) \trans{\ell} (E, \Ga', \De_2)$, as required.
\fi

%
Remaining cases are similar.

	Part 4:

	\noi	$(E, \Ga, \De \cat \typedrole{s}{\p} T_\p)
		\trans{\ell}
		(E', \Ga, \De' \cat \typedrole{s}{\p} T_\p)$
		implies that $\srole{s}{\p} \notin \subj{\ell}$.
		The result then follows from the definition of
		configuration transition.

	Part 5:

	\noi	{\bf Case $\ell = \tau, \ell = \actreq{a}{\p}{s}, \ell = \actacc{a}{\p}{s}$:}
		The result holds by definition of the configuration transition.

	\noi	{\bf Case $\ell = \actout{s}{\p}{\q}{U}$:}
\ifsynch\else	For synchronous and input asynchronous MSP 
\fi
		we have that
		$\De_1 = \De_1' \cat \typedrole{s}{\p} \tout{\q}{U} T$
		and $E \trans{\actval{s}{\p}{\q}{U}} E'$.
\ifsynch\else	For synchronous MSP assume \fi
		$\srole{s}{\q} \in \De$, then by
		definition of weak configuration pair
		we have $\De = \De'' \cat \typedrole{s}{\q}\tinp{\p}{U}T$
		and 
		$(E, \Ga, \De) \trans{\actinp{\s}{\q}{\p}{U}}$. But this
		contradicts with the assumption $\ell \not\coh \ell'$, so
		$\srole{s}{\q} \notin \De$.
		By the definition of configuration pair transition
		we obtain that 
		$(E, \Ga, \De_1 \cat \De) \trans{\actout{s}{\p}{\q}{U}} (E, \Ga, \De_2 \cat \De)$.
\ifsynch\else
		For input asynchronous MSP assume that
		$\sqrolei{s}{\q} \in \De$, then
		by definition of weak configuration pair
		we have $\De = \De'' \cat \typedqrolei{s}{\q} M$
		and $(E, \Ga, \De) \trans{\actqinpi{\s}{\q}{\p}{U}}$. But this
		contradicts with the assumption $\ell \not\coh \ell'$, so
		$\sqrolei{s}{\q} \notin \De$.
		By the definition of configuration pair transition
		we obtain the result.

	\noi	For output and input/output asynchronous MSP we have
		$\De_1 = \De_1' \cat \typedqroleo{s}{\p} M; \tout{\q}{U}$
		and $E \trans{\actval{s}{\p}{\q}{U}} E'$.
		For output asynchronous MSP assume
		$\srole{s}{\q} \in \De$, then
		we would have
		$(E, \Ga, \De) \trans{\actinp{\s}{\q}{\p}{U}}$
		which contradicts with $\ell \not\coh \ell'$
		and $\srole{s}{\q} \notin \De$ to obtain
		the required result by the configuration pair
		definition.
		For input/output asynchronous MSP assume
		$\sqrolei{s}{\q} \in \De$, then
		we would have
		$(E, \Ga, \De) \trans{\actqinpi{\s}{\q}{\p}{U}}$
		which contradicts with $\ell \not\coh \ell'$
		and $\sqrolei{s}{\q} \notin \De$ to obtain
		the required result by the configuration pair
		definition.
		\\
\fi
	\noi	Remaining cases are similar.
\end{proof}


\subsection{Proof for Lemma \ref{lem:bis-cong} (1)}
\label{app:bis-cong}

\begin{proof}
	Since we are dealing with closed processes, the interesting
	case is parallel composition. We need to show that if
	$\tprocess{E, \Ga}{P}{\De_1} \govwbs \tprocess{E, \Ga}{Q}{\De_2}$ then
	for all 
	$R$ such that 
	$\tprocess{E, \Ga}{P \Par R}{\De_3},
	\tprocess{E, \Ga}{Q \Par R}{\De_4}$
	then $\tprocess{E, \Ga}{P \Par R}{\De_3} \govwbs \noGtprocess{Q \Par R}{\De_4}$.

	We define the following configuration relation.
	\[
		\begin{array}{rl}
		S = &	\set{(\tprocess{E, \Ga}{P \Par R}{\De_3},\ \tprocess{E, \Ga}{Q \Par R}{\De_4}) \setbar \\
			& \tprocess{E, \Ga}{P}{\De_1} \govwbs \noGtprocess{Q}{\De_2},\\
			& \forall R \textrm{ such that }
			\tprocess{E, \Ga}{P \Par R}{\De_3}, \tprocess{E, \Ga}{Q \Par R}{\De_4}
			}
		\end{array}
	\]

	Before we proceed to a case analysis, we extract 
	general results.
	Let
	$\Gtprocess{P}{\De_1}, 
	\Gtprocess{Q}{\De_2},
	\Gtprocess{R}{\De_5},
	\Gtprocess{P \Par R}{\De_3},
	\Gtprocess{Q \Par R}{\De_4}$
	\noi then from typing rule $\trule{Conc}$ we obtain
	\begin{eqnarray}
		\De_3 &=& \De_1 \cup \De_5 \label{bis_is_cong:1}\\
		\De_4 &=& \De_2 \cup \De_5 \label{bis_is_cong:2}\\
		\De_1 \cap \De_5 & = & \es \label{bis_is_cong:3}\\
		\De_2 \cap \De_5 & = & \es \label{bis_is_cong:4}
	\end{eqnarray}

	We prove that $S$ is a bisimulation. There are three cases:

	\Case{1}
	\[
		\tprocess{E, \Ga}{P \Par R}{\De_3} \trans{\ell} \tprocess{E_1', \Ga}{P' \Par R}{\De_3'}
	\]
with  $\bn{\ell} \cap \fn{R} = \es$. 

	\noi	From typed transition definition we have that:
		\begin{eqnarray}
			P \Par R &\trans{\ell}& P' \Par R \label{bis_is_cong:case1_1} \\
			(E, \Ga, \De_3) &\trans{\ell}& (\E_1', \Ga, \De_3') \label{bis_is_cong:case1_2}
		\end{eqnarray}

	\noi	Transition (\ref{bis_is_cong:case1_1}) and rule $\ltsrule{Par}$
		(LTS in Figure \ref{fig-synch-lts}) imply:
		\begin{eqnarray}
			P &\trans{\ell}& P' \label{bis_is_cong:case1_3}
		\end{eqnarray}
	\noi	From (\ref{bis_is_cong:1}),
		transition (\ref{bis_is_cong:case1_2}) can be written as
		$(E, \Ga, \De_1 \cup \De_5) \trans{\ell} (\E_1', \Ga, \De_1' \cup \De_5)$, 
		to conclude from part 4 of Lemma \ref{lemma:configuration_transition}, that:
		\begin{eqnarray}
			(E, \Ga, \De_1) &\trans{\ell}& (E_1', \Ga, \De_1') \label{bis_is_cong:case1_4}\\
			\subj{\ell} &\notin& \dom{\De_5} \label{bis_is_cong:case1_6}
		\end{eqnarray}

	\noi	Transitions (\ref{bis_is_cong:case1_3}) and (\ref{bis_is_cong:case1_4})
		imply
		$\tprocess{E, \Ga}{P}{\De_1} \trans{\ell} \tprocess{E_1', \Ga}{P'}{\De_1'}$.
		From the definition of set $S$ we obtain 
		$\tprocess{E, \Ga}{Q}{\De_2} \Trans{\ell} \tprocess{E_2', \Ga}{Q'}{\De_2'}$.

	\noi	From the typed transition definition we have that:
		\begin{eqnarray}
			Q &\Trans{\ell}& Q' \label{bis_is_cong:case1_5} \\
			(E, \Ga, \De_2) &\Trans{\ell}& (E_2', \Ga, \De_2')
		\end{eqnarray}

	\noi	From (\ref{bis_is_cong:case1_6}) and part 5 of Lemma
		\ref{lemma:configuration_transition} we can write:
		$(E, \Ga, \De_2 \cup \De_5) \Trans{\ell} (E_2', \Ga, \De_2' \cup \De_5)$,
		which implies, from (\ref{bis_is_cong:case1_5}), 
		$\tprocess{E, \Ga}{Q \Par R}{\De_4} \Trans{\ell} \tprocess{E_2', \Ga}{Q' \Par R}{\De_4'}$.
		Furthermore, we can see that:
		\[
			(\tprocess{E_1' \sqcup E_2', \Ga}{P' \Par R}{\De_4'}\ , \ \tprocess{E_1' \sqcup E_2', \Ga}{Q' \Par R}{\De_4'}) \in S
		\]
		as required, noting $E_1' \sqcup E_2'$ is 
defined by the definition of $S$.\\

\noi	\Case{2}
	\[
		\tprocess{E, \Ga}{P \Par R}{\De_3} \trans{\tau} \tprocess{E'}{P' \Par R'}{\De_3'}
	\]
	\noi From the typed transition definition, we have that:
	\begin{eqnarray}
		P \Par R &\trans{\tau}& P' \Par R' \label{bis_is_cong:case2_1}\\
		(E, \Ga, \De_3) &\trans{\tau}& (E', \Ga, \De_3') \label{bis_is_cong:case2_2}
	\end{eqnarray}

	\noi From (\ref{bis_is_cong:case2_1}) and rule $\ltsrule{Tau}$, we obtain
	\begin{eqnarray}
		P &\trans{\ell}& P' \label{bis_is_cong:case2_3}\\
		R &\trans{\ell'}& R' \label{bis_is_cong:case2_4}
	\end{eqnarray}
	\noi From (\ref{bis_is_cong:1}), transition (\ref{bis_is_cong:case2_2}) can be written
	$(E, \Ga, \De_1 \cup \De_5) \trans{\tau} (E', \Ga, \De_1' \cup \De_5')$, to conclude that
	\begin{eqnarray}
		(E, \Ga, \De_1) \trans{\ell} (E', \Ga, \De_1') \label{bis_is_cong:case2_5}\\
		(E, \Ga, \De_5) \trans{\ell'} (E', \Ga, \De_5') \label{bis_is_cong:case2_6}
	\end{eqnarray}
	\noi From (\ref{bis_is_cong:case2_3}) and (\ref{bis_is_cong:case2_5}), we conclude that
	$\tprocess{E, \Ga}{P}{\De_1} \trans{\ell} \tprocess{E', \Ga}{P'}{\De_1'}$ and from 
	(\ref{bis_is_cong:case2_4}) and (\ref{bis_is_cong:case2_6}), 
	we have $\tprocess{E, \Ga}{R}{\De_5} \trans{\ell'} \tprocess{E', \Ga}{R'}{\De_5'}$.

	\noi From the definition of set $S$, we obtain that 
	$\tprocess{E, \Ga}{Q}{\De_2} \Trans{\ell} \tprocess{E, \Ga}{Q'}{\De_2'}$ implies
	\begin{eqnarray}
		Q &\Trans{\ell}& Q' \\
		(E, \Ga, \De_2) &\Trans{\ell}& (E', \Ga, \De_2')
	\end{eqnarray}
	\noi From (\ref{bis_is_cong:case2_4}), we obtain that $Q \Par R \Trans{\tau} Q' \Par R'$
	and $(E, \Ga, \De_2 \cup \De_5) \Trans{\tau} (E', \Ga, \De_2' \cup \De_5')$, implies:
	\[
		\tprocess{E, \Ga}{Q \Par R}{\De_4} \Trans{\tau} \tprocess{E'}{Q' \Par R'}{\De_4'}
	\]
	with
	\[
		(\tprocess{E', \Ga}{P' \Par R'}{\De_4'}\ , \ \tprocess{E', \Ga}{Q' \Par R'}{\De_4'}) \in S
	\]
	as required.\\

\noi	\Case{3}
	\[
		\tprocess{E, \Ga}{P \Par R}{\De_3} \trans{\ell} \tprocess{E',\Ga}{P \Par R'}{\De_3'}
	\]
This case is similar with the above cases.

\subsection{Proof for Lemma \ref{lem:bis-cong} (2)}
\label{app:bis-complete}
The proof for the completeness follows the technique which uses 
the testers in \cite{Hennessy07}. 
We need to adapt the testers to multiparty session types. 

\begin{defi}[Definability]\label{def:definability}
Let $\obj{\ell}$ and $\subj{\ell}$ to denote a set of 
object and subject of $\ell$, respectively. 
	Let $N$ be a finite set of shared names and session endpoints for testing the
	receiving objects defined as   
	$N \bnfis \es \bnfbar N \cat \srole{s}{\p} \bnfbar N \cat a$. 
	An external action $\ell$ is {\em definable} if for a set of names
	$N$, fresh session $\suc$, $n$ is the dual endpoint of $\subj{\ell}$ (i.e. the dual endpoint of $s[\p][\q]$ is $s[\q]$), 
	there is a {\em testing process} $\funcbr{T}{N, \suc, \ell}$ 
	with the property that for every process $\PP$ and $\fn{\PP} \subseteq N$,
	\begin{itemize}
		\item	$\tprocess{E_1, \Ga}{P}{\De_1} \trans{\ell} \tprocess{E_1', \Ga'}{P'}{\De_1'}$
			implies that
			$\tprocess{E, \Ga}{\funcbr{T}{N, \suc, \ell} \Par P}{\De}
			\Red\\
			\tprocess{E', \Ga}{\newsp{\bn{\ell}, b}{\PP' \Par \out{\suc}{1}{2}{\obj{\ell}, n} \inact}}{\De'}$ 
		\item	$\tprocess{E, \Ga}{\funcbr{T}{N, \suc, \ell} \Par P}{\De}
			\Red 
			\tprocess{E', \Ga}{Q}{\Delta'}$ and 
			$\tprocess{E', \Ga}{Q}{\Delta'}\Downarrow \suc$ implies 
			for some 
			$\tprocess{E_1, \Ga}{\PP}{\De_1}\trans{\ell}
			\tprocess{E'_1, \Ga'}{\PP'}{\De_1'}$, we have
			$\Q \equiv \newsp{\bn{\ell}, b}{\PP'\Par \out{\suc}{1}{2}{\obj{\ell}, n} \inact}$. 
	\end{itemize}
%
\end{defi}
\noindent Hereafter we omit the environments if they are obvious from the
context. 

\begin{lem}[Definability]
	\label{lem:definibility}
	Every external action is definable.
\end{lem}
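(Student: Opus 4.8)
The plan is to prove definability by a case analysis on the shape of the external action $\ell$, producing for each case an explicit testing process $\funcbr{T}{N, \suc, \ell}$ and checking the two clauses of Definition~\ref{def:definability}. The uniform recipe is: the tester offers the action dual to $\ell$ on the endpoint $\subj{\ell}$, and once it has synchronised with $P$ it releases a single guarded report $\out{\suc}{1}{2}{\obj{\ell}, n}\inact$ on a fresh session $\suc$ — recording the transmitted object $\obj{\ell}$ and handing back the residual endpoint $n$ (the dual of $\subj{\ell}$) so that testing can continue. The auxiliary fresh name $b$ is the shared channel on which the reporting session $\suc$ is spawned; since $b, \suc \notin N$ are fresh, no interaction with $P$ other than the intended one can ever enable a barb at $\suc$, and this observation is what will carry clause~(2) in every case.

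First I would clear the first-order session actions. For $\ell = \actout{\s}{\p}{\q}{\va}$ the tester legitimately owns $\srole{\s}{\q}$ (the typed transition forces $\srole{\s}{\q}\notin\dom{\De}$ via rule~$\erule{Send}$), so it is $\inp{\s}{\q}{\p}{x} R_{\suc}$, where $R_{\suc}$ establishes $\suc$ on $b$ and outputs $\langle x, \srole{\s}{\q}\rangle$; the synchronisation is a single $\orule{Comm}$ step. The free input $\actinp{\s}{\p}{\q}{\va}$, selection $\actsel{\s}{\p}{\q}{l}$ and branching $\actbra{\s}{\p}{\q}{l}$ are symmetric, using $\orule{Comm}$ and $\orule{Label}$, with the tester supplying the value or label to be consumed and then reporting it. In each case clause~(1) is a one-step $\Red$ computation yielding the required residual $\newsp{\bn{\ell}, b}{\PP' \Par \out{\suc}{1}{2}{\obj{\ell}, n}\inact}$, and clause~(2) holds because the report is syntactically guarded by the dual prefix, so a $\suc$-barb can arise only after that prefix has fired, i.e.\ only after $P$ offered $\ell$.

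Next I would handle scope extrusion. For bound-name output $\ell = \actbout{\s}{\p}{\q}{\Ia}$ and for delegation $\ell = \actbout{\s}{\p}{\q}{\srole{\s'}{\p'}}$ the tester is again the dual input, but the extruded name $\Ia$ (resp.\ the delegated endpoint $\srole{\s'}{\p'}$) must remain private to $\PP'$ and the report, which is exactly the role of the wrapper $\newsp{\bn{\ell}, b}{\cdot}$; for delegation the report additionally carries the received endpoint and one uses the $\erule{RecvS}$ discipline to recover the whole set of delegated mappings. Managing this restriction set $\bn{\ell}$ and checking that the extruded object is simultaneously reported and kept scoped is the first real complication of the case analysis.

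The hard part will be the genuinely multiparty session-initiation actions $\ell = \actacc{\Ia}{\A}{\s}$ and $\ell = \actreq{\Ia}{\A}{\s}$, where $P$ contributes only the roles in $\A$. Here the tester must supply accept/request prefixes on $\Ia$ for precisely the complementary roles $\set{1,\dots,n}\setminus\A$ (including the maximal-role request when $\ell$ is an accept), so that by $\ltsrule{AccPar}$/$\ltsrule{ReqPar}$ the role set becomes complete and a single $\orule{Link}$ reduction opens $\newsp{\s}{\cdot}$ with $n = \s$, after which the report fires. The two obstacles concentrated in this case are (i) building the complementary multiparty context so that it agrees with the role arithmetic of $\trule{MReq}$/$\trule{MAcc}$ and the completeness condition behind $\ltsrule{TauS}$, and (ii) proving faithfulness against the many interleavings of these several initiation synchronisations — one must show that a $\suc$-barb is reachable only when all and exactly the roles of $\A$ came from $P$, so that the observed $\Red$ corresponds to the single labelled transition $\PP \trans{\ell} \PP'$. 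Freshness of $b, \suc$ and the guarding of the report behind completion of the $\suc$-session are what rule out spurious successes throughout. Finally, since $\tau$ is not an external action it needs no tester, which closes the case analysis.
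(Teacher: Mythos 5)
Your overall architecture matches the paper's proof — explicit testers formed from the dual prefix, a report on a fresh success session $\suc$, and, for the initiation actions $\actacc{\Ia}{\A}{\s}$/$\actreq{\Ia}{\A}{\s}$, a context supplying exactly the complementary roles so that $\ltsrule{AccPar}$/$\ltsrule{ReqPar}$ and a single $\orule{Link}$ step complete the session. But there is a genuine gap in your treatment of the output-like actions. Your uniform tester for $\ell = \actout{\s}{\p}{\q}{\va}$ is $\inp{\s}{\q}{\p}{x}R_\suc$ where $R_\suc$ reports \emph{whatever} it received, and you justify clause~(2) of Definition~\ref{def:definability} by saying the report is guarded by the dual prefix. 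That inference is wrong: firing the dual prefix only certifies an action with the right \emph{subject}, not the right \emph{object}. Take $P = \IfThenElse{e}{\out{\s}{\p}{\q}{\true}\inact}{\out{\s}{\p}{\q}{\false}\inact}$ and $\ell = \actout{\s}{\p}{\q}{\true}$: composed with your tester, $P$ may internally choose the else-branch and still drive the composite to a state with a barb at $\suc$, whose residual is $\newsp{b}{P' \Par \out{\suc}{1}{2}{\false, \srole{\s}{\q}}\inact}$ — not of the required shape $\newsp{\bn{\ell},b}{P' \Par \out{\suc}{1}{2}{\obj{\ell}, n}\inact}$, and indeed $P$ need not be able to perform $\ell$ at all. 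The same failure occurs for selection $\actsel{\s}{\p}{\q}{l_k}$ (your "symmetric" tester cannot tell which label $P$ chose) and for bound output (a free output of a name already in $N$ would trigger the same barb as an extrusion). Your testers are sound only for input-like actions and initiations, where the \emph{tester} chooses the object or the roles.

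The paper closes exactly this hole with Hennessy's matching-operator technique \cite[\S~2.7]{Hennessy07}: the tester for $\actout{\s}{\p}{\q}{\va}$ is $\inp{\s}{\q}{\p}{x}\,\ifthen{x = \va}{\out{\suc}{1}{2}{x, \srole{\s}{\q}} \inact} \Else \newsp{b}{\acc{b}{1}{x} \out{\suc}{1}{2}{x, \srole{\s}{\q}} \inact}$, with $x \notin N$ replacing $x = \va$ for bound output, and for $\actsel{\s}{\p}{\q}{l_k}$ a branching in which only the $l_k$-branch reports directly while every $l_i$ ($i \neq k$) buries the report under an accept on the fresh restricted name $b$, which can never synchronise. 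Note also the typability subtlety your scheme misses: since $\suc$ is a linear session endpoint, rule $\trule{If}$ forces the report process to occur in \emph{both} branches of the conditional; the "dead" branch is rendered inert not by omitting the report but by guarding it under the unmatchable accept on $b$. So $b$'s role in the paper is to neutralise failing branches (and to sequence continuations in the initiation testers), not, as in your recipe, to spawn the reporting session — without the matching test and dead-branch construction, clause~(2) fails and the completeness argument built on this lemma does not go through.
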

\begin{proof}
	The cases of the input actions and the session initialisation (accept
	and request)  are straightforward \cite{Hennessy07}:
	\begin{enumerate}
		\item	$\funcbr{T}{N, \suc, \actacc{a}{A}{s}} =$


			$\newsp{b}{\req{a}{n}{x}\suc !\ENCan{\true};\acc{b}{1}{x}R \Par 
			\acc{a}{\p_1}{x}\acc{b}{1}{x}R_1 \Par \cdots \Par 
			\acc{a}{\p_m}{x}\acc{b}{1}{x}R_m}$ 

		\noi with $\p_1, \dots, \p_m \notin A$ and 
		$\{\p_1, \dots, \p_m \}\cup A$ complete w.r.t. 
		$n = \max(\{\p_1, \dots, \p_m \}\cup A)$.  

		\item	$\funcbr{T}{N, \suc, \actinp{s}{\p}{\q}{v}} = \out{s}{\q}{\p}{v}\out{\suc}{1}{2}{\srole{s}{\q}} \inact$

		\item	$\funcbr{T}{N, \suc, \actbra{s}{\p}{\q}{l}} = \sel{s}{\q}{\p}{l}\out{\suc}{1}{2}{\srole{s}{\q}} \inact$

		\item 	$\funcbr{T}{N, \suc, \actreq{a}{A}{s}}= \newsp{b}{\acc{a}{\p_1}{x}\suc !\ENCan{\true};\acc{b}{1}{x} R_1  \Par \dots \Par \acc{a}{\p_m}{x};\acc{b}{1}{x} R_m}$

		\noi with $\p_1, \dots, \p_m \notin A$ and 
		$\{\p_1, \dots, \p_m \}\cup A$ complete w.r.t.  
		$\max(\{\p_1, \dots, \p_m \}\cup A)$. 
	\end{enumerate}
	The requirements of Definition \ref{def:definability} is verified 
	straightforwardly. 

	For the output cases, we use the matching operator as  
	\cite[\S~2.7]{Hennessy07}. 

	\begin{enumerate}
		\setcounter{enumi}{4}
		\item	$\funcbr{T}{N, \suc, \actout{s}{\p}{\q}{v}}=$

			$\inp{s}{\q}{\p}{x}$

			$\ifthen{x = v}{\out{\suc}{1}{2}{x, \srole{s}{\q}} \inact}
			\Else
			\newsp{b}{\acc{b}{1}{x} \out{\suc}{1}{2}{x, \srole{s}{\q}} \inact}$

		\item 
			$\funcbr{T}{N, \suc, \actbout{s}{\p}{\q}{v}}=$

			$\inp{s}{\q}{\p}{x}$

			$\ifthen{x \notin N}{\out{\suc}{1}{2}{x, \srole{s}{\q}} \inact}
			\Else
			\newsp{b}{\acc{b}{1}{x} \out{\suc}{1}{2}{x, \srole{s}{\q}} \inact}$

		\item 
			$\funcbr{T}{N, \suc, \actsel{s}{\p}{\q}{l_k}}=$

			$\bra{s}{\q}{\p}{l_k: \out{\suc}{1}{2}{\srole{s}{\q}} \inact,
			\quad l_i: \newsp{b}{\acc{b}{1}{x}\out{\suc}{1}{2}{\srole{s}{\q}} \inact}}_{i\in I\setminus k}$
	\end{enumerate}
	The requirements of Definition \ref{def:definability} are straightforward to verify. 
	Note that we need to have process
	$\out{\suc}{1}{2}{\srole{s}{\q}} \inact$
	on both conditions in the if-statement since $\suc$ is a session
	channel (see $\trule{If}$ in Figure~\ref{fig:synch-typing} 
	in \S~\ref{sec:typing}). 
\end{proof}

The next lemma  follows \cite[Lemma 2.38]{Hennessy07}. 

\begin{lem}[Extrusion]\label{lem:extruction}
	Assume $\suc$ is fresh and $b\not\in \{\vec{m}\}\cup\fn{P}\cup
	\fn{Q} $ and $\{\vec{m}\}\subseteq \fn{v} \subseteq \set{\vec{n}}$. 
	\begin{eqnarray}
		\tprocess{E, \Ga}{(\nu \vec{m}, b) (P \Par
                  \out{\suc}{1}{2}{\vec{n}, \srole{s}{\q}} \inact \Par
                  \prod_i R_i)}{\Delta_1}\\
		\label{definability-cong-assumption} 
		\quad\quad\quad \cong\ (\nu \vec{m}, b) (Q \Par \out{\suc}{1}{2}{\vec{n}, \srole{s}{\q}} \inact \Par \prod_i R_i) \hastype \Delta_2
	\end{eqnarray}
	with $R_i=\acc{b}{1}{x}R_i'$ then
	\begin{eqnarray}
		\label{definability-cong-result}
		\tprocess{E', \Ga'}{P}{\Delta_1'} \cong Q \hastype \Delta_2'
	\end{eqnarray}
\end{lem}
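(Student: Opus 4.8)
The plan is to exploit that the governed reduction-closed congruence $\cong$ ($=\govcongs$) is closed under all closed contexts, is reduction-closed, and is barb-preserving (Definition~\ref{def:grc}), and to \emph{harvest} the distinguished success signal on the fresh session channel $\suc$ in order to peel the wrapping $\newsp{\vec m,b}{\hole \Par \out{\suc}{1}{2}{\vec n,\srole{s}{\q}}\inact \Par \prod_i R_i}$ off both congruent processes, recovering $P$ and $Q$ on the two sides.

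First I would compose each side with the harvesting tester $T = \inp{\suc}{2}{1}{\vec x,y}\inact$, the dual input that consumes the success output. Because $\suc$ is fresh, $T$ can synchronise only with $\out{\suc}{1}{2}{\vec n,\srole{s}{\q}}\inact$ and with nothing inside $P$, $Q$ or $\prod_i R_i$; hence by congruence the hypothesis~(\ref{definability-cong-assumption}) gives $T \Par \newsp{\vec m,b}{P \Par \out{\suc}{1}{2}{\vec n,\srole{s}{\q}}\inact \Par \prod_i R_i} \cong T \Par \newsp{\vec m,b}{Q \Par \out{\suc}{1}{2}{\vec n,\srole{s}{\q}}\inact \Par \prod_i R_i}$. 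Each side carries a unique weak barb on $\suc$, and consuming it is the unique first interaction available on that channel, extruding the scope of $\vec m\subseteq\vec n$ over $T$'s (empty) continuation. Using reduction-closedness (Definition~\ref{def:grc}(2)) together with this determinacy — the matching move on the other side must also consume the $\suc$-barb and reach a $\cong$-related reduct — I would obtain $\newsp{\vec m,b}{P \Par \prod_i R_i} \cong \newsp{\vec m,b}{Q \Par \prod_i R_i}$, with all intermediate terms remaining well-formed governed configurations by Proposition~\ref{pro:invariants}.

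Next I would discard the residual components. Since $b$ is fresh, restricted, and $b\notin\fn{P}\cup\fn{Q}$, no request on $b$ can ever be produced, so each $R_i=\acc{b}{1}{x}R_i'$ is permanently inert: $\news{b}\prod_i R_i$ exhibits neither a barb nor a reduction, and is therefore erasable up to $\cong$ (it has the same observables as $\inact$, and the structural laws $\newsp{b}{\inact}\scong\inact$ and scope extension let me remove it). This leaves the two residues with only $P$ and $Q$ live. The remaining restriction $\news{\vec m}$ is handled by extrusion: by the side condition $\{\vec m\}\subseteq\fn{v}\subseteq\{\vec n\}$ the names $\vec m$ are exactly those carried out by the harvested signal, so upon reception they are released to the common context and become shared, identically on both sides; since $\cong$ is obtained by quantifying over all contexts that now hold these known names, comparing the harvested residues is the same as comparing $P$ and $Q$ with $\vec m$ free, yielding the required $P \cong Q$.

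I expect the main obstacle to be the bookkeeping of the extruded scope $\vec m$: I must verify that the harvest releases \emph{exactly} $\vec m$ and \emph{identically} on the two sides, so that the manipulation neither creates nor destroys an observational distinction, and that every context application and reduction preserves the governance judgement, so that $\cong$ — a relation over configurations $(E,\Ga,\De)$ — applies at each step. The determinacy of the $\suc$-synchronisation, forced by freshness of $\suc$, is the technical crux that lets reduction-closedness deliver the precise reducts needed; establishing it cleanly, together with the analogous inertness of $\prod_i R_i$ under $\news{b}$, is where the care lies.
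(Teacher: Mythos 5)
There is a genuine gap, and it sits exactly where you flagged your worry: the bookkeeping of the extruded scope $\vec m$. Your harvesting tester $T = \inp{\suc}{2}{1}{\vec x,y}\inact$ has an \emph{empty} continuation, so when it consumes $\out{\suc}{1}{2}{\vec n,\srole{s}{\q}}\inact$ the received names are thrown away: no scope extrusion over the context actually happens, and the names $\vec m$ remain restricted around the residue. Your step 2 therefore yields at best $\newsp{\vec m,b}{P \Par \prod_i R_i} \cong \newsp{\vec m,b}{Q \Par \prod_i R_i}$, and after erasing the inert $R_i$ you are left with $\newsp{\vec m}{P} \cong \newsp{\vec m}{Q}$ --- strictly weaker than the goal $P \cong Q$, since the restriction hides every barb and interaction of $P$ and $Q$ at the names in $\vec m$. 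The final paragraph, which asserts that upon reception the names ``are released to the common context and become shared'' so that ``comparing the harvested residues is the same as comparing $P$ and $Q$ with $\vec m$ free,'' contradicts the tester you chose and is in any case precisely the statement to be proved, not a consequence of quantifying over contexts: $\cong$ relates closed configurations, and changing $\vec m$ from bound to free is exactly the non-trivial content of the lemma.

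The paper's proof avoids this by never discarding the wrapper. It defines the candidate relation $\relfont{S}$ consisting of all \emph{unwrapped} pairs $(\tprocess{E',\Ga'}{P}{\Delta_1'},\ \tprocess{E',\Ga'}{Q}{\Delta_2'})$ whose wrapped forms are related by $\govcongs$, and proves co-inductively that $\relfont{S}$ is itself a governed reduction-closed congruence. Every observation on the unwrapped processes is simulated by a tester that inputs on $\suc$ --- thereby gaining access to $\vec n \supseteq \vec m$ \emph{inside} the restriction --- performs the observation, and crucially \emph{re-emits} the names on a fresh success channel: for barbs, a tester of shape $\inp{\suc}{2}{1}{\vec y,x}\,T\lrangle{N,\suc',\ell}$; for contextuality, $O=\inp{\suc}{2}{1}{\vec y, x}(R \Par \out{\suc'}{1}{2}{\vec z, x} \inact)$, which restores the invariant shape with $\suc'$ in place of $\suc$ and gives $P \Par R\ \relfont{S}\ Q \Par R$. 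This ``consume-and-rewrap'' discipline is what keeps the restricted names permanently on tap without ever opening their scope, and it is indispensable: a one-shot harvest such as yours structurally cannot reach the conclusion. (Separately, your appeal to reduction-closedness plus determinacy can be patched --- after the left side consumes the success output it loses its weak barb on $\suc$, forcing the right-hand match to do likewise --- and erasing $\newsp{b}{\prod_i R_i}$ is believable but itself needs a small congruence argument; neither repair rescues the main step.)
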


\begin{proof}
\noi Let relation
\[
	\begin{array}{rcl}
		\relfont{S} & = & \set{ (\tprocess{E',\Ga'}{P}{\Delta_1'}, 
		\tprocess{E',\Ga'}{Q}{\Delta_2'}) \setbar \\ 
		& & \tprocess{E, \Ga}{(\nu \vec{m}, b) (P \Par \out{\suc}{1}{2}{\vec{n}, \srole{s}{\q}} \inact \Par \prod_i R_i)}{\Delta_1}
		\govcongs \\
		&&(\nu \vec{m}, b) (Q \Par \out{\suc}{1}{2}{\vec{n}, \srole{s}{\q}} \inact \Par \prod_i R_i) \hastype \Delta_2 }
	\end{array}
\]
where we assume 
$\suc$ is fresh and $b\not\in \{\vec{m}\}\cup\fn{P}\cup
\fn{Q}$.
We will show that $\relfont{S}$ is governed reduction-closed.

\paragraph{\bf Typability}
	We show $\relfont{S}$ is a typed relation. From the definition of 
	$\relfont{S}$, we have 
	$\De_1' \ordercup \De'_2$. 
	By using 
	typing rules $\trule{NRes}$, $\trule{SRes}$, $\trule{Conc}$,
	we obtain 
	$(\tprocess{\Ga'}{P}{\Delta_1'},  
	\tprocess{\Ga'}{Q}{\Delta_2'})$ is in the typed 
	relation. Then we can set $E' =E\cup 
	E_0\cup \{\suc:1\to 2:U.\inactgt\}$ 
	where $E_0 = \{s:G_s\}$ if $s=\vec{m}$; otherwise $E_0=\es$ 
	to make 
	$\relfont{S}$ a governed relation. 

\paragraph{\bf Reduction-closedness}
	Immediate by the assumption that $\suc$ is fresh and $\prod_i R_i\not\red$. 

\paragraph{\bf Barb preserving}
\noi	Suppose $\tprocess{E', \Ga'}{P}{\Delta_1'}\downarrow_n$ 
	with $n\not\in \vec{m}$. 
	Then 
	\[
		\tprocess{E, \Ga}{(\nu \vec{m}, b) (P \Par \out{\suc}{1}{2}{\vec{n}, \srole{s}{\q}} \inact \Par \prod_i R_i)}{\Delta_1} \downarrow_n
	\]
\noi	by the freshness of $\suc$. 
	This implies 
	$\tprocess{E, \Ga}{(\nu \vec{m}, b) (Q \Par \out{\suc}{1}{2}{\vec{n}, \srole{s}{\q}} \inact \Par \prod_i R_i)}{\Delta_2}\Downarrow_n$. 
	Since $\out{\suc}{1}{2}{\vec{n}, \srole{s}{\q}} \inact \Par \prod_i R_i$ does not reduce, 
	we have $\tprocess{E', \Ga'}{Q}{\Delta_2'}\Downarrow_n$, as required.  \\

	Suppose $\tprocess{E', \Ga'}{P}{\Delta_1'}\downarrow_{\srole{s}{\p}}$.
	Then we chose $T\lrangle{N, \suc', \ell}$ such that
\[
	\begin{array}{l}
	\tprocess{E, \Ga}{(\nu \vec{m}, b) (P \Par \out{\suc}{1}{2}{\vec{n}, \srole{s}{\q}} \inact \Par \prod_i R_i \Par \inp{\suc}{2}{1}{\vec{y}, x} T\lrangle{N, \suc', \ell} )}{\Delta_1''}
	\Red\\ P' \Par \prod_i R_i \Par \out{\suc'}{1}{2}{\vec{n},\srole{s}{\q}} \inact \hastype \Delta_1'''
	\end{array}
\]
	which implies 
\[
	\begin{array}{l}
	\tprocess{E, \Ga}{(\nu \vec{m}, b) (Q \Par \out{\suc}{1}{2}{\vec{n}, \srole{s}{\q}} \inact \Par \prod_i R_i \Par \inp{\suc}{2}{1}{\vec{y}, x} T\lrangle{N, \suc', \ell} )}{\Delta_2''}
	\Red\\ Q' \Par \prod_i R_i \Par \out{\suc'}{1}{2}{\vec{n},\srole{s}{\q}} \inact \hastype \Delta_1'''
	\end{array}
\]
	which implies
	$\tprocess{E, \Ga}{Q}{\Delta_2'} \Downarrow_{\srole{s}{\p}}$.\\

%

\paragraph{\bf Contextual property}
	The only interesting case is if 
	$\tprocess{E',\Ga'}{P}{\Delta_1'} \ \relfont{S}  \ 
	\tprocess{E',\Ga'}{Q}{\Delta_2'}$ then 
	$\tprocess{E'',\Ga'}{P \Par R}{\Delta_1''} \ \relfont{S}  \ 
	\tprocess{E'',\Ga'}{Q \Par R}{\Delta_2''}$ for all $R$. 

	We
	compose with $O=\inp{\suc}{2}{1}{\vec{y}, x} (R \Par \out{\suc'}{1}{2}{\vec{z}, x} \inact)$.
\[
	\begin{array}{lll}
		(\nu \vec{m}, b) (P \Par \out{\suc}{1}{2}{v, \srole{s}{\q}} \inact \Par \prod_i R_i \Par O)
		& \govcongs & 
		(\nu \vec{m}, b) (Q \Par \out{\suc}{1}{2}{v, \srole{s}{\q}} \inact \Par \prod_i R_i \Par O)\\
		\textrm{ implies }\\
		(\nu \vec{m}, b) (P \Par R \Par \out{\suc'}{1}{2}{v, \srole{s}{\q}} \inact \Par \prod_i R_i)
		& \govcongs & 
		(\nu \vec{m}, b) (Q \Par R \Par \out{\suc'}{1}{2}{v, \srole{s}{\q}} \inact \Par \prod_i R_i)\\
		\textrm{ implies }\\
		P \Par R \relfont{S} Q \Par R
	\end{array}\vspace{-\baselineskip}
\]
\end{proof}

We can now we prove the completness direction

We prove:
\begin{quote}
if 
$\tprocess{E,\Ga}{P}{\Delta_1}\govcongs
\tprocess{E,\Ga}{Q}{\Delta_2}$ and 
$\tprocess{E,\Ga}{P}{\Delta_1}\trans{\ell}
\tprocess{E',\Ga'}{P'}{\Delta_1'}$, then\\ 
$\tprocess{E,\Ga}{Q}{\Delta_2}\Trans{\hat{\ell}}
\tprocess{E',\Ga'}{Q'}{\Delta_2'}$ such that 
$\tprocess{E',\Ga'}{P'}{\Delta_1'}\govcongs
\tprocess{E',\Ga'}{Q'}{\Delta_2'}$
\end{quote} 

The case $\ell=\tau$ is trivial by the definition of $\govcongs$. 

For the case of $\ell\not=\tau$, we use Lemma~\ref{lem:definibility} and  
Lemma~\ref{lem:extruction}. 
Since the governed witness and session environments do 
not change the proof, we omit environments.
Suppose $P \govcongs Q$ and $P \trans{\ell} P'$. We must find a
matching weak transition from $Q$. Choose $N$ to contain all the free
names in both $P$ and $Q$ and choose $\suc$ and $b$ 
to be fresh for both $P$ and $Q$. We denote the tester by $T$ for
convenience. 

Because $\govcongs$ is contextual, we know $T \ | \ P \govcongs T \ | \
Q$. We also know 
\[
T \ | \ P \red^\ast 
\newsp{\bn{\ell}, b}{\PP' \Par \out{\suc}{1}{2}{\obj{\ell}, n} \inact}
\]
and therefore $T \ | \ Q \red^\ast Q''$ for some $Q''$ such that 
\[
\newsp{\bn{\ell}, b}{\PP' \Par \out{\suc}{1}{2}{\obj{\ell}, n} \inact} 
\govcongs
Q''
\]
and $Q''\Downarrow \suc$. Thus by Lemma \ref{lem:definibility}, 
we can set 
$Q'' \equiv \newsp{\bn{\ell}, b}{\Q'\Par \out{\suc}{1}{2}{\obj{\ell}, n} \inact}$
where $Q \Trans{\ell} Q'$. Since $\equiv$ is included in 
$\govcongs$, we have: 
\[
\newsp{\bn{\ell}, b}{\PP' \Par \out{\suc}{1}{2}{\obj{\ell}, n} \inact} 
\govcongs
\newsp{\bn{\ell}, b}{\Q' \Par \out{\suc}{1}{2}{\obj{\ell}, n} \inact} 
\]
By Lemma \ref{lem:extruction}, we have $\PP'\govcongs \Q'$, as
required. \vspace{-\baselineskip}

\end{proof}

\subsection{Proof for Lemma \ref{lem:full-abstraction}}
\label{app:full-abstraction}
\begin{proof}
	We prove direction if
	$\forall E, \tprocess{E, \Ga}{P_1}{\De_1} \govwbs \noGtprocess{P_2}{\De_2}$
	then
	$\Gtprocess{P_1}{\De_1} \swb \Gtprocess{P_2}{\De_2}$.

	\noi	If
		$\Gtprocess{P_1}{\De_1} \trans{\ell} \noGtprocess{P_1'}{\De_1'}$ then
		$P_1 \trans{\ell} P_1'$ and
		$(\Ga, \De_1) \trans{\ell} (\Ga', \De_1')$.

		From part 3 of Lemma \ref{lemma:configuration_transition}
		we choose $E$ such that
		$(E, \Ga, \De_1) \trans{\ell} (E', \Ga', \De_1')$.
		Since $\forall E, \tprocess{E, \Ga}{P_1}{\De_1} \govwbs \noGtprocess{P_2}{\De_2}$
		it can now be implied that,
		$\tprocess{E, \Ga}{P_1}{\De_1} \trans{\ell} \tprocess{E', \Ga}{P_1'}{\De_1'}$ implies
		$\tprocess{E, \Ga}{P_2}{\De_2} \Trans{\ell} \tprocess{E', \Ga}{P_2'}{\De_2'}$ which implies
		$P_2 \Trans{\ell} P_2'$ and
		$(E, \Ga, \De_2) \Trans{\ell} (E', \Ga', \De_2')$.

	\noi	From part 1 of Lemma  \ref{lemma:configuration_transition} we obtain
		$(\Ga, \De_2) \Trans{\ell} (\Ga', \De_2')$ implies
		$\Gtprocess{P_2}{\De_2} \Trans{\ell} \noGtprocess{P_2'}{\De_2'}$ as required.

$ $\\

		We prove direction if
		$\Gtprocess{P_1}{\De_1} \swb \Gtprocess{P_2}{\De_2}$ then
		$\forall E, \tprocess{E, \Ga}{P_1}{\De_1} \govwbs \noGtprocess{P_2}{\De_2}$.

	\noi	Let $\tprocess{E, \Ga}{P_1}{\De_1} \trans{\ell} \noGtprocess{P_1'}{\De_1'}$ then
		\begin{eqnarray}
			P_1 &\trans{\ell}& P_1' \\
			(E, \Ga, \De_1) &\trans{\ell}& (E', \Ga', \De_1') \label{proof:full_abstraction_1}
		\end{eqnarray}

	\noi 	If
		$\Gtprocess{P_1}{\De_1} \trans{\ell} \noGtprocess{P_1'}{\De_1'}$ then 
		$P_1 \trans{\ell} P_1',
		(\Ga, \De_1) \trans{\ell} (\Ga', \De_1')$ and 
		$\Gtprocess{P_2}{\De_2} \trans{\ell} \noGtprocess{P_2'}{\De_2'}$.

	\noi	From the last implication we obtain
		\begin{eqnarray}
			P_2 &\Trans{\ell}& P_2' \label{proof:full_abstraction_2} \\
			(\Ga, \De_2) &\Trans{\ell}& (\Ga', \De_2')\\ \label{proof:full_abstraction_3}
			\De_1 &\bistyp& \De_2 \label{proof:full_abstraction_4}
		\end{eqnarray}

	\noi	We apply part 2 of Lemma \ref{lemma:configuration_transition}
		to (\ref{proof:full_abstraction_1}) and
		(\ref{proof:full_abstraction_4}) to obtain
		 $(E, \Ga, \De_2) \Trans{\ell} (E', \Ga', \De_2')$.
		From the last result and (\ref{proof:full_abstraction_2}), we obtain
		$\tprocess{E, \Ga}{P_2}{\De_2} \Trans{\ell} \tprocess{E', \Ga}{P_2'}{\De_2'}$.
	\endproof
\end{proof}

\subsection{Proof for Theorem \ref{thm:coincidence}}
\label{app:thm_coincidence}

\begin{proof}
$\tprocess{\Ga}{P}{\De} \trans{\ell} \tprocess{\Ga'}{P'}{\De'}$ 
implies $(\Gamma,\Delta) \trans{\ell} (\Gamma',\Delta')$
by definition.  
This implies there exists 
$E$ such that $(E,\Gamma,\Delta) \trans{\ell} (E',\Gamma',\Delta')$
by  part 3 of Lemma~\ref{lemma:configuration_transition}. 
Then by definition, 
if $\tprocess{\Ga}{P}{\De} \trans{\ell} \tprocess{\Ga'}{P'}{\De'}$
then $\exists E$ such that $\tprocess{E, \Ga}{P}{\De} 
\trans{\ell} \tprocess{E', \Ga'}{P'}{\De'}$. 
Similarly, we have if 
$\tprocess{\Ga}{P}{\De} \Trans{\hat{\ell}} \tprocess{\Ga'}{P'}{\De'}$
then $\exists E$ such that $\tprocess{E, \Ga}{P}{\De} 
\Trans{\hat{\ell}} \tprocess{E', \Ga'}{P'}{\De'}$. 



Suppose $P$ is simple. Then by definition, we can set 
$P\equiv (\nu \ \vec{a}\vec{s})(P_1 \ | \  P_2 \ | \ \cdots \ | \ P_n)$
where $P_i$ contains either zero or a single session name $s$, which is
not used in process $P_j$, $i \not = j$. 

Suppose $\tprocess{\Ga}{P}{\De}$. Then it is derived from 
$\tprocess{\Ga\cdot\Ga_0}{P_i}{\De_i}$ where $\De_i$ contains 
a zero or single session name and $\De\cdot\De_0 = \De_1\cdots\De_n$ 
where 
$\Ga_0$ and $\De_0$ correspond to the environments 
of the restrictions 
$\vec{a}$ and $\vec{s}$, respectively. 
If $\tprocess{E_i,\Ga\cdot\Ga_0}{P_i}{\De_i}$ and 
$(\Ga\cdot\Ga_0, \De_i)\trans{\ell}(\Ga'\cdot\Ga_0', \De_i')$, then since $\De_i$ 
contains at most only a single session, the condition 
$E\trans{\lambda}E'$ in the premise 
in [Out, In, Sel, Bra, Tau] 
in Figure~\ref{fig:ELTS} is always true. Hence  
$(E_i,\Ga\cdot\Ga_0, \De_i)\trans{\ell}(E_i',\Ga'\cdot\Ga_0', \De_i')$. 
From here,  we obtain for all $E_i$ such that 
$\tprocess{E_i,\Ga\cdot\Ga_0}{P_i}{\De_i}$, 
if 
$\tprocess{\Ga\cdot\Ga_0}{P_i}{\De_i}\trans{\ell}
\tprocess{\Ga'\cdot\Ga_0'}{P_i'}{\De_i'}$, then 
we have 
$\tprocess{E_i,\Ga\cdot\Ga_0}{P_i}{\De_i}\trans{\ell}\tprocess{E_i',\Ga'\cdot\Ga_0'}{P_i'}{\De_i'}$. Similarly, for the case of 
$\tprocess{\Ga\cdot\Ga_0}{P_i}{\De_i}\Trans{\ell}
\tprocess{\Ga'\cdot\Ga_0'}{P_i'}{\De_i'}$.

Now by applying the parallel composition, 
we can reason for all $E$ such that $E \red^\ast \sqcup_i E_i$, 
if $\tprocess{E,\Ga}{P}{\De}$ and  
$\tprocess{\Ga}{P}{\De}\trans{\ell}\tprocess{\Ga'}{P'}{\De'}$, 
there exits 
$\tprocess{\Ga\cdot\Ga_0}{P_i}{\De_i}\trans{\ell}\tprocess{\Ga'\cdot\Ga_0'}{P_i'}{\De_i'}$, 
which implies 
$\tprocess{E,\Ga}{P}{\De}\trans{\ell}\tprocess{E',\Ga'}{P'}{\De'}$, 
with $E \red^\ast \sqcup_i E_i \trans{\lambda} \sqcup_i E_i' = E'$.  

By this, if $P$ is simple and 
$\tprocess{\Ga}{P}{\De}\trans{\ell}\tprocess{\Ga'}{P'}{\De'}$, 
for all $E$ such that $\tprocess{E,\Ga}{P}{\De}$, 
$\tprocess{E,\Ga}{P}{\De}\trans{\ell}\tprocess{E',\Ga'}{P'}{\De'}$. 
%
%
Hence if 
there exits $E$ such that 
$\tprocess{E,\Ga}{P}{\De}\trans{\ell}\tprocess{E',\Ga'}{P'}{\De'}$, 
then for all $E_0$, 
$\tprocess{E_0,\Ga}{P}{\De}\trans{\ell}\tprocess{E_0',\Ga'}{P'}{\De'}$. 
Similarly, for  $\tprocess{E,\Ga}{P}{\De}\Trans{\ell}\tprocess{E',\Ga'}{P'}{\De'}$.  

Now suppose if $P_1$ and $P_2$ are simple and 
	$\exists E$ such that $\tprocess{E, \Ga}{P_1}{\De_1} \wb_g^s
        \noGtprocess{P_2}{\De_2}$. 
	From the above result 
and part 2 of Lemma~\ref{lemma:configuration_transition}, 
	if $P_1$ and $P_2$ are simple and 
	$\exists E$ such that $\tprocess{E, \Ga}{P_1}{\De_1} \wb_g^s \noGtprocess{P_2}{\De_2}$ then $\forall E, \tprocess{E, \Ga}{P_1}{\De_1} \wb_g^s \noGtprocess{P_2}{\De_2}$. By applying
Lemma \ref{lem:full-abstraction} we are done.
\end{proof}

\fi
\end{document}